\numberwithin{equation}{section}
\newtheorem{theorem}{Theorem}[section]
\newtheorem{corollary}[theorem]{Corollary}
\newtheorem{lemma}[theorem]{Lemma}
\theoremstyle{definition}
\newtheorem{definition}[theorem]{Definition}
\newtheorem{remark}[theorem]{Remark}
\newcommand{\ind}{1\hspace{-2.1mm}{1}} %Indicator Function
\newcommand{\I}{\mathtt{i}}
\newcommand{\RR}{\mathbb{R}}
\newcommand{\PP}{\mathbb{P}}
\newcommand{\D}{\mathrm{d}}
\newcommand{\E}{\mathrm{e}}
\newcommand{\sgn}{\mathrm{sgn}}
\newcommand{\Nn}{\mathcal{N}}
\newcommand{\Ff}{\mathcal{F}}
\begin{document}

\title{The Small-Maturity Heston Forward Smile}
\author{Antoine Jacquier and Patrick Roome}
\address{Department of Mathematics, Imperial College London}
\email{a.jacquier@imperial.ac.uk, p.roome11@imperial.ac.uk}
\date{\today}
\thanks{
The authors would like to thank all the participants of the ETH-Imperial workshop on mathematical finance, March 2013 as well as Peter Tankov, Claude Martini, Jim Gatheral and Leif Andersen for useful discussions.
They are also indebted to the anonymous referees for helpful comments.
}
\keywords {Stochastic volatility, Heston model, forward implied volatility, asymptotic expansion}
\subjclass[2010]{60F10, 91G99, 91G60}

\maketitle
\begin{abstract}
In this paper we investigate the asymptotics of forward-start options and the forward implied volatility smile in the Heston model as the maturity approaches zero.
We prove that the forward smile for out-of-the-money options explodes and compute a closed-form high-order expansion detailing the rate of the explosion.
Furthermore the result shows that the square-root behaviour of the variance process induces a singularity such that for certain parameter configurations one cannot obtain high-order out-of-the-money forward smile asymptotics.
In the at-the-money case a separate model-independent analysis shows that the small-maturity limit is well defined for any It\^o diffusion.
The proofs rely on the theory of sharp large deviations (and refinements) and
incidentally we provide an example of degenerate large deviations behaviour.
\end{abstract}

%%%%%%%%%%%%%%%%%%%%%%%%%%%%%%%%%%%%%%%%%%%%%%%%
%%%%%%%%%%%%%%%%%%%%%%%%%%%%%%%%%%%%%%%%%%%%%%%%
\section{Introduction}

Consider an asset price process $\left(\E^{X_t}\right)_{t\geq0}$ with $X_0=0$, paying no dividend, 
defined on a complete filtered probability space $(\Omega,\mathcal{F},(\mathcal{F}_t)_{t\geq0},\mathbb{P})$ 
with a given risk-neutral measure $\mathbb{P}$, and assume that interest rates are zero. 
In the Black-Scholes-Merton (BSM) model, the dynamics of the logarithm of the asset price are given by
$\D X_t=-\frac{1}{2}\sigma^2  \D t +\sigma \D W_t$,
where $\sigma>0$ represents the instantaneous volatility and $W$ is a standard Brownian motion.
The no-arbitrage price of the call option at time zero is then given by the famous BSM formula~\cite{BS73,M73}:
$C_{\textrm{BS}}(\tau,k,\sigma) := \mathbb{E}\left(\E^{X_\tau}-\E^k\right)_+
=\Nn\left(d_+\right)-\E^k\Nn\left(d_-\right)$,
with $d_\pm:=-\frac{k}{\sigma\sqrt{\tau}}\pm\frac{1}{2}\sigma\sqrt{\tau}$,
where $\Nn$ is the standard normal distribution function. 
For a given market price $C^{\textrm{obs}}(\tau,k)$ of the option at strike~$\E^k$ and maturity~$\tau$,
 the spot implied volatility $\sigma_{\tau}(k)$ is the unique solution to 
the equation $C^{\textrm{obs}}(\tau,k)=C_{\textrm{BS}}(\tau,k,\sigma_{\tau}(k))$.

For any~$t,\tau>0$ and $k\in\mathbb{R}$, we define as in~\cite{B04,VL}
a  forward-start option with forward-start date~$t$, maturity~$\tau$ and strike~$\E^k$ 
as a European option with payoff 
$\left(\E^{X_{\tau}^{(t)}}-\E^k\right)^+$ where we define $X_{\tau}^{(t)}:=X_{t+\tau}-X_t$ pathwise. 
By the stationary increment property, its value is simply $C_{\textrm{BS}}(\tau,k,\sigma)$ in the BSM model. 
For a given market price $\overline{C}^{\textrm{obs}}(t,\tau,k)$ of the option at strike~$\E^k$, 
forward-start date~$t$ and maturity~$\tau$,
the forward implied volatility smile~$\sigma_{t,\tau}(k)$ is then defined (see also~\cite{B04,Hong})
 as the unique solution to 
$\overline{C}^{\textrm{obs}}(t,\tau,k)=C_{\textrm{BS}}(\tau,k,\sigma_{t,\tau}(k))$. 
The forward smile is a generalisation of the spot implied volatility smile, and the two are equal when $t=0$. 

Asymptotics of the spot implied volatility surface have received a large amount of attention over the past decade. 
These results have helped shape calibration methodologies based on arbitrage-free approximation of the spot smile in a large variety of models.  
Small-maturity asymptotics have been studied by Berestycki-Busca-Florent ~\cite{BBF} using PDE methods for continuous time diffusions and by
Henry-Labord\`ere ~\cite{PHL} using heat kernel expansions.
Forde et al.~\cite{FJL11} and Jacquier et al.~\cite{JMKR} derived small and large-maturity asymptotics in the Heston model and affine stochastic volatility models with jumps using large deviations and saddlepoint methods.
Wing asymptotics (as $|k|\nearrow\infty$) have been studied by Lee ~\cite{Lee} and further extended by Benaim and Friz~\cite{BF1, BF2} and in~\cite{DFJV, FGGS, Guli}.
Fouque et al.~\cite{Fouque} have used  perturbation techniques in order to study slow and fast mean-reverting stochastic volatility models.
Small-noise expansions have been studied by Osajima~\cite{O07} and Takahashi~\cite{KT04} using Watanabe expansions and in
Deuschel et al.~\cite{DFJV} using Laplace method on Wiener space.
See~\cite{GuliBook,Zeliade} for a general overview of implied volatility asymptotics in stochastic volatility models.
In exponential L\'evy models implied volatilities of out-of-the-money options explode as the maturity tends to zero, while the implied volatility for at-the-money options converges to the volatility of the diffusion component as the maturity tends to zero.
Small-maturity asymptotics for models with jumps (including L\'evy processes) 
have been investigated in~\cite{Alos, AndLipton, Tank, Nutz, MijTankov, FL}.

However, these asymptotics do not provide any information on the forward smile or forward-start-based payoffs, such as cliquets and forward-start options and the literature on this topic is sparse. 
Glasserman and Wu~\cite{GW10} introduced different notions of forward volatilities for forecasting purposes,
%to assess their predictive properties, %values in determining future option prices and future implied volatility, 
Keller-Ressel~\cite{KR} studies a very specific type of asymptotic (when the forward-start date~$t$ becomes large), and empirical results have been carried out in~\cite{B04,G06,B02}. 
In~\cite{EM09} the authors empirically studied the forward smile in Sato models and ran comparisons with a suite of models including Heston and local volatility models for forward smile sensitive products such as cliquets.
Recently, small and large-maturity forward smile asymptotics were derived in~\cite{JR12} for a general class of models including the Heston model.
However, the results in~\cite{JR12} only apply to the so-called diagonal small-maturity regime, 
i.e. the behaviour (as $\varepsilon$ tends to zero) of the process~$(X_{\varepsilon \tau}^{(\varepsilon t)})_{\varepsilon\geq0}$.
The conjecture, stated there, is that for fixed $t>0$ the Heston forward smile 
(corresponding to~$X_{\tau}^{(t)}$) explodes to infinity (except at-the-money) as $\tau$ tends to zero.

In this paper we confirm this conjecture and give a high-order expansion for the forward smile.
The main result (Theorem~\ref{theorem:fwdsmilesmalltime}) is that the small-maturity Heston forward smile explodes according to the following asymptotic: 
$\sigma_{t,\tau}^2(k)=v_0(k,t)\tau^{-1/2}+v_{1}(k,t)\tau^{-1/4}+o\left(\tau^{-1/4}\right)$
for $k\in\mathbb{R}^*$ and $t>0$ as $\tau$ tends to zero.
Here $v_0(\cdot,t)$ and $v_1(\cdot,t)$ are even continuous functions (over $\mathbb{R}$) with $v_0(0,t)=v_1(0,t)=0$ and independent of the Heston correlation.
In the at-the-money case ($k=0$) a separate model-independent analysis (Lemma~\ref{lem:modelindatm} and Theorem~\ref{theorem:fwdatmsmalltime}) shows that the small-maturity limit is well defined and  
$\lim_{ \tau \searrow 0}\sigma_{t,\tau}(0)=\mathbb{E}(\sqrt{V_t})$ holds for any well-behaved diffusion where~$V_t$ is the instantaneous variance at time~$t$.
This exploding nature is consistent with empirical observations in~\cite{B02} and the diagonal small-maturity asymptotic from~\cite{JR12}.

The paper is structured as follows. 
In Section~\ref{sec:fwdtimescales} we introduce the notion of a forward time-scale and characterise it in the Heston model. 
In Section~\ref{sec:smfwdstartasymp} we state the main result on small-maturity asymptotics of forward-start options in the Heston model. 
Section~\ref{sec:smfwdsmileasymp} tackles the forward implied volatility asymptotics: Section~\ref{sec:otmasymp} translates the results of Section~\ref{sec:smfwdstartasymp} into out-of-the-money forward smile asymptotics, and
Section~\ref{sec:ATMfwdsmile} presents a model-independent result for the at-the-money forward implied volatility.
Section~\ref{sec:smhnumerics} provides numerical evidence supporting the asymptotics derived in the paper and the main proofs are gathered in Section~\ref{sec:proofssmalltau}.

\textbf{Notations: } $\mathbb{E}$ shall always denote expectation under a risk-neutral measure~$\PP$ given a priori.
We shall refer to the standard (as opposed to the forward) implied volatility as the spot smile and denote it $\sigma_\tau$.
The forward implied volatility will be denoted $\sigma_{t,\tau}$ as above. In this paper we will always assume that forward-start date is greater than zero ($t>0$) unless otherwise stated.

\section{Forward time-scales}\label{sec:fwdtimescales}

In this section we introduce the notion of a forward time-scale and characterise it in the Heston model.
In the BSM model the time-scale is given by $h(t)\equiv t$, 
which is related to the quadratic variation of the driving Brownian motion.
For diffusions (such as Heston) the (spot) time-scale is the same, which implies that the spot smile
has a finite (non-zero) small-maturity limit.
In the forward case, this however no longer remains true. 
Stochastic volatility models (eg. Heston) exhibit different time-scales to the BSM model leading to different asymptotic regimes for the forward smile relative to the spot smile.
As we will show below (Lemma~\ref{Lemma:HestFwdTimeScale}), 
all re-scalings of the Heston model lead to limiting logarithmic moment generating functions (lmgf's) that are all zero on their domains of definition. 
But the forward time-scale is the only choice that leads to the limiting lmgf being zero on a bounded domain. 
This is one of the key properties that allows us to derive sharp large deviation results even though at first sight this zero limit appears trivial and non-consequential. 
We define the forward lmgf by 
\begin{equation}\label{eq:MGFFwd}
\Lambda_{\tau}^{(t)}(u,a):=a\log\mathbb{E}\left(\E^{u X_{\tau}^{(t)}/a}\right),
\quad\text{for all }u\in\mathcal{D}_{t,\tau},
\end{equation}
where 
$\mathcal{D}_{t,\tau}:=\{u\in\mathbb{R}:|\Lambda_{\tau}^{(t)}(u,a)|<\infty\}$.
With this definition the domain $\mathcal{D}_{t,\tau}$ will depend on $a$, but it will be clear from the context which choice of $a$ we are using.

\begin{definition}\label{def:fwdtimescale}
We define a (small-maturity) \textit{forward time-scale} 
as a continuous function $h:\mathbb{R}_{+}\to\mathbb{R}_{+}$ such that $\lim_{\tau\searrow 0}h(\tau)=0$ 
and $\Lambda(u):=\lim_{\tau\searrow 0}\Lambda_{\tau}^{(t)}(u,{h(\tau)})$ produces a non-trivial pointwise limit. 
We shall say that a (pointwise) limit is trivial if it is null on~$\mathbb{R}$ 
or null at the origin and infinite on~$\mathbb{R}^*$.
\end{definition}

\begin{remark} \
\begin{enumerate}[(i)]
\item
In the BSM model the forward time-scale is $h(\tau)\equiv\tau$. 
%Further if a model satisfies $h(\tau)\neq\tau$ then the forward smile either blows up to infinity or drops down to zero (except possibly ATM) as $\tau$ tends to zero.
\item
A forward time-scale may not exist for a model. 
For example, consider exponential L\'evy models with bounded domain for the L\'evy exponent.
The only non-trivial limit occurs when $h\equiv 1$, which does not satisfy Definition~\ref{def:fwdtimescale} and so a forward-time scale does not exist. 
\item
If $(X_{\tau}^{(t)})_{\tau\geq0}$ satisfies a large deviations principle~\cite[Section 1.2]{DZ93} with speed $h$ and assuming further some tail condition (see~\cite[Theorem 4.3.1]{DZ93}), then $h$ is the forward time-scale for the model by Varadhan's lemma. 
\item
Diffusion models have the same spot time-scale ($t=0$) as the BSM model, namely $h(\tau)\equiv \tau$ (see for example~\cite{BBF}).
This is not necessarily true in the forward case as we will shortly see.
\end{enumerate}
\end{remark}

In the Heston model, the (log) stock price process is the unique strong solution to the following SDEs:
\begin{equation}\label{eq:Heston}
\begin{array}{rll}
\D X_t & = \displaystyle -\frac{1}{2}V_t\D t+ \sqrt{V_t}\D W_t, \quad & X_0=0,\\
\D V_t & = \kappa\left(\theta-V_t\right)\D t+\xi\sqrt{V_t}\D Z_t, \quad & V_0=v>0,\\
\D\left\langle W,Z\right\rangle_t & = \rho \D t,
\end{array}
\end{equation}
with $\kappa>0$, $\xi>0$, $\theta>0$, $|\rho|<1$ and $(W_t)_{t\geq0}$ and $(Z_t)_{t\geq0}$ are two standard Brownian motions.
The Feller SDE for the variance process has a unique strong solution 
by the Yamada-Watanabe conditions~\cite[Proposition 2.13, page 291]{KS97}). 
The $X$ process is a stochastic integral of the $V$ process and is therefore well defined.  
The Feller condition, $2\kappa\theta\geq\xi^2$, ensures that the origin is unattainable. 
Otherwise the origin is regular (hence attainable) and strongly reflecting
(see~\cite[Chapter 15]{KT81}). 
We do not require the Feller condition in our analysis since we work with the forward lmgf of $X$ which is always well defined.
In order to characterise the Heston forward time-scale we require the following lemma, proved in Section~\ref{sec:hestfwdtime}.
\begin{lemma} \label{Lemma:HestFwdTimeScale}
Let $h:\RR_+\to\RR_+$ be a continuous function such that $\lim_{\tau\searrow 0}h(\tau)=0$ and $a\in\mathbb{R}_+^*$. 
The following limits hold for the Heston forward lmgf  as $\tau$ tends to zero with $\beta_t$ defined in~\eqref{eq:DGammaBeta}: 
\begin{enumerate}[(i)]
\item
If $h(\tau)\equiv a\sqrt{\tau}$ then $
  \lim_{\tau\searrow 0}\Lambda_{\tau}^{(t)}(u,{h(\tau)})=0,
$
for all $|u|<a/\sqrt{\beta_t}$ and is infinite otherwise;
\item
if $\sqrt{\tau}/h(\tau)\nearrow\infty$ then $
  \lim_{\tau\searrow 0}\Lambda_{\tau}^{(t)}(u,{h(\tau)})=0,
$
for $u=0$ and is infinite otherwise;
\item
if $\sqrt{\tau}/h(\tau)\searrow 0$ then
$\lim_{\tau\searrow 0}\Lambda_{\tau}^{(t)}(u,{h(\tau)})=0$,
for all $u\in\mathbb{R}$.
\end{enumerate}
\end{lemma}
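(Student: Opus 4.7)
The plan is to reduce the forward lmgf to a standard Heston mgf combined with a CIR Laplace transform, and then analyse its small-$\tau$ behaviour separately in each of the three scaling regimes. Conditioning on $\mathcal{F}_t$ and using the strong Markov property together with the affine structure of the Heston model produces
\[
\Lambda_\tau^{(t)}(u,h(\tau)) = h(\tau)\bigl[C(u/h(\tau),\tau) + \phi_t\bigl(D(u/h(\tau),\tau)\bigr)\bigr],
\]
where $C(w,\tau)$ and $D(w,\tau)$ are the standard Heston mgf coefficients, i.e., the solutions of the usual Riccati system with $C(w,0)=D(w,0)=0$, and $\phi_t(z):=\log\mathbb{E}(\E^{zV_t})$ is the log-Laplace transform of $V_t$, available in closed form for the CIR process, with a simple pole at $z^{\ast}=1/(2\beta_t)$ and $\phi_t\equiv+\infty$ beyond.

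The next step is to derive a joint asymptotic for $D(w,\tau)$ that is valid as long as $w\sqrt{\tau}$ stays bounded. Starting from the explicit solution of
\[
D_\tau(w,\tau) = \tfrac12\xi^2 D(w,\tau)^2 + (\rho\xi w - \kappa)D(w,\tau) + \tfrac12(w^2-w), \qquad D(w,0)=0,
\]
and noting that the characteristic exponent $d(w):=\sqrt{(\kappa-\rho\xi w)^2+\xi^2(w-w^2)}$ grows linearly in $|w|$, so that $d(w)\tau\to 0$ whenever $w\sqrt{\tau}$ is bounded, a Taylor expansion of $1-\E^{-d(w)\tau}$ in the closed-form expression for $D$ yields $D(w,\tau) = \tfrac12(w^2-w)\tau\,(1+o(1))$, uniformly on bounded sets of $w\sqrt{\tau}$. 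The relation $C_\tau = \kappa\theta D$ then gives $C(w,\tau) = O\bigl((w^2-w)\tau^2\bigr)$ in the same regime.

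The three cases now follow by comparing $h(\tau)$ with $\sqrt{\tau}$. In~(i), $D(u/h(\tau),\tau)\to u^2/(2a^2)$, hence $\phi_t(D(u/h(\tau),\tau))\to\phi_t(u^2/(2a^2))$, which is finite iff $|u|<a/\sqrt{\beta_t}$; the prefactor $h(\tau)\to 0$ then drives the whole expression to~$0$ on that range, while $\phi_t\equiv +\infty$ produces a blow-up for $|u|\geq a/\sqrt{\beta_t}$. In~(ii), $(u/h(\tau))^2\tau\nearrow\infty$ for any $u\neq 0$, so $D(u/h(\tau),\tau)\nearrow\infty$ and $\phi_t$ becomes infinite as soon as $\tau$ is small enough; the case $u=0$ is trivial. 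In~(iii), $(u/h(\tau))^2\tau\searrow 0$, so $D\to 0$, $\phi_t(D)\to 0$ and $C\to 0$, and the factor $h(\tau)\to 0$ forces the limit to vanish for every $u\in\mathbb{R}$.

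The main obstacle I anticipate is the uniformity of the expansion of $D(w,\tau)$ in regime~(i), where $w=u/(a\sqrt{\tau})$ diverges while $w\sqrt{\tau}$ is held fixed: a fixed-$w$ Taylor expansion of the Riccati solution is not uniformly valid as $w\to\infty$, so one must exploit the explicit rational form of $D$ together with the estimate $|d(w)|\tau\to 0$ to pass to the limit rigorously. A short side argument is also needed at the boundary $|u|=a/\sqrt{\beta_t}$ in case~(i), relying on the divergence of $\phi_t$ at its critical point to dominate the vanishing prefactor $h(\tau)$.
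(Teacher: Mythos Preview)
Your approach is essentially the paper's: condition on $\mathcal{F}_t$, write the forward lmgf as the spot Heston coefficients $A,B$ (your $C,D$) composed with the CIR log-Laplace transform (your $\phi_t$, with its pole at $1/(2\beta_t)$), and then study the limit of $B(u/h(\tau),\tau)$. The paper isolates the latter as a separate lemma (its Lemma~6.1) with exactly the same three-way conclusion you reach for~(i), (iii) and (part of)~(ii).

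There is, however, a genuine gap in your handling of case~(ii). You derive $D(w,\tau)=\tfrac12(w^2-w)\tau\,(1+o(1))$ \emph{uniformly on bounded sets of $w\sqrt{\tau}$}, and then invoke this to conclude $D(u/h(\tau),\tau)\nearrow\infty$ in case~(ii). But in case~(ii) one has $w\sqrt{\tau}=u\sqrt{\tau}/h(\tau)\to\infty$, so you are applying your expansion precisely where you have not established it. In fact it can fail there: when additionally $\tau/h(\tau)\to\infty$, the quantity $d(w)\tau\sim \I\bar\rho\xi|u|\,\tau/h(\tau)$ is large imaginary, the factor $\exp(-d(w)\tau)$ oscillates, and $B(u/h(\tau),\tau)$ does not tend monotonically to $+\infty$ (the paper records this sub-case as ``undefined''). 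The paper closes this gap in two steps: it splits case~(ii) further according to whether $\tau/h(\tau)$ tends to $0$, a constant, or $\infty$ (its Lemma~6.1 covers each separately), and it independently checks the \emph{spot} Heston domain via the Andersen--Piterbarg explosion-time formula, showing that when $\tau/h(\tau)\to\infty$ one is already outside $\mathcal{D}_\tau$ for every $u\neq 0$, so $\Lambda_\tau^{(t)}(u,h(\tau))=+\infty$ for $\tau$ small regardless of the behaviour of $B$. Your argument implicitly assumes one is always inside the spot domain and that $D$ stays real; you need either the explosion-time check or the finer sub-case analysis to make~(ii) complete.

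A smaller point: at the boundary $|u|=a/\sqrt{\beta_t}$ in~(i), the balance between the pole of $\phi_t$ and the prefactor $h(\tau)$ is delicate (the leading correction to $D$ is $O(\sqrt{\tau})$, so $\phi_t(D)\sim\text{const}/\sqrt{\tau}$ and $h(\tau)\phi_t(D)$ is $O(1)$, not obviously infinite). The paper's treatment of this point is also terse; a clean argument requires the next-order term in $\widehat B$ and its sign.
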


As it turns out all limits are zero on their domains of definition, but using $h(\tau)\equiv\sqrt{\tau}$ produces the only 
(up to a constant multiplicative factor) non-trivial zero limit. 
It follows that $\tau\mapsto\sqrt{\tau}$ is the Heston forward time-scale. 
Let now $\Lambda:\mathcal{D}_{\Lambda}=(-1/\sqrt{\beta_t},1/\sqrt{\beta_t})\to\mathbb{R}$ 
be the pointwise limit 
(with $\beta_t:=\xi^2\left(1-\E^{-\kappa t}\right)/(4\kappa)$) from Lemma~\ref{Lemma:HestFwdTimeScale}, i.e.
satisfying $\Lambda(u)=0$ for~$u\in\mathcal{D}_{\Lambda}$ and infinity otherwise. 
Further we define the function $\Lambda^*:\mathbb{R}\to\mathbb{R}_+$ as the Fenchel-Legendre transform of $\Lambda$:
\begin{align}\label{eq:UStarDefinition}
\Lambda^*(k):=\sup_{u\in\mathcal{D}_{\Lambda}}\left\{uk-\Lambda(u)\right\},
\qquad\text{for all }k\in\mathbb{R}.
\end{align}

\begin{lemma}\label{lemma:U*Characterisation}
The function $\Lambda^*$ defined in~\eqref{eq:UStarDefinition} is characterised explicitly as 
$
\Lambda^*(k)=|k|/\sqrt{\beta_t}
$
for all $k\in\mathbb{R}$.
\end{lemma}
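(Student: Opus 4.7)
The plan is to compute the Fenchel-Legendre transform directly from the definition, exploiting the very simple shape of $\Lambda$: by Lemma~\ref{Lemma:HestFwdTimeScale} (and the preceding discussion), $\Lambda$ vanishes on the open interval $\mathcal{D}_\Lambda=(-1/\sqrt{\beta_t},1/\sqrt{\beta_t})$ and is $+\infty$ elsewhere, so the supremum in~\eqref{eq:UStarDefinition} collapses to
\[
\Lambda^*(k)=\sup_{u\in\mathcal{D}_{\Lambda}}\{uk-\Lambda(u)\}=\sup_{|u|<1/\sqrt{\beta_t}} uk,
\]
which is the supremum of a linear function of $u$ on an open bounded interval.

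The computation then splits naturally into three cases according to the sign of $k$. If $k>0$, the map $u\mapsto uk$ is strictly increasing, so the supremum is approached as $u\nearrow 1/\sqrt{\beta_t}$ and equals $k/\sqrt{\beta_t}$. Symmetrically, if $k<0$, the supremum is approached as $u\searrow -1/\sqrt{\beta_t}$ and equals $-k/\sqrt{\beta_t}$. If $k=0$, the expression $uk$ is identically zero, so $\Lambda^*(0)=0$. Combining the three cases yields $\Lambda^*(k)=|k|/\sqrt{\beta_t}$ for every $k\in\mathbb{R}$.

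There is no real obstacle here; the only mild subtlety is that $\mathcal{D}_\Lambda$ is open, so the supremum is not attained at any interior point but is nevertheless finite and equal to the appropriate boundary value by continuity of $u\mapsto uk$. This also confirms that $\Lambda^*$ is a bona fide (convex, nonnegative, vanishing at zero) rate function, consistent with its intended use in the sharp large deviations arguments developed later in the paper.
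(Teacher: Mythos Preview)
Your proof is correct and takes essentially the same approach as the paper: both reduce the Fenchel--Legendre transform to $\sup_{|u|<1/\sqrt{\beta_t}} uk$ using the fact that $\Lambda\equiv 0$ on $\mathcal{D}_\Lambda$, and then handle the cases $k>0$, $k<0$, $k=0$ separately. Your version is slightly more detailed (noting explicitly that the supremum is approached but not attained on the open interval), but the argument is the same.
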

\begin{proof}
Clearly $\Lambda^*(0)=0$. Now suppose that $k>0$. 
Then $\Lambda^*(k)=\sup_{u\in\mathcal{D}_{\Lambda}}\left\{uk\right\}=k/\sqrt{\beta_t}$. A similar result holds for $k<0$ and the result follows.
\end{proof}

\section{Small-maturity forward-start option asymptotics}\label{sec:smfwdstartasymp}

In this section we state the main result on small-maturity forward-start option asymptotics.
First we need to define a number of functions. 
All functions below are real-valued and defined on $\mathbb{R}^*$. 
We recall that $\sgn(u)=1$ if $u\geq 0$ and -1 otherwise. 
\begin{equation}\label{eq:aB1hatsmalltime}
\left\{
\begin{array}{ll}
a_0(k) := \displaystyle \frac{\sgn(k)}{\sqrt{\beta_t}},
\quad 
a_1(k):=-\frac{a_0(k)\sqrt{v} \E^{-\kappa  t/2}}{2 \sqrt{|k|} \beta _t^{1/4}},
\quad 
a_2(k):=-\frac{\kappa\theta}{k \xi ^2}-\frac{\widehat{B}_1(a_0(k))}{a_0(k)}, & \\
%a_3(k) := \displaystyle \frac{2\beta_t a_1^3(k)}{\xi^4 v^2}
%\left[
% |k| \xi^4 v \beta _t^{\frac{3}{2}} \E^{\kappa  t} \widehat{B}_1(a_0(k))  
%-\xi ^2 v \beta _t \E^{\kappa  t} \left(\kappa\theta+k \xi ^2 \widehat{B}_1'(a_0(k))\right)
%+ (2\kappa\theta\beta_t\E^{\kappa t})^2 - \frac{\xi ^4 v^2}{16}
%\right],& \\
a_3(k) := \displaystyle \frac{2\beta_t a_1^3(k)}{\xi^4 v^2}
\left[
\xi^2 v \beta _t \E^{\kappa  t} 
\left(|k|\xi^2 \beta_t^{\frac{1}{2}}\widehat{B}_1(a_0(k)) - k \xi^2 \widehat{B}_1'(a_0(k))-\kappa\theta\right)
+ (2\kappa\theta\beta_t\E^{\kappa t})^2 - \frac{\xi ^4 v^2}{16}
\right],& 
\end{array}
\right.
\end{equation}
where
\begin{equation}\label{eq:B1hat}
\widehat{B}_1(u) := \displaystyle \frac{u}{4}\Big(u^2\rho\xi-2\Big);
\end{equation}
\begin{equation}\label{eq:sigmarsmalltime}
\zeta(k):=\frac{2 \sqrt{v}  \E^{-\kappa  t/2}}{e_0(k)^{3/2}}, 
\qquad
r(k):= \frac{a_1^2(k)}{2}-\frac{\kappa\theta}{|k|\xi^2 \sqrt{\beta_t}},
\end{equation}
\begin{equation}\label{eq:esmalltime}
\left\{
\begin{array}{ll}
e_0(k) & := \displaystyle -2a_1(k)/a_0(k),\\
e_1(k) & := \displaystyle -2\beta_t r(k) , \\
e_2(k) & := \displaystyle -2\beta_t
\left(a_1(k)a_2(k)+a_0(k)a_3(k)+a_1(k)\widehat{B}_1'(a_0(k))\right),
\end{array}
\right.
\end{equation}
\begin{equation}\label{eq:psismalltime}
\left\{\begin{array}{rl}
\psi_0(k) & := \displaystyle \frac{a_0(k) v \E^{-\kappa t}}{e_0^3(k)}
\Big(e_0^2(k)+a_0(k)\beta _t \left[3 a_1(k)e_0(k)-2a_0(k) e_1(k)\right]\Big),
\\
\psi_1(k) & := \displaystyle  -4 a_0(k) v \beta _t \E^{-\kappa t}/e_0^4(k)\\
\psi_2(k) & := \displaystyle \frac{ v  \E^{-\kappa t}}{2 e_0^4(k)} 
\Big(4 a_0(k) \beta _t [3 a_0(k) e_1(k) - 4 a_1(k) e_0(k)]-5 e_0^2(k)\Big), \\
\psi_3(k) & := \displaystyle 8  v \beta _t \E^{-\kappa t}/e_0^5(k),\\
\psi_4(k) & := \displaystyle  \frac{v\E^{-\kappa t}}{2 e_0^3(k)}\left( \frac{e_1^2(k)-e_0(k) e_2(k)}{\beta_t}
- 2 a_0(k) a_1(k) e_0(k) e_1(k)+2 e_0^2(k) r(k)\right), \\
\end{array}
\right.
\end{equation}
\begin{equation}\label{eq:phi2smalltime}
\left\{
\begin{array}{rl}
\phi_2^{a}(k) & := \displaystyle\psi _2(k)-\frac{1}{2}\psi _0^2(k)
 - \frac{4\kappa\theta\beta_t}{\xi^2}\frac{2\kappa\theta+\xi^2}{e_0^2(k) \xi^2}
 - \frac{4\kappa\theta\beta_t}{\xi^2}\frac{a_0(k)\psi_0(k)}{e_0(k)},  \\
\phi_2^{b}(k) & := \displaystyle \psi _3(k) - \psi _0(k) \psi _1(k)
 - \frac{4\kappa\theta\beta_t}{\xi^2}\frac{a_0(k)\psi_1(k)}{e_0(k)}, \\
\phi_2^{c}(k) & := -\psi _1^2(k)/2, 
\end{array}
\right.
\end{equation}
\begin{equation}\label{eq:zpsmalltime}
z_1(k):=\psi_4(k)-a_3(k) k - \frac{2\kappa\theta}{\xi^2}\frac{e_1(k)}{e_0(k)},\quad
p_1(k):=e_0(k) + \frac{\phi_2^{a}(k)}{\zeta^2(k)}+\frac{3\phi_2^{b}(k)}{\zeta^4(k)}+\frac{15\phi_2^{c}(k)}{\zeta^6(k)},
\end{equation}
\begin{equation}\label{eq:csmalltime}
\left\{
\begin{array}{rll}
c_0(k) & := \displaystyle 2 |a_1(k) k|,
& c_1(k) := \displaystyle \frac{v\E^{-\kappa t}}{e_0(k)}\left(a_0(k)a_1(k) - \frac{e_1(k)}{2\beta_t e_0(k)}\right)-a_2(k) k , \\ 
c_2(k) & := \displaystyle e_0(k)^{-2\kappa\theta/\xi^2},
& c_3(k) := z_1(k)+p_1(k).
\end{array}
\right.
\end{equation}
We now state the main result of the section, i.e. an asymptotic expansion formula for forward-start
option prices as the remaining maturity tends to zero.
The proof is given in Section~\ref{sec:optpricefwdsmile}.
\begin{theorem}\label{theorem:fwdstartoptionssmalltime}
The following expansion holds for forward-start option prices for all $k\in\mathbb{R}^*$ as $\tau$ tends to zero:
\begin{align*}
&\mathbb{E}\left[\left(\E^{X^{( t)}_{\tau}}-\E^k\right)^+\right]
 = \left(1-\E^{k}\right)\ind_{\{k<0\}} \\
& + \exp{\left({-\frac{\Lambda^*(k)}{\sqrt{\tau}}+\frac{c_0(k)}{\tau^{1/4}}+c_1(k)}+k\right)}
\frac{\beta_t \tau^{\left(7/8-\theta\kappa/(2\xi^2)\right)}c_2(k)}{\zeta(k) \sqrt{2\pi}}\left(1+c_3(k)\tau^{1/4}+o\left(\tau^{1/4}\right)\right),
\end{align*}
where $\Lambda^*$ is characterised in Lemma~\ref{lemma:U*Characterisation}, $c_0,\ldots,c_3$ in~\eqref{eq:csmalltime}, $\zeta$ in~\eqref{eq:sigmarsmalltime} and $\beta_t$ in~\eqref{eq:DGammaBeta}.
\end{theorem}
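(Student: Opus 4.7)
The plan is to treat the forward-start call price by sharp large deviation / saddle point methods, building on the degenerate large deviations framework set up by Lemmas~\ref{Lemma:HestFwdTimeScale} and~\ref{lemma:U*Characterisation}. The starting point is the standard inverse-Fourier representation
\begin{equation*}
\mathbb{E}\left[\left(\E^{X_\tau^{(t)}}-\E^k\right)^+\right]
 = \left(1-\E^k\right)\ind_{\{k<0\}}
 + \frac{1}{2\pi\I}\int_{u^\sharp-\I\infty}^{u^\sharp+\I\infty}
   \frac{\exp\bigl(\log\mathbb{E}[\E^{uX_\tau^{(t)}}]-uk\bigr)}{u(u-1)}\,\D u,
\end{equation*}
where the contour abscissa $u^\sharp$ is chosen inside the strip of regularity of the forward mgf, and we intend to push it to a $\tau$-dependent saddle point. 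In the Heston model the affine structure gives the conditional identity
$\mathbb{E}[\E^{uX_\tau^{(t)}}\mid V_t] = \exp(C(\tau,u)+D(\tau,u)V_t)$ with $C,D$ solving the usual Heston Riccati ODEs, and the remaining expectation against the (non-central $\chi^2$) law of $V_t$ is explicit. This yields a closed-form expression for $\Lambda_\tau^{(t)}(u,a)$ which can be Taylor-expanded in $\tau$ uniformly on compact subsets of $\mathcal{D}_\Lambda$, using the identity $\beta_t=\xi^2(1-\E^{-\kappa t})/(4\kappa)$ from~\eqref{eq:DGammaBeta}.

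Next I would locate the saddle point. Writing $\Phi_\tau(u,k):=\log\mathbb{E}[\E^{uX_\tau^{(t)}}]-uk$, the saddle equation is $\partial_u\Phi_\tau(u,k)=0$. Because Lemma~\ref{Lemma:HestFwdTimeScale} shows that the natural speed is $\sqrt\tau$ with a limiting lmgf that is identically zero on $\mathcal{D}_\Lambda$ and infinite outside, the saddle cannot remain in the interior: it must approach the boundary $\sgn(k)/\sqrt{\beta_t}=a_0(k)$ as $\tau\searrow0$. One then posits a fractional ansatz
\begin{equation*}
u^*(\tau,k)=a_0(k)+a_1(k)\tau^{1/4}+a_2(k)\tau^{1/2}+a_3(k)\tau^{3/4}+o(\tau^{3/4}),
\end{equation*}
and determines the coefficients by matching orders in the saddle equation after substituting the explicit Heston forward lmgf; this is precisely what produces formulae~\eqref{eq:aB1hatsmalltime}, with $\widehat B_1$ in~\eqref{eq:B1hat} arising as the first subleading term of the $D$-Riccati solution near the boundary.

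The analytic core is then the Laplace evaluation along a steepest-descent path through $u^*(\tau,k)$. Parametrising $u=u^*+\I y$ and Taylor-expanding $\Phi_\tau$ in $y$, the quadratic coefficient is $-\tfrac12\partial_u^2\Phi_\tau(u^*,k)$, which itself admits a $\tau^{1/4}$-series; the functions $e_0,e_1,e_2$ in~\eqref{eq:esmalltime} and the $\psi_j$ in~\eqref{eq:psismalltime} encode precisely this expansion (with $\zeta(k)$ in~\eqref{eq:sigmarsmalltime} being the leading standard deviation after rescaling). Gaussian integration over $y$ — together with a careful treatment of the $1/(u(u-1))$ factor evaluated at $u^*$ — produces the prefactor $\beta_t\tau^{7/8-\kappa\theta/(2\xi^2)}c_2(k)/(\zeta(k)\sqrt{2\pi})$: the exponent $7/8$ combines the naive $1/2$ from the Gaussian with the $3/8$ coming from the $\tau^{1/4}$-rescaling of the integration variable, while the $\kappa\theta/(2\xi^2)$ contribution originates from the Feller-type power-law boundary behaviour of the Heston mgf. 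The functions $\phi_2^a,\phi_2^b,\phi_2^c$ in~\eqref{eq:phi2smalltime} collect quartic and sextic moment corrections to the Gaussian integral, $z_1,p_1$ in~\eqref{eq:zpsmalltime} package the $\tau^{1/4}$ correction to $\Phi_\tau(u^*,k)$, and $c_0,c_1,c_3$ in~\eqref{eq:csmalltime} are read off by matching these expansions at orders $\tau^{-1/4}$, $\tau^0$, $\tau^{1/4}$ respectively; the leading order $-\Lambda^*(k)/\sqrt\tau$ is forced by Lemma~\ref{lemma:U*Characterisation}.

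The main obstacle is the degeneracy of the limiting lmgf. Since $\Lambda\equiv 0$ on $\mathcal{D}_\Lambda$, classical sharp LDP (G\"artner-Ellis plus Bahadur-Rao type refinement) does not apply off the shelf: the curvature that controls the Gaussian integration is not present at leading order but only emerges through $\tau$-dependent corrections, which is why a fractional $\tau^{1/4}$-expansion is needed rather than the usual integer-power one. Two technical difficulties follow. First, one must justify deforming the Bromwich contour to a steepest-descent path that runs through a boundary-approaching saddle while staying inside the $\tau$-dependent strip of analyticity of the forward mgf; this requires quantitative bounds on $\Lambda_\tau^{(t)}(u,1)$ valid uniformly up to the moving saddle. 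Second, one must control the tail of the contour (away from $u^*$): the Gaussian window has width of order $\tau^{3/8}$ (read off from $\zeta(k)$), so the off-saddle estimate must beat an exponential of size comparable to the prefactor. Once these two ingredients are in place, the remaining work is the substantial but essentially mechanical algebra of collecting orders of $\tau^{1/4}$ in the exponent and the amplitude, yielding the stated coefficients $c_0,c_1,c_2,c_3$.
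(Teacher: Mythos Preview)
Your overall strategy is correct and identifies the right saddle $u^*_\tau(k)=a_0(k)+a_1(k)\tau^{1/4}+\cdots$ with the correct fractional $\tau^{1/4}$ scaling; the resulting expansion would indeed reproduce the stated coefficients. However, the paper does not work directly with the Bromwich contour and steepest descent. Instead it recasts the saddle-point analysis probabilistically: it introduces the Esscher-type tilt
\[
\frac{\D\mathbb{Q}_{k,\tau}}{\D\mathbb{P}}
=\exp\!\left(\frac{u^*_\tau(k)}{\sqrt{\tau}}X^{(t)}_\tau-\frac{1}{\sqrt{\tau}}\Lambda^{(t)}_\tau(u^*_\tau(k))\right),
\]
then writes the option price as $\E^{-ku^*_\tau/\sqrt{\tau}+\Lambda^{(t)}_\tau(u^*_\tau)/\sqrt{\tau}+k}\,\mathbb{E}^{\mathbb{Q}_{k,\tau}}\bigl[\E^{-u^*_\tau Z_{\tau,k}/\tau^{3/8}}(\E^{Z_{\tau,k}\tau^{1/8}}-1)^+\bigr]$ with $Z_{\tau,k}:=(X^{(t)}_\tau-k)/\tau^{1/8}$, expands the $\mathbb{Q}_{k,\tau}$-characteristic function of $Z_{\tau,k}$ (Lemma~\ref{Lemma:charactfuncasympsmalltime}), and evaluates the expectation via Parseval's theorem in $L^2$ (Lemma~\ref{Lemma:fourierasympsmalltime}). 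The exponential prefactor is handled separately in Lemma~\ref{Lemma:U*Asymptotics}.

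The practical difference is exactly where you locate the difficulty. Your route requires justifying a contour deformation through a saddle drifting to the boundary of a shrinking analyticity strip, plus off-saddle tail bounds---the two obstacles you flag. The paper sidesteps both: the measure change keeps everything on the real line, weak convergence of $Z_{\tau,k}$ to a Gaussian (Remark~\ref{rem:weakconsmalltau}) legitimises Parseval, and the Fourier transform of the damped payoff is computed in closed form, so no contour manipulation or tail estimate is needed. Your approach is more classical complex analysis; the paper's is the probabilistic (Bahadur--Rao style) repackaging of the same saddle, which here is technically lighter precisely because of the boundary degeneracy you identify.
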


\begin{remark} \
\begin{enumerate}[(i)]
\item
We have $\Lambda^*(k)>0$ and $c_0(k)>0$ for all $k\in\mathbb{R}^*$. 
Also note that $\Lambda^*$ is linear as opposed to being strictly convex in the BSM model, see Lemma~\ref{lemma:BSAsympSmallTau} below.
\item
The forward time-scale $\sqrt{\tau}$ results in out-of-the-money forward-start options decaying as $\tau$ tends to zero at leading order with a rate of $\exp{(-1/\sqrt{\tau})}$ as opposed to a rate of $\exp{(-1/\tau)}$ in the BSM model. 
\item
The fact that the limiting forward lmgf is non-steep (trivially zero on a bounded interval) results in a different asymptotic regime for higher order terms compared to the BSM model. 
In particular we have a $\tau^{1/4}$ dependence as opposed to a $\tau$ dependence in the BSM model and the introduction of the parameter dependent term $\tau^{(7/8-\theta\kappa/(2\xi^2))}$. 
The implications of this parameter dependent term for forward-smile asymptotics will be discussed further in Remark~\ref{rem:smmatfwdsmileasymp}(vii).
\item
The asymptotic expansion is given in closed-form and can in principle be extended to arbitrary order using the methods given in the proof.
\end{enumerate}
\end{remark}

As an immediate consequence of Theorem~\ref{theorem:fwdstartoptionssmalltime} 
we have the following corollary, which provides an example of a family of random variables for which 
the limiting logarithmic moment generating function is zero (on its effective domain) but a large deviation principle still holds.
This is to be compared to the G\"artner-Ellis theorem~\cite[Theorem 2.3.6]{DZ93} which requires the limiting
lmgf to be at least steep at the boundaries of its effective domain for an LDP to hold.
\begin{corollary}
$\left(X_{\tau}^{(t)}\right)_{\tau\geq0}$ satisfies an LDP with speed $\sqrt{\tau}$ and good rate function $\Lambda^*$ as $\tau$ tends to zero.
\end{corollary}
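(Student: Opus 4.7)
Since the limiting lmgf $\Lambda$ is identically zero on $\mathcal{D}_{\Lambda}=(-1/\sqrt{\beta_t},1/\sqrt{\beta_t})$ and jumps to $+\infty$ at the boundary, the G\"artner--Ellis theorem~\cite[Theorem 2.3.6]{DZ93} is inapplicable (the essential-smoothness/steepness hypothesis fails). The plan is instead to read the LDP directly off Theorem~\ref{theorem:fwdstartoptionssmalltime}, using the option-price expansion as an inverse Varadhan lemma: the leading exponential $\exp(-\Lambda^{*}(k)/\sqrt{\tau})$ is transferred to the tail probability of $X_\tau^{(t)}$ by sandwiching with nearby strikes.

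First I would extract from Theorem~\ref{theorem:fwdstartoptionssmalltime} the leading-order statement
\[
\lim_{\tau\searrow 0}\sqrt{\tau}\,\log\mathbb{E}\bigl(\E^{X_\tau^{(t)}}-\E^k\bigr)^{+} = -\Lambda^{*}(k)\qquad(k>0),
\]
and, for $k<0$, combine put--call parity with the martingale identity $\mathbb{E}[\E^{X_\tau^{(t)}}]=1$ (immediate from the tower property applied to $X_\tau^{(t)}:=X_{t+\tau}-X_t$) to obtain the same asymptotic for $\mathbb{E}(\E^{k}-\E^{X_\tau^{(t)}})^{+}$.

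Next I would convert these into matching upper and lower exponential bounds on $\mathbb{P}(X_\tau^{(t)}\geq k)$ for $k>0$ (the case $k<0$ is symmetric via puts). For the upper bound, the Chebyshev-style inequality $\mathbb{P}(X_\tau^{(t)}\geq k+\delta)\leq \mathbb{E}(\E^{X_\tau^{(t)}}-\E^k)^{+}/(\E^{k+\delta}-\E^k)$ combined with continuity of $\Lambda^{*}$ (letting $\delta\searrow 0$) gives $\limsup_{\tau\searrow 0}\sqrt{\tau}\log\mathbb{P}(X_\tau^{(t)}\geq k)\leq -\Lambda^{*}(k)$. For the lower bound, the two-strike sandwich
\[
\mathbb{P}\bigl(X_\tau^{(t)}>k\bigr) \geq \frac{\mathbb{E}(\E^{X_\tau^{(t)}}-\E^{k})^{+}-\mathbb{E}(\E^{X_\tau^{(t)}}-\E^{k'})^{+}}{\E^{k'}-\E^{k}},\qquad k'>k,
\]
together with the strict monotonicity $\Lambda^{*}(k')>\Lambda^{*}(k)$ (so that the $k'$-call is exponentially negligible) yields $\liminf_{\tau\searrow 0}\sqrt{\tau}\log\mathbb{P}(X_\tau^{(t)}>k)\geq -\Lambda^{*}(k)$.

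Finally I would assemble the LDP. The function $\Lambda^{*}(x)=|x|/\sqrt{\beta_t}$ is continuous, convex, and coercive, hence a good rate function. For closed $F\subset\mathbb{R}$, decomposing $F=F_{+}\cup F_{-}$ with $F_{\pm}=F\cap\mathbb{R}_{\pm}$ and using monotonicity of $\Lambda^{*}$ on each half-line gives $\inf_{F}\Lambda^{*}=\min(\Lambda^{*}(\inf F_{+}),\Lambda^{*}(\sup F_{-}))$, so a union bound and the one-sided upper tail estimates yield the closed-set upper bound. For open $G\subset\mathbb{R}$ and $k_0\in G\setminus\{0\}$, choosing $\epsilon>0$ with $(k_0-\epsilon,k_0+\epsilon)\subset G$ and writing $\mathbb{P}(X_\tau^{(t)}\in G)\geq \mathbb{P}(X_\tau^{(t)}>k_0-\epsilon)-\mathbb{P}(X_\tau^{(t)}\geq k_0+\epsilon)$ (for $k_0>0$, symmetrically otherwise), the two estimates above together with $\Lambda^{*}(k_0-\epsilon)<\Lambda^{*}(k_0+\epsilon)$ deliver the open-set lower bound after letting $\epsilon\searrow 0$ and optimising over $k_0$ (the case $0\in G$ being immediate from $X_\tau^{(t)}\to 0$ in probability). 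The main obstacle is the lower tail estimate: because $\Lambda$ is flat on $\mathcal{D}_{\Lambda}$, no tilted-measure argument is available, and the two-strike sandwich is the crucial substitute that leverages the strict monotonicity of $\Lambda^{*}$ off the origin to match the Chebyshev upper bound.
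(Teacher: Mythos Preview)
Your proposal is correct and takes a genuinely different route from the paper's. The paper argues that the entire machinery behind Theorem~\ref{theorem:fwdstartoptionssmalltime} carries over with only minor changes to digital payoffs, thereby obtaining the tail asymptotics $\lim_{\tau\searrow 0}\sqrt{\tau}\log\mathbb{P}(X_\tau^{(t)}\leq k)=-\inf_{x\leq k}\Lambda^*(x)$ directly, and then passes to intervals and Borel sets. By contrast, you treat Theorem~\ref{theorem:fwdstartoptionssmalltime} as a black box and recover the tail probabilities from the call/put asymptotics via Chebyshev and the two-strike sandwich. Your approach is more self-contained---it requires no re-running of the saddlepoint analysis---and your final assembly (separate closed-set upper and open-set lower bounds) is arguably more careful than the paper's somewhat terse passage from intervals to general Borel sets. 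The price you pay is the extra sandwich step and the need to invoke continuity and strict monotonicity of $\Lambda^*$ explicitly; the paper's route is conceptually more direct but presupposes that the reader is willing to accept the digital-option analogue of Theorem~\ref{theorem:fwdstartoptionssmalltime} without further detail. One minor point: in your Chebyshev step the roles of $k$ and $k+\delta$ are slightly misaligned---to bound $\mathbb{P}(X_\tau^{(t)}\geq k)$ you should dominate by the call struck at $k-\delta$ and then let $\delta\searrow 0$, using continuity of $\Lambda^*$; the idea is of course the same.
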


\begin{proof}

The proof of Theorem~\ref{theorem:fwdstartoptionssmalltime} holds with only minor modifications for digital options,
which are equivalent to probabilities of the form 
$\mathbb{P}\left(X_{\tau}^{(t)}\leq k\right)$ or $\mathbb{P}\left(X_{\tau}^{(t)}\geq k\right)$.
One can then show that 
$\lim_{\tau\searrow 0}\sqrt{\tau}\log \mathbb{P}\left(X_{\tau}^{(t)}\leq k\right)
 = -\inf\{\Lambda^*(x), x\leq k\}$.
Note that of course this infimum is null whenever $k>0$.
Consider now an open interval of the real line of the form $(a,b)$.
Since $(a,b) = (-\infty, b)\setminus (-\infty,a]$, then by continuity of the function $\Lambda^*$
and its properties given in Lemma~\ref{lemma:U*Characterisation}, we immediately obtain that
$$
\lim_{\tau\searrow 0}\sqrt{\tau}\log \mathbb{P}\left(X_{\tau}^{(t)} \in (a,b)\right)
 = -\inf_{x\in (a,b)}\Lambda^*(x).$$
Since any Borel set of the real line can be written as a (countable) union / intersection of open intervals, 
the corollary follows from the definition of the large deviations principle, see~\cite[Section 1.2]{DZ93}.
\end{proof}

In order to translate the forward-start option results into forward smile asymptotics we require a similar expansion for the BSM model. 
The following lemma is a direct consequence of~\cite[Corollary 2.9]{JR12} and the proof is therefore omitted.

\begin{lemma}\label{lemma:BSAsympSmallTau}
In the BSM model the following expansion holds for all $k\in\mathbb{R}^*$ as $\tau$ tends to zero:
$$
\mathbb{E}\left[\left(\E^{X^{( t)}_{\tau}}-\E^k\right)^+\right] 
=
\left(1-\E^{k}\right)\ind_{\{k<0\}}+
\frac{\E^{k/2-k^2/(2\sigma^2 \tau)}\left(\sigma^2\tau\right)^{3/2}}{k^2\sqrt{2\pi}}\left[1-\left(\frac{3}{k^2}+\frac{1}{8}\right)\sigma^2\tau+o(\tau)\right].
$$
\end{lemma}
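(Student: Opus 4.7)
The starting point is the stationary-increment property of BSM: $X_{\tau}^{(t)}=X_{t+\tau}-X_t$ has the same law as $X_\tau$, so
$$
\mathbb{E}\bigl[(\E^{X_\tau^{(t)}}-\E^k)^+\bigr] \;=\; C_{\mathrm{BS}}(\tau,k,\sigma) \;=\; \mathcal{N}(d_+)-\E^k\mathcal{N}(d_-),
$$
with $d_\pm=-k/(\sigma\sqrt{\tau})\pm\tfrac{1}{2}\sigma\sqrt{\tau}$. The lemma therefore reduces to an asymptotic expansion of this closed-form expression as $\tau\searrow 0$ for fixed $k\in\mathbb{R}^*$.

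For $k>0$ the indicator vanishes and both $d_\pm$ tend to $-\infty$. I would apply the Mills-ratio expansion
$$
\mathcal{N}(d)\;=\;\frac{\phi(d)}{|d|}\!\left(1-\frac{1}{d^2}+\frac{3}{d^4}+O(d^{-6})\right),\qquad d\to-\infty,
$$
combined with the BSM identity $\E^k\phi(d_-)=\phi(d_+)$, which follows directly from $d_+^2-d_-^2=-2k$ and factors a common $\phi(d_+)$ out of both terms. Writing $1/(-d_\pm)=(\sigma\sqrt{\tau}/k)(1\mp\sigma^2\tau/(2k))^{-1}$ as a geometric series, only odd powers of $\sigma^2\tau/(2k)$ survive in the difference $1/(-d_+)-1/(-d_-)$, yielding the leading factor $\sigma^3\tau^{3/2}/k^2$ with next correction already at order $\tau^{7/2}$. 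The pair $1/(-d_\pm)^3$ then contributes $3\sigma^5\tau^{5/2}/k^4$, so that the Mills bracket reads $(\sigma^3\tau^{3/2}/k^2)\bigl[1-3\sigma^2\tau/k^2+O(\tau^2)\bigr]$. Finally, from $d_+^2/2=k^2/(2\sigma^2\tau)-k/2+\sigma^2\tau/8$ one obtains
$$
\phi(d_+)\;=\;(2\pi)^{-1/2}\E^{k/2-k^2/(2\sigma^2\tau)}\bigl(1-\sigma^2\tau/8+O(\tau^2)\bigr),
$$
and multiplying delivers exactly the stated prefactor together with the additive correction $-(3/k^2+1/8)\sigma^2\tau$.

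For $k<0$ I would invoke put-call parity, $C_{\mathrm{BS}}(\tau,k,\sigma)=(1-\E^k)+\E^k\mathcal{N}(-d_-)-\mathcal{N}(-d_+)$, which isolates the boundary term $(1-\E^k)\ind_{\{k<0\}}$ exactly. The same Mills-ratio argument applied to the put (now $-d_\pm\to-\infty$) reproduces the identical expansion, since $d_+^2/2=k^2/(2\sigma^2\tau)-k/2+\sigma^2\tau/8$ is valid for every $k\in\mathbb{R}$ and the final formula depends on $k$ only through $k^2$ in the denominator and $\E^{k/2-k^2/(2\sigma^2\tau)}$ in the prefactor. No step is genuinely delicate; the only care required is to retain sufficiently many terms in the Mills ratio and in the expansion of $\E^{-\sigma^2\tau/8}$ so that the two contributions $3/k^2$ and $1/8$ appear together in the first-order correction.
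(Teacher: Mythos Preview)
Your argument is correct. The Mills-ratio expansion combined with the identity $\E^{k}\phi(d_-)=\phi(d_+)$ is exactly the right tool, and your bookkeeping of the $\tau^{3/2}$ and $\tau^{5/2}$ contributions from $1/|d_\pm|$ and $1/|d_\pm|^3$, together with the $\E^{-\sigma^2\tau/8}$ factor from $\phi(d_+)$, is accurate. The put-call parity step for $k<0$ is also handled cleanly.

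By way of comparison: the paper does not actually prove this lemma. It simply states that the expansion is ``a direct consequence of~[Corollary~2.9]{JR12}'' and omits the argument. That corollary is a general forward-start option expansion obtained via a saddlepoint/measure-change machinery applied to the (trivial) BSM moment generating function. Your route is more elementary and fully self-contained: you work directly from the closed-form $C_{\mathrm{BS}}(\tau,k,\sigma)$ and use nothing beyond the standard asymptotic series for the Gaussian tail. The advantage of the paper's citation is brevity and consistency with the general framework used elsewhere; the advantage of yours is that a reader need not consult an external reference, and the mechanism producing the two correction terms $3/k^2$ (from the Mills series) and $1/8$ (from $\E^{-\sigma^2\tau/8}$) is made transparent.
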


\section{Small-maturity forward smile asymptotics} \label{sec:smfwdsmileasymp}

\subsection{Out-of-the-money forward implied volatility}\label{sec:otmasymp}

We now translate the small-maturity forward-start option asymptotics into forward smile asymptotics. 
Define the functions $v_i:\mathbb{R}^*\times\mathbb{R}_{+}^*\to\mathbb{R}$ ($i=0,1,2,3$) by
\begin{equation}\label{eq:vov1v2smalltimehest}
\left.
\begin{array}{rll}
v_0(k,t) & := \displaystyle \frac{k^2}{2 \Lambda^*(k)}=\frac{\sqrt{\beta_t}|k|}{2} ,\quad 
v_1(k,t):=\frac{2 c_0(k) v_0^2(k,t)}{k^2}=\frac{\E^{-\kappa t/2}\beta_t^{1/4}\sqrt{v|k|}}{2}, \\
v_2(k,t) & := \displaystyle \frac{2 v_0^2(k,t)}{k^2}\log \left(\frac{\E^{c_1(k)} c_2(k) \beta_t k^2}{\zeta(k)  v_0^{3/2}(k,t)}\right)+\frac{v_0^2(k,t)}{k}+\frac{v_1^2(k,t)}{v_0(k,t)},\\
v_3(k,t) & := \displaystyle \frac{v_0(k,t)}{k^2}\Big(2 c_3(k) v_0(k,t)-3 v_1(k,t)\Big)+\frac{v_1(k,t)}{v_0(k,t)}\left(2 v_2(k,t)-\frac{v_1^2(k,t)}{v_0(k,t)}\right),
\end{array}
\right.
\end{equation}
with $\Lambda^*$ characterised in Lemma~\ref{lemma:U*Characterisation}, $c_0,\ldots,c_3$ in~\eqref{eq:csmalltime}, $\zeta$ in~\eqref{eq:sigmarsmalltime} and $\beta_t$ in~\eqref{eq:DGammaBeta}. 
On~$\mathbb{R}^*$, $\Lambda^*(k)>0$ and so $v_0(k,t)>0$. 
Further $c_0(k)>0$ and so $v_1(k,t)>0$. 
Also $c_2(k)>0$ and $\zeta(k)>0$ so that $v_2$ is a well defined real-valued function. 
The following theorem---proved in Section~\ref{sec:optpricefwdsmile}---is the main result of the section.
\begin{theorem}\label{theorem:fwdsmilesmalltime}
The following expansion holds for the forward smile for all $k\in\mathbb{R}^*$ as $\tau$ tends to zero:
\begin{equation*}
\sigma_{t,\tau}^2(k)=
\left\{
\begin{array}{ll}
\displaystyle \frac{v_0(k,t)}{\tau^{1/2}}+\frac{v_{1}(k,t)}{\tau^{1/4}}
+o\left(\frac{1}{\tau^{1/4}}\right), & \text{if } 4\kappa\theta\neq\xi^2,\\
\displaystyle \frac{v_0(k,t)}{\tau^{1/2}}+\frac{v_{1}(k,t)}{\tau^{1/4}}
+v_{2}(k,t)+v_{3}(k,t)\tau^{1/4}+o\left(\tau^{1/4}\right), & \text{if } 4\kappa\theta = \xi^2.
\end{array}
\right.
\end{equation*}
\end{theorem}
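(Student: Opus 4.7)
The plan is to invert the Heston call-price expansion of Theorem~\ref{theorem:fwdstartoptionssmalltime} against the BSM expansion of Lemma~\ref{lemma:BSAsympSmallTau}. By the very definition of implied volatility, the two forward-start call prices are equal when $\sigma=\sigma_{t,\tau}(k)$; after subtracting the common intrinsic value $(1-\E^k)\ind_{k<0}$, both remaining time values are strictly positive, so taking logarithms is legitimate and I would compare both log-expansions order by order in $\tau^{1/4}$.

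Substituting the ansatz $\sigma_{t,\tau}^2(k) = v_0\,\tau^{-1/2} + v_1\,\tau^{-1/4} + \widetilde v(\tau)$ with $\widetilde v(\tau)=o(\tau^{-1/4})$ into the BSM log-price, I get
$$\tfrac{k}{2} - \tfrac{k^2}{2\sigma_{t,\tau}^2 \tau} + \tfrac{3}{2}\log\bigl(\sigma_{t,\tau}^2\tau\bigr) - \log(k^2\sqrt{2\pi}) + O(\sigma_{t,\tau}^2\tau).$$
Expanding $1/(\sigma_{t,\tau}^2\tau)$ and $\log(\sigma_{t,\tau}^2\tau)$ as geometric/logarithmic series in $\tau^{1/4}$ generates contributions at every quarter-power. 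Matching the $\tau^{-1/2}$ coefficient with the Heston term $-\Lambda^*(k)/\sqrt{\tau}$ yields, via Lemma~\ref{lemma:U*Characterisation}, $v_0(k,t)=k^2/(2\Lambda^*(k))=\sqrt{\beta_t}|k|/2$. Matching the $\tau^{-1/4}$ coefficient with $c_0(k)/\tau^{1/4}$ gives $v_1(k,t)=2c_0(k)\,v_0(k,t)^2/k^2$, reproducing the formulas in~\eqref{eq:vov1v2smalltimehest}; both are even in $k$, so the same analysis handles $k>0$ and $k<0$ simultaneously.

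The critical observation is the $\log\tau$ balance at order $\tau^0$. The BSM side contributes $\tfrac{3}{4}\log\tau$, arising from $\tfrac{3}{2}\log(\sigma_{t,\tau}^2\tau)$, while the Heston side contributes $\bigl(\tfrac{7}{8} - \tfrac{\kappa\theta}{2\xi^2}\bigr)\log\tau$, coming from the pre-factor $\tau^{7/8 - \kappa\theta/(2\xi^2)}$ in Theorem~\ref{theorem:fwdstartoptionssmalltime}. These agree if and only if $4\kappa\theta = \xi^2$. When this Feller-equality fails, the mismatch can only be absorbed by a logarithmic correction inside $\widetilde v$; since $\log\tau = o(\tau^{-1/4})$ as $\tau\searrow 0$, this is still compatible with the residual bound $\widetilde v = o(\tau^{-1/4})$, and the expansion legitimately terminates at $v_1$. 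When $4\kappa\theta = \xi^2$, the $\log\tau$ terms cancel exactly and a clean algebraic identity at order $\tau^0$ determines $v_2$ in the compact form of~\eqref{eq:vov1v2smalltimehest} after combining the $c_1$, $c_2$, $\zeta$ and $\log v_0$ contributions; iterating one step further at order $\tau^{1/4}$, using $c_3$ together with the cross-terms produced by $v_1,v_2$ inside $-k^2/(2\sigma^2\tau)$ and $\tfrac{3}{2}\log(\sigma^2\tau)$, yields $v_3$. The main obstacle is not conceptual but book-keeping: one must carefully collect every $\tau^{j/4}$ contribution from both expansions and verify that the resulting identities simplify to the stated closed forms, the key conceptual input being the log-$\tau$ dichotomy that singles out the Feller-boundary case.
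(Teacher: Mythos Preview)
Your proposal is correct and follows essentially the same route as the paper: substitute an ansatz for $\sigma_{t,\tau}^2$ into the BSM expansion of Lemma~\ref{lemma:BSAsympSmallTau} and match order by order against Theorem~\ref{theorem:fwdstartoptionssmalltime}. Your explicit $\log\tau$-balance argument ($\tfrac{3}{4}$ versus $\tfrac{7}{8}-\tfrac{\kappa\theta}{2\xi^2}$) is exactly the paper's observation that the BSM prefactor $\tau^{3/4}$ must equal the Heston prefactor $\tau^{7/8-\kappa\theta/(2\xi^2)}$ for the higher-order matching to proceed, and your remark that the resulting $\log\tau$ correction is $o(\tau^{-1/4})$ when $4\kappa\theta\neq\xi^2$ is a useful clarification the paper leaves implicit.
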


\newpage
\begin{remark} \ \label{rem:smmatfwdsmileasymp}
\begin{enumerate}[(i)]
\item
Note that $v_0(k,t)$ and $v_1(k,t)$ are strictly positive for all $k\in\mathbb{R}^*$, 
so that the Heston forward smile blows up to infinity (except ATM) as $\tau$ tends to zero.
\item
Both $v_0(\cdot,t)$ and $v_1(\cdot,t)$ are even functions and correlation-independent quantities so that for small maturities the Heston forward smile becomes symmetric (in log-strikes) around the ATM point.
Consequently, if one believes that the small-maturity forward smile should be downward sloping (similar to the spot smile) then the Heston model should not be chosen.
This small-maturity 'U-shaped' effect for the Heston forward smile has been mentioned qualitatively by practitioners; see~\cite{B02}.
\item
We use the notation $f\sim g$ to mean $f/g =1$ as $\tau\to0$. 
Then in Heston  we have $\sigma_{t,\tau}^2 \sim \sqrt{\beta_t}|k|/(2\sqrt{\tau}) $ and 
in exponential L\'evy models with L\'evy measure $\nu$ satisfying $\text{supp}\, \nu=\mathbb{R}$ we have  $\sigma_{t,\tau}^2 \sim -k^2/(2\tau \log \tau) $ ~\cite[Page 21]{Tank}.
We therefore see that the small-maturity exponential L\'evy smile blows up at a much quicker rate than the Heston forward smile.
\item
We have $\lim_{k\to0}v_0(k,t)=v_0(0,t)=0$ and $\lim_{k\to0}v_1(k,t)=v_1(0,t)=0$.
Higher-order terms are not necessarily continuous at $k=0$. 
For example (when $4\kappa\theta=\xi^2$) we have $\lim_{k\to0}v_2(k,t)=+\infty$.
\item
The at-the-money forward implied volatility ($k=0$) asymptotic is not covered by Theorem~\ref{theorem:fwdsmilesmalltime} and a separate analysis is needed for this case (see Section~\ref{sec:ATMfwdsmile}). 
In particular the proof fails since in this case the key function $u^*_{\tau}(0)$ 
(defined through equation~\eqref{eq:u^*tausmalltime}) does not converge to a boundary point, 
but rather to zero as $\tau$ tends to zero (see the proof of Lemma~\ref{lemma:u^*tausmalltime}).
\item
It does not make sense to consider the limit of our asymptotic result for fixed $k\in\mathbb{R}^*$ as $t$ tends to zero since for $t=0$ using the forward time-scale $h(\tau)\equiv\sqrt{\tau}$ will produce a trivial limiting lmgf and hence none of the results will carry over.
The time scale in the spot case is $h(\tau)\equiv\tau$; see~\cite{FJSmall}.
Our result is only valid in the forward (not spot) smile case.
\item
As seen in the proof, due to the term $\tau^{7/8-\theta\kappa/(2\xi^2)}$ in the forward-start option asymptotics in Theorem~\ref{theorem:fwdstartoptionssmalltime}, one can only specify the small-maturity forward smile to arbitrary order if $4\kappa\theta=\xi^2$. 
If this is not the case then such an expansion for the forward smile only holds up to order~$\mathcal{O}(1/\tau^{1/4})$. 
Now the dynamics of the Heston volatility $\sigma_t:=\sqrt{V_t}$ is given by
$
\D \sigma_t=\left(\frac{4\kappa\theta-\xi^2}{8\sigma_t}-\frac{\kappa \sigma_t}{2}\right)\D t +\frac{\xi}{2} \D W_t,
$
with $\sigma_0=\sqrt{v}$.
If we set $4\kappa\theta=\xi^2$ the volatility becomes Gaussian, which corresponds to a specific case of the Sch\"obel-Zhu stochastic volatility model~\cite{Zhu}.
So as the Heston volatility dynamics deviate from Gaussian volatility dynamics a certain degeneracy occurs such that one cannot specify high order forward smile asymptotics in the small-maturity case. 
Interestingly, a similar degenerary occurs when studying the tail probability of the stock price.
As proved in~\cite{DFJV}, the square-root behaviour of the variance process induces some singularity and hence
a fundamentally different behaviour when $4\kappa\theta\ne \xi^2$.
\end{enumerate}
\end{remark}

%%%%%%%%%%%%%%%%%%%%%%%%%%%%%%%%%%%%%%%%%%%
%%%%%%%%%%%%%%%%%%%%%%%%%%%%%%%%%%%%%%%%%%%
\subsection{At-the-money forward implied volatility}\label{sec:ATMfwdsmile}
The analysis above excluded the at-the-money case $k=0$.
We show below that this case has a very different behaviour and can be studied with a much simpler machinery.
In this section, we shall denote the future implied volatility $\sigma_t(k,\tau)$ as 
the implied volatility corresponding to a European call/put option
with strike $\E^{k}$, maturity $\tau$, observed at time $t$.
We first start with the following model-independent lemma, bridging the gap between the at-the-money 
future implied volatility~$\sigma_t(0,\tau)$
and the forward implied volatility~$\sigma_{t,\tau}(0)$.
Note that a similar result---albeit less general---was derived in~\cite{LC09}.
We shall denote by~$\mathbb{E}_0$ the expectation (under the given risk-neutral probability measure) 
with respect to~$\mathcal{F}_0$, the filtration at time zero.
\begin{lemma}\label{lem:modelindatm}
Let $t>0$.
Assume that there exists $n\in\mathbb{N}^*$ such that the expansion
$\sigma_t(0,\tau) = \sum_{j=0}^{n}\sigma_j(t)\tau^{j} + o\left(\tau^{n}\right)$
holds and that $\mathbb{E}_0\left(\sigma_j(t)\right)<\infty$
for $j=0,...,n$.
If the at-the-money forward implied volatility satisfies
$\sigma_{t,\tau}(0) = \sum_{j=0}^{n}\bar{\sigma}_j(t)\tau^{j} + o\left(\tau^{n}\right)$, then 
$\bar{\sigma}_j(t) = \mathbb{E}_0 (\sigma_j(t))$ for all $j=0,\ldots,n$.
\end{lemma}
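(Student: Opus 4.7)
My plan is to link the two implied volatilities through the tower property of conditional expectation, and then to read off the claimed coefficient identities from the resulting asymptotic expansion of the at-the-money BSM formula.

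\textbf{Step 1 (master identity).} Conditioning on $\mathcal{F}_t$ and exploiting the pathwise identity $X^{(t)}_\tau = X_{t+\tau}-X_t$, the tower property gives
$$
C_{\textrm{BS}}(\tau,0,\sigma_{t,\tau}(0))
= \mathbb{E}_0\bigl[(\E^{X^{(t)}_\tau}-1)^+\bigr]
= \mathbb{E}_0\bigl[\mathbb{E}_t[(\E^{X_{t+\tau}-X_t}-1)^+]\bigr]
= \mathbb{E}_0\bigl[C_{\textrm{BS}}(\tau,0,\sigma_t(0,\tau))\bigr],
$$
since by definition the (random) $\mathcal{F}_t$-measurable future implied volatility $\sigma_t(0,\tau)$ is precisely the BSM-implied volatility of the inner conditional expectation. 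This is the master identity to be expanded term by term.

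\textbf{Step 2 (at-the-money BSM expansion).} Since $C_{\textrm{BS}}(\tau,0,\sigma) = 2\Nn(\sigma\sqrt{\tau}/2)-1$ is odd and entire in $\sigma\sqrt{\tau}$, Taylor expansion yields
$$
C_{\textrm{BS}}(\tau,0,\sigma) = \frac{\sigma\sqrt{\tau}}{\sqrt{2\pi}} - \frac{\sigma^3 \tau^{3/2}}{24\sqrt{2\pi}} + O(\sigma^5\tau^{5/2}),
$$
the crucial feature being that the leading term is \emph{linear} in $\sigma$.

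\textbf{Step 3 (substitute and match).} I will plug the ansatz $\sigma_t(0,\tau) = \sum_{j=0}^n\sigma_j(t)\tau^j + o(\tau^n)$ into the right-hand side of the master identity and $\sigma_{t,\tau}(0) = \sum_{j=0}^n\bar\sigma_j(t)\tau^j + o(\tau^n)$ into the left-hand side, and expand each side in half-integer powers of $\tau$. The integrability hypothesis $\mathbb{E}_0(\sigma_j(t))<\infty$, combined with dominated convergence, justifies interchanging $\mathbb{E}_0$ with the truncated polynomial-in-$\tau$ expansion on the right-hand side. Dividing through by $\sqrt{\tau}/\sqrt{2\pi}$ and sending $\tau\searrow 0$ extracts $\bar\sigma_0(t) = \mathbb{E}_0(\sigma_0(t))$ from the linear leading term; subtracting this contribution and dividing successively by further powers of $\tau$ then delivers $\bar\sigma_j(t) = \mathbb{E}_0(\sigma_j(t))$ by induction on $j$.

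\textbf{Main obstacle.} The delicate issue is precisely the $O(\sigma^3\tau^{3/2})$ correction in Step~2, which is \emph{not} linear in $\sigma$: it introduces contributions of the form $\mathbb{E}_0(\sigma_t(0,\tau)^3)$ on the right-hand side and $\sigma_{t,\tau}(0)^3$ on the left-hand side, starting at order $\tau^{3/2}$, and analogous $(2k{+}1)$-fold contributions appear at each subsequent order. The technical heart of the proof is to show that, once the lower-order identities $\bar\sigma_i(t)=\mathbb{E}_0(\sigma_i(t))$ for $i<j$ have been established, these nonlinear cross-terms reorganise so that the matching at order $\tau^{j+1/2}$ reduces to the purely linear relation $\bar\sigma_j(t)=\mathbb{E}_0(\sigma_j(t))$. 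This is exactly the bookkeeping supported by the finite-moment assumption and by the quality of the $o(\tau^n)$ remainders in the two ansatz expansions.
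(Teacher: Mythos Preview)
Your approach mirrors the paper's exactly: establish the identity $C_{\textrm{BS}}(\tau,0,\sigma_{t,\tau}(0)) = \mathbb{E}_0[C_{\textrm{BS}}(\tau,0,\sigma_t(0,\tau))]$ via the tower property, expand the at-the-money BSM price as $\sigma\sqrt{\tau}/\sqrt{2\pi} - \sigma^3\tau^{3/2}/(24\sqrt{2\pi}) + \mathcal{O}(\sigma^5\tau^{5/2})$, substitute the two ansatz expansions, and match half-integer powers of~$\tau$. The paper writes out the $\tau^{1/2}$ and $\tau^{3/2}$ coefficients and then asserts the general pattern without further detail.

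However, the obstacle you flag in your final paragraph is real and does \emph{not} resolve the way you (or the paper) claim. Matching at order $\tau^{3/2}$ yields
\[
\bar\sigma_1(t) - \frac{\bar\sigma_0(t)^3}{24} \;=\; \mathbb{E}_0\bigl[\sigma_1(t)\bigr] - \frac{1}{24}\,\mathbb{E}_0\bigl[\sigma_0(t)^3\bigr],
\]
and since $\bar\sigma_0(t)=\mathbb{E}_0[\sigma_0(t)]$ from the previous step, the conclusion $\bar\sigma_1(t)=\mathbb{E}_0[\sigma_1(t)]$ would require $\bigl(\mathbb{E}_0[\sigma_0(t)]\bigr)^3 = \mathbb{E}_0\bigl[\sigma_0(t)^3\bigr]$, which fails (strictly, by Jensen) whenever $\sigma_0(t)$ is genuinely random. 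The nonlinear cross-terms do \emph{not} reorganise into a purely linear relation; they leave the residual $\tfrac{1}{24}\bigl((\mathbb{E}_0\sigma_0)^3 - \mathbb{E}_0\sigma_0^3\bigr)$ in $\bar\sigma_1$. So while your scheme is faithful to the paper's own argument, the step you label the ``technical heart'' cannot be completed as stated: the bookkeeping does not close, and no moment assumption on the individual $\sigma_j$ repairs it. The argument is secure at $j=0$; the identity $\bar\sigma_j=\mathbb{E}_0[\sigma_j]$ for $j\geq 1$ is, in general, false as written.
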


\begin{proof}
In the Black-Scholes model, we know that for any $t\geq 0$, $\tau>0$, 
the price at time~$t$ of a (re-normalised) European call option with maturity $t+\tau$ is 
$
C_t^{\textrm{BS}}(k,\tau,\sigma) := \mathbb{E}\left[\left(S^{\mathrm{BS}}_{t+\tau}/S^{\mathrm{BS}}_t-\E^{k}\right)_+\vert \Ff_t\right]$,
and its at-the-money expansion as the maturity $\tau$ tends to zero reads (see~\cite[Corollary 3.5]{FJL11})
$$
C_t^{\textrm{BS}}(0,\tau,\sigma) = \frac{1}{\sqrt{2\pi}}\left(\sigma\sqrt{\tau}-\frac{\sigma^3 \tau^{3/2}}{24}+\mathcal{O}\left(\sigma^5\tau^{5/2}\right)\right).
$$
We keep the $\sigma$ dependence in the $\mathcal{O}(\ldots)$ to highlight the fact that, when $\sigma$ depends on $\tau$ 
(such as $\sigma=\sigma_t(0,\tau)$), one has to be careful not to omit some terms.
Now, for a given martingale model for the stock price~$S$, we shall denote by $C_{t}(k,\tau)$ the price at time $t$ of a European call option with payoff 
$\left(S_{t+\tau}/S_t-\E^{k}\right)_+$
at time $t+\tau$.
The future implied volatility $\sigma_t(k,\tau)$ is then the unique solution to 
$C_t^{\textrm{BS}}(k,\tau,\sigma_t(k,\tau)) = C_t(k,\tau)$.
For at-the-money $k=0$, we obtain the following expansion for short maturity~$\tau$:
\begin{equation}\label{eq:AppExp}
C_t(0,\tau) = 
\frac{1}{\sqrt{2\pi}}\left(
\sigma_0(t)\sqrt{\tau}+\left(\sigma_1(t)-\frac{\sigma_0^3(t)}{24}\right)\tau^{3/2}+
%\left(\sigma_2(t)-\frac{\sigma_0^2(t)\sigma_1(t)}{8}\right)
\mathcal{O}\left(\tau^{5/2}\right)
\right),
\end{equation}
%\begin{equation}\label{eq:AppExp}
%C_t(0,\tau) = 
%\frac{1}{\sqrt{2\pi}}\left(\sigma_t(0,\tau)\sqrt{\tau}-\frac{\sigma^3_t(0,\tau) \tau^{3/2}}{24}+o\left(\sigma^3_t(0,%\tau)\tau^{3/2}\right)\right).
%\end{equation}
where we have used here the expansion assumed for $\sigma_t(0,\tau)$.
Note also that the coefficients $\sigma_j(t)$ are random variables.
We follow the probabilistic version of the $\mathcal{O}$ notation detailed in~\cite[Section 5]{Janson}, 
namely the random remainder~$R_\tau$ is $\mathcal{O}_P(\tau^{5/2})$ as $\tau$ tends to zero if and only if
for any $\varepsilon>0$ there exist a constant $c_\varepsilon>0$ and a threshold $\tau_\varepsilon>0$ for which
$\mathbb{P}\left(|R_\tau|\leq c_\varepsilon \tau^{5/2}\right)>1-\varepsilon$ for all $\tau<\tau_\varepsilon$.
For brevity we abuse the notations slightly here and write $\mathcal{O}$ instead of $\mathcal{O}_P$.
Now, the forward-start European call option (at inception) in the Black-Scholes model reads
$$
 \mathbb{E}_0\left[\left(\frac{S^{\mathrm{BS}}_{t+\tau}}{S^{\mathrm{BS}}_t}-\E^{k}\right)_+\right]
 = \mathbb{E}_0\left\{\mathbb{E}\left[\left(\frac{S^{\mathrm{BS}}_{t+\tau}}{S^{\mathrm{BS}}_t}-\E^{k}\right)_+\vert \Ff_t\right]\right\}
 = \Nn(d_+(\sigma, \tau))-\E^{k}\Nn(d_-(\sigma,\tau))
=C_0^{\mathrm{BS}}(k, \tau, \sigma),
$$
where $d_{\pm}(\sigma,\tau):=(-k\pm\sigma^2\tau/2)/(\sigma\sqrt{\tau})$.
For a given model, let us denote by $\overline{C}(k, t, \tau)$ the price of a forward-start European call option.
By definition of the forward implied volatility $\sigma_{t,\tau}(k)$, we have
$\overline{C}(k,t,\tau) = C_0^{\mathrm{BS}}(k,\tau,\sigma_{t,\tau}(k))$.
For at-the-money $k=0$, it follows that
$$
\overline{C}(0,t,\tau) = C_0^{\mathrm{BS}}(0, \tau, \sigma_{t,\tau}(0)) = \Nn(d_+(\sigma_{t,\tau}(0), \tau))-\Nn(d_-(\sigma_{t,\tau}(0),\tau)).
$$
Using the assumed expansion for $\sigma_{t,\tau}(0)$ we similarly obtain (as in~\eqref{eq:AppExp})
\begin{equation}\label{eq:AppExp2}
\overline{C}(0,t,\tau) = 
\frac{1}{\sqrt{2\pi}}\left(
\bar\sigma_0(t)\sqrt{\tau}+\left(\bar\sigma_1(t)-\frac{\bar\sigma_0^3(t)}{24}\right)\tau^{3/2}+\mathcal{O}\left(\tau^{5/2}\right)
\right).
\end{equation}
Note that now the coefficients $\bar\sigma_j(t)$ are not random variables, but simple constants.
Recall now that 
\begin{equation}\label{eq:FSeq}
\overline{C}(k, t, \tau)
 := \mathbb{E}_0\left[\left(\frac{S_{t+\tau}}{S_t}-\E^{k}\right)_+\right]
 = \mathbb{E}_0\left\{\mathbb{E}\left[\left(\frac{S_{t+\tau}}{S_t}-\E^{k}\right)_+\vert \Ff_t\right]\right\}
 = \mathbb{E}_0\left(C_{t}(k,\tau)\right).
\end{equation}
%Under the assumptions of the lemma, with $n=1$, suppose as an ansatz that the corresponding forward smile satisfies
%$\sigma_{t,\tau}(0) = \sum_{j=0}^{1}\overline{\sigma}_j(t)\tau^{j} + o\left(\tau\right)$.
Combining this with~\eqref{eq:AppExp} and~\eqref{eq:AppExp2}, we find that $\bar{\sigma}_j(t)=\mathbb{E}_0\left(\sigma_j(t)\right)$ for $j=0,1$. 
The higher-order terms for the expansion can be proved analogously and the lemma follows.
\end{proof}

We now apply this to the Heston model.
Recall the definition of the Kummer (confluent hypergeometric) function $\mathrm{M}:\mathbb{C}^3\to\RR$:
$$\mathrm{M}\left(\alpha, \mu,z\right):=\sum_{n\geq 0}\frac{(\alpha)_n}{(\mu)_n}\frac{z^n}{n!},\quad \mu\neq 0,-1,...,$$
where the Pochhammer symbol is defined by $(\alpha)_n:=\alpha\left(\alpha+1\right)\cdots\left(\alpha+n-1\right)$
for $n\geq 1$ and $(\alpha)_0=1$. 
For any $p>-2\kappa\theta/\xi^2$ and $t>0$ we define
\begin{equation}\label{eq:HestonMoments}
\Delta(t,p):=2^p\beta_t^p \exp\left({-\frac{v\E^{-\kappa t}}{2\beta_t}}\right)\frac{\Gamma\left(2\kappa\theta/\xi^2+p\right)}{\Gamma\left(2\kappa\theta/\xi^2\right)}\mathrm{M}\left(\frac{2\kappa\theta}{\xi^2}+p,\frac{2\kappa\theta}{\xi^2},\frac{v\E^{-\kappa t}}{2\beta_t}\right),
\end{equation}
with $\beta_t$ defined in~\eqref{eq:DGammaBeta}. 
This function is related to the moments of the Feller diffusion (see~\cite[Theorem 2.4]{D01}):
for any $t>0$, $\mathbb{E}\left[V_t^p\right]=\Delta(t,p)$ if $p>-2\kappa\theta/\xi^2$ and is infinite otherwise.
Note in particular that $\lim_{t\searrow 0}\Delta(t,p) = v^{p}$ (see~\cite[13.1.4 page 504]{Abra}).
The Heston forward at-the-money volatility asymptotic is given in the following theorem.

\begin{theorem}\label{theorem:fwdatmsmalltime}
The following expansion holds for the forward at-the-money volatility as $\tau$ tends to zero:
\begin{equation*}
\sigma_{t,\tau}(0)=
\left\{
\begin{array}{ll}
\displaystyle \Delta(t,1/2)
+o\left(1\right), & \text{if } 4\kappa\theta\leq\xi^2,\\
\displaystyle \Delta(t,1/2)
+\frac{\Delta(t,-1/2)}{4}\left(\kappa\theta+\frac{\xi^2}{24}(\rho^2-4)\right)\tau
+\frac{\Delta(t,1/2)}{8} (\rho\xi-2\kappa )\tau
+o(\tau), & \text{if } 4\kappa\theta > \xi^2.
\end{array}
\right.
\end{equation*}
\end{theorem}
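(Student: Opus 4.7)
The plan is to invoke Lemma~\ref{lem:modelindatm} to reduce the forward at-the-money problem to (i) a conditional spot at-the-money expansion at time $t$, and (ii) a term-by-term $\mathcal{F}_0$-expectation of its coefficients.

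First I would fix $t>0$ and condition on $\mathcal{F}_t$. Under $\mathbb{P}(\cdot\mid\mathcal{F}_t)$, by the Markov property, the process $X^{(t)}_\tau=X_{t+\tau}-X_t$ is a Heston log-price started from~$0$ with (random) initial variance $V_t$, same parameters $(\kappa,\theta,\xi,\rho)$. Standard small-maturity at-the-money expansions for Heston (via the Forde-Jacquier-Lee results~\cite{FJL11}, or equivalently by Itô-Taylor-expanding the conditional call price $C_t(0,\tau)=\mathbb{E}[(\exp(X^{(t)}_\tau)-1)_+\mid\mathcal{F}_t]$ in powers of $\tau^{1/2}$ and inverting the Black-Scholes ATM expansion already recalled in the proof of Lemma~\ref{lem:modelindatm}) then give
\[
\sigma_t(0,\tau)=\sqrt{V_t}+\sigma_1(V_t)\,\tau+o(\tau),\qquad \sigma_1(v):=\frac{1}{4\sqrt{v}}\left(\kappa\theta+\frac{\xi^2(\rho^2-4)}{24}\right)+\frac{\sqrt{v}}{8}(\rho\xi-2\kappa).
\]
Crucially $\sigma_1$ is an affine combination of $\sqrt{v}$ and $1/\sqrt{v}$.

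Next I would check the integrability requirements of Lemma~\ref{lem:modelindatm}. Using the moment formula $\mathbb{E}_0[V_t^p]=\Delta(t,p)$ which is finite if and only if $p>-2\kappa\theta/\xi^2$, the coefficient $\sqrt{V_t}$ is always integrable with $\mathbb{E}_0[\sqrt{V_t}]=\Delta(t,1/2)$, whereas $1/\sqrt{V_t}$ is integrable if and only if $4\kappa\theta>\xi^2$, in which case $\mathbb{E}_0[1/\sqrt{V_t}]=\Delta(t,-1/2)$. Applying Lemma~\ref{lem:modelindatm} with $n=0$ in the regime $4\kappa\theta\le\xi^2$ yields only the leading term, i.e. $\sigma_{t,\tau}(0)=\Delta(t,1/2)+o(1)$. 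Applying it with $n=1$ in the regime $4\kappa\theta>\xi^2$ yields $\bar\sigma_0(t)=\Delta(t,1/2)$ and, by linearity of expectation,
\[
\bar\sigma_1(t)=\mathbb{E}_0[\sigma_1(V_t)]=\frac{\Delta(t,-1/2)}{4}\left(\kappa\theta+\frac{\xi^2}{24}(\rho^2-4)\right)+\frac{\Delta(t,1/2)}{8}(\rho\xi-2\kappa),
\]
which is precisely the second branch of the statement.

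The main obstacle is the first step: producing the conditional spot ATM expansion with an explicit $V_t$-dependent coefficient $\sigma_1(V_t)$ and, above all, a remainder uniform enough in $\omega$ so that applying Lemma~\ref{lem:modelindatm} pathwise and then integrating is rigorous. Published spot ATM results are typically written for a deterministic initial variance $v$; tracking the $o(\tau)$ estimate as $v$ is replaced by the random $V_t$ requires care. The $\mathcal{O}_P$ framework set up inside the proof of Lemma~\ref{lem:modelindatm} is precisely what propagates the remainder through the conditional expectation, but one must verify that the implicit constants in the short-time Itô-Taylor expansion of $C_t(0,\tau)$ are controlled by moments of $V_t$ that are finite under the corresponding Feller-type condition $4\kappa\theta>\xi^2$; this is where the dichotomy in the statement naturally arises.
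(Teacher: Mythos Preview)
Your proposal is correct and follows essentially the same route as the paper: recall the conditional (spot) at-the-money expansion $\sigma_t(0,\tau)=\sqrt{V_t}+\sigma_1(V_t)\tau+o(\tau)$ from~\cite{FJL11}, then apply Lemma~\ref{lem:modelindatm} together with the moment formula~\eqref{eq:HestonMoments}. You even make explicit the integrability dichotomy (finiteness of $\mathbb{E}_0[V_t^{-1/2}]$ precisely when $4\kappa\theta>\xi^2$) that the paper leaves implicit, and your remarks on controlling the random remainder via the $\mathcal{O}_P$ framework are a fair assessment of where the rigour lies.
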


\begin{remark} \
\begin{enumerate}[(i)]
\item As opposed to the out-of-the-money case, the small-maturity limit here is well defined.
\item Combining Lemma~\ref{lem:modelindatm} and~\cite{BBF},
$\lim_{ \tau \searrow 0}\sigma_{t,\tau}(0)=\mathbb{E}(\sqrt{V_t})$ holds for any well-behaved stochastic volatility model $(S,V)$.
\item The proof does not allow one to conclude any information about higher order terms in Heston for the case $4\kappa\theta\leq\xi^2$.
A different method would need to be used to compute higher order asymptotics in this case.
\end{enumerate}
\end{remark}

\begin{proof}[Proof of Theorem~\ref{theorem:fwdatmsmalltime}]
In Heston we recall (\cite[Corollary 4.3]{FJL11} or~\cite[Corollary 3.3]{JR12}) the asymptotic 
$
\sigma_t^2(0,\tau)=V_t+
\left(\frac{\kappa\theta}{2}+\frac{\xi^2}{48}\left(\rho^2-4\right)+\frac{V_t}{4}(\rho\xi-2\kappa)\right)\tau
+o(\tau),
$
and so for small $\tau$ we have
$\sigma_t(0,\tau)=\sigma_0(t)+\sigma_1(t)\tau+o(\tau)$,
with $\sigma_0(t):=\sqrt{V_t}$ and $\sigma_1(t):=\frac{1}{4\sqrt{V_t}}\left(\kappa\theta+\frac{\xi^2}{24}(\rho^2-4)\right)
+\frac{\sqrt{V_t}}{8} (\rho\xi-2\kappa)$.
Lemma~\ref{lem:modelindatm} and~\eqref{eq:HestonMoments} conclude the proof.
\end{proof}

%%%%%%%%%%%%%%%%%%%%%%%%%%%%%%%%%%%%%%%%%%%
%%%%%%%%%%%%%%%%%%%%%%%%%%%%%%%%%%%%%%%%%%%
\section{Numerics}\label{sec:smhnumerics}

We first compare the true Heston forward smile and the asymptotics developed in the paper.
We calculate forward-start option prices using the inverse Fourier transform representation 
in~\cite[Theorem 5.1]{L04F}
and a global adaptive Gauss-Kronrod quadrature scheme. 
We then compute the forward smile $\sigma_{t,\tau}$ with a simple root-finding algorithm.
The Heston model parameters are given by $\rho=-0.8$, $\xi=0.52$, $\kappa=1$ and $v=\theta=0.07$ unless otherwise stated in the figures. 
In Figures~\ref{fig:halfmonth} and~\ref{fig:onemonth} we compare the true forward smile using Fourier inversion and the asymptotic in Theorem~\ref{theorem:fwdsmilesmalltime}.
It is clear that the small-maturity asymptotic has very different features relative to "smoother" asymptotics derived in~\cite{JR12}.
This is due to the introduction of the forward time-scale and to the fact that the limiting lmgf is not steep.
Note also from Remark~\ref{rem:smmatfwdsmileasymp}(iv) that the asymptotics in Theorem~\ref{theorem:fwdsmilesmalltime} can approach zero or infinity as the strike approaches at-the-money.
This appears to be a fundamental feature of non-steep asymptotics; 
numerically this implies that the asymptotic may break down for strikes in a region around the at-the-money point.
In Figure~\ref{fig:SmallMatATMFourier} we compare the true at-the-money forward volatility using Fourier inversion and the asymptotic in Lemma~\ref{lem:modelindatm}. 
Results are in line with expectations and the at-the-money asymptotic is more accurate than the out-of-the-money asymptotic. 
This is because the at-the-money forward volatility (unlike the out-of-the-money case) has a well defined limit as $\tau$ tends to zero.
In Figure~\ref{fig:exploding} we use these results to gain intuition on how the Heston forward smile explodes for small maturities. 
In~\cite{JR12} the authors derived a diagonal small-maturity asymptotic expansion for the Heston forward smile valid for small forward start-dates and small maturities. 
In order for the small-maturity asymptotic in this paper to be useful, there needs to be a sufficient amount of variance of variance at the forward-start date.
Practically this means that the asymptotic performs better as one increases the forward-start date.
On the other hand the diagonal-small maturity asymptotic expansion is valid for small forward-start dates.
In this sense these asymptotics complement each other.
Figure~\ref{fig:smallmatvsdiag} shows the consistency of these two results for small forward-start date and maturity.

\begin{figure}[h!tb] 
\centering
\mbox{\subfigure[Asymptotic vs Fourier inversion.]{\includegraphics[scale=0.8]{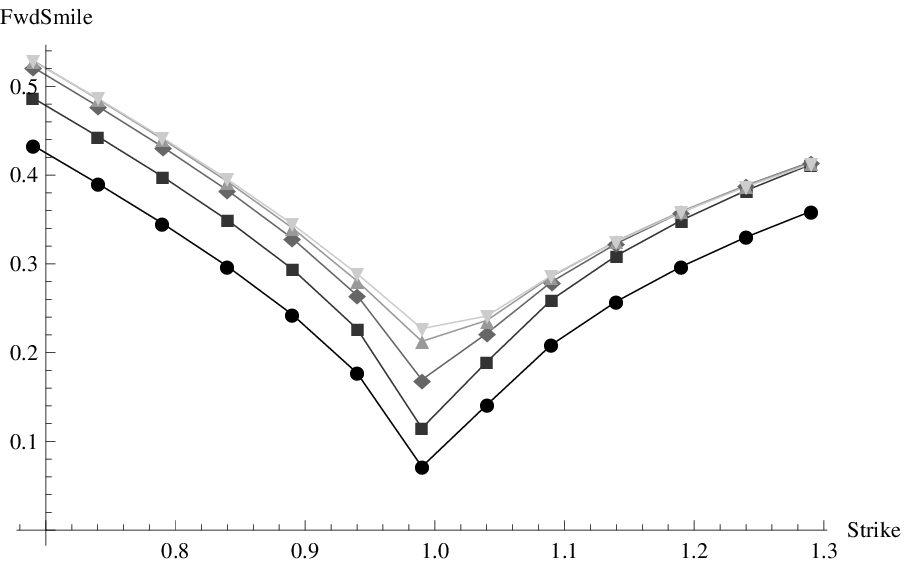}}\quad
\subfigure[Errors.]{\includegraphics[scale=0.8]{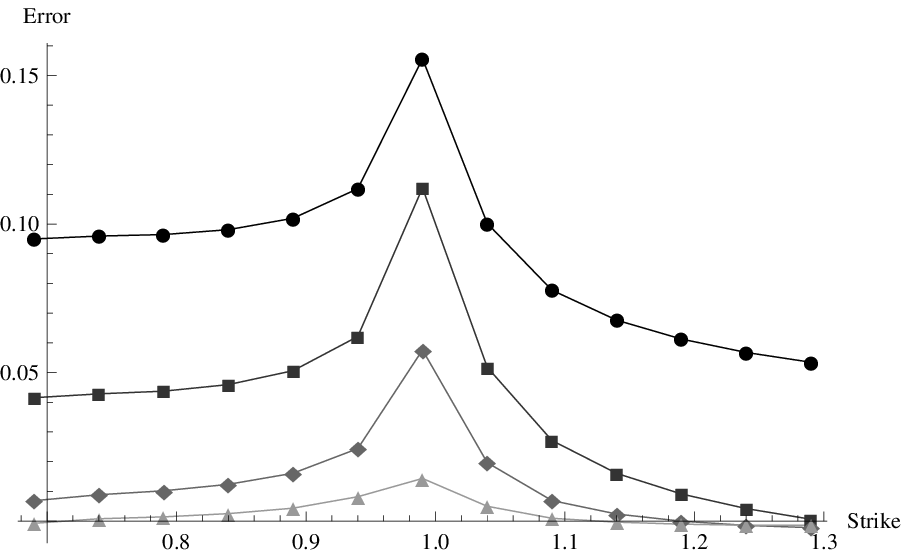}}}
\caption{Here $t=1$ and $\tau=1/24$.
In (a) circles, squares, diamonds and triangles represent the zeroth, first, second and third-order asymptotics respectively and backwards triangles represent the true forward smile using Fourier inversion. 
In (b) we plot the errors.}
\label{fig:halfmonth}
\end{figure}

\begin{figure}[h!tb] 
\centering
\mbox{\subfigure[Asymptotic vs Fourier inversion.]{\includegraphics[scale=0.8]{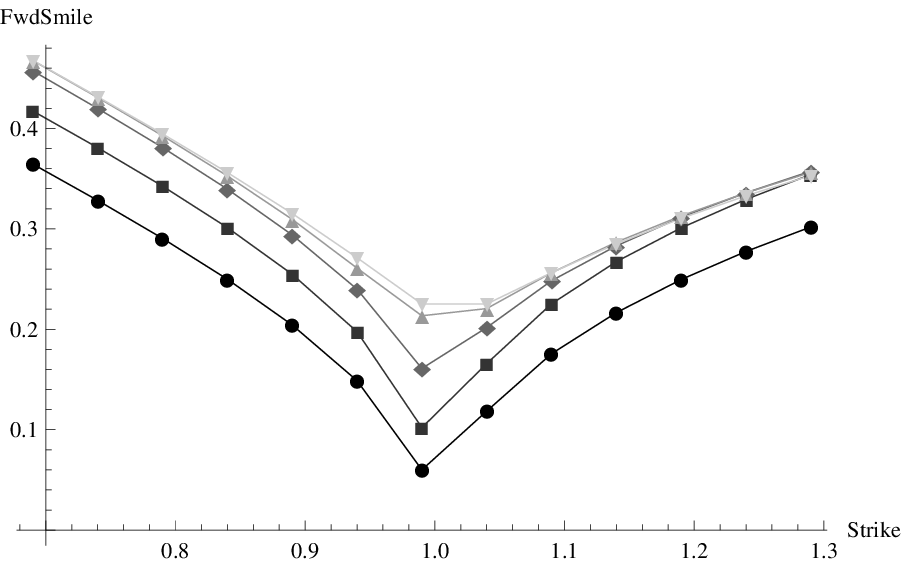}}\quad
\subfigure[Errors.]{\includegraphics[scale=0.8]{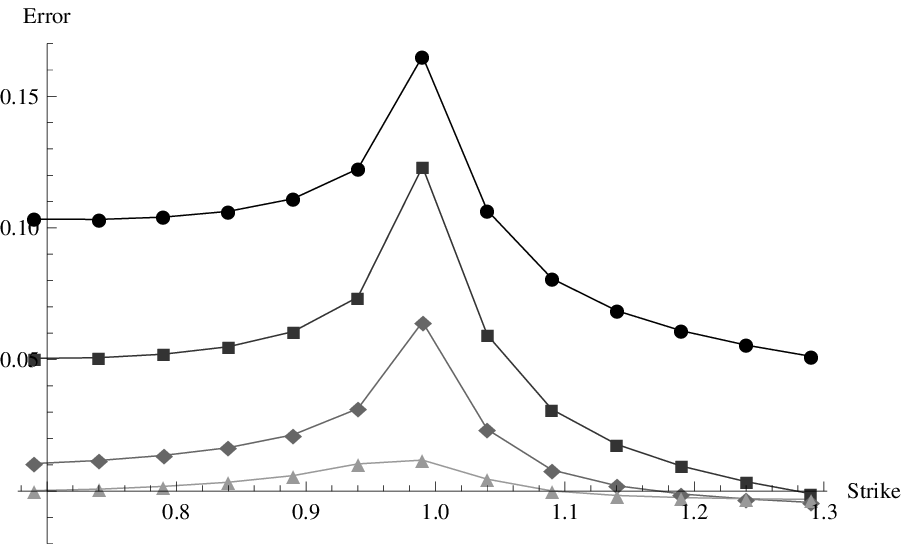}}}
\caption{Here $t=1$ and $\tau=1/12$. 
In (a) circles, squares, diamonds and triangles represent the zeroth, first, second and third-order asymptotics respectively and backwards triangles represent the true forward smile using Fourier inversion. 
In (b) we plot the errors. }
\label{fig:onemonth}
\end{figure}

\newpage

\begin{figure}[h!tb] 
\centering
\mbox{\subfigure[ATM Asymptotic vs Fourier inversion.]{\includegraphics[scale=0.7]{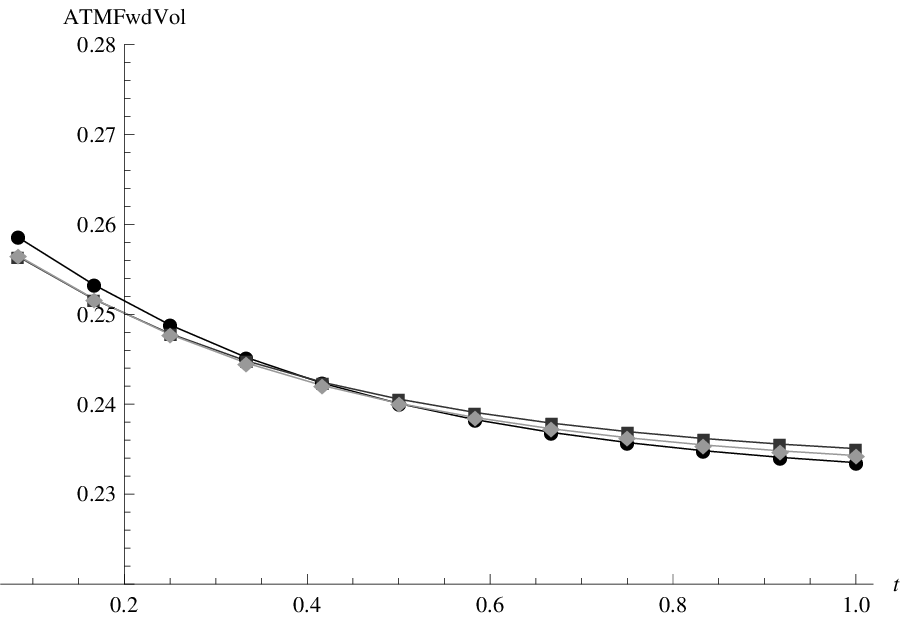}}\quad
\subfigure[Errors.]{\includegraphics[scale=0.7]{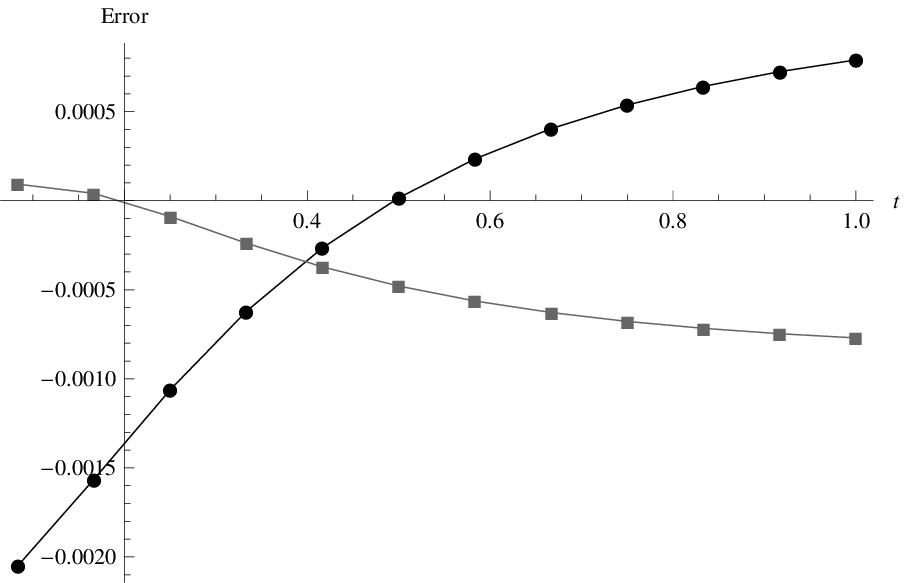}}}
\caption{Plot of the forward ATM volatility ($\tau=1/12$) as a function of the forward-start date $t$. 
The Heston parameters are $\rho=-0.6$, $\kappa=1$, $\xi=0.4$ and $v=\theta=0.07$. 
In (a) circles, squares and diamonds are the zeroth-order, 
the first-order and the true forward ATM volatility.}
\label{fig:SmallMatATMFourier}
\end{figure}

\begin{figure}[h!tb] 
\centering
\mbox{\subfigure[$t=1$, $\tau=1/100$.]{\includegraphics[scale=0.7]{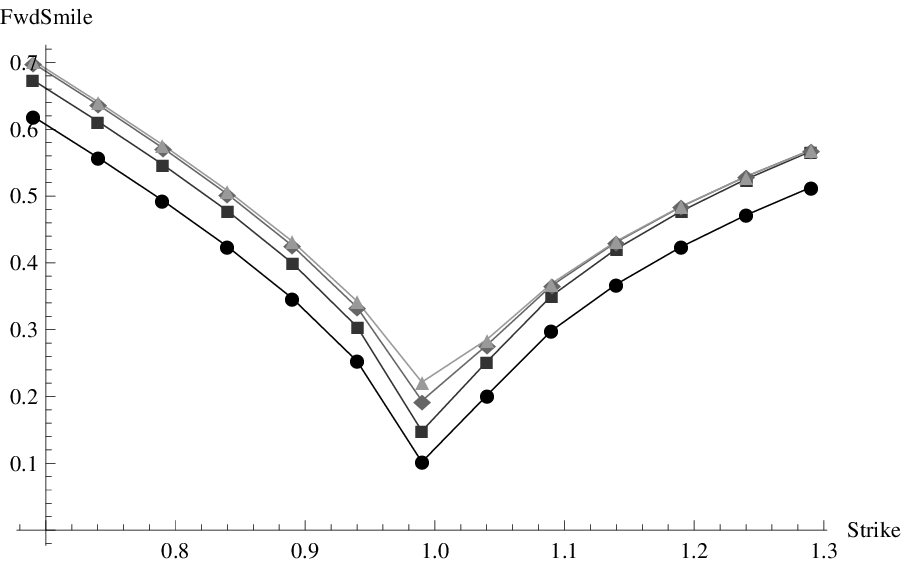}}\quad
\subfigure[$t=1$, $\tau=1/1000$.]{\includegraphics[scale=0.7]{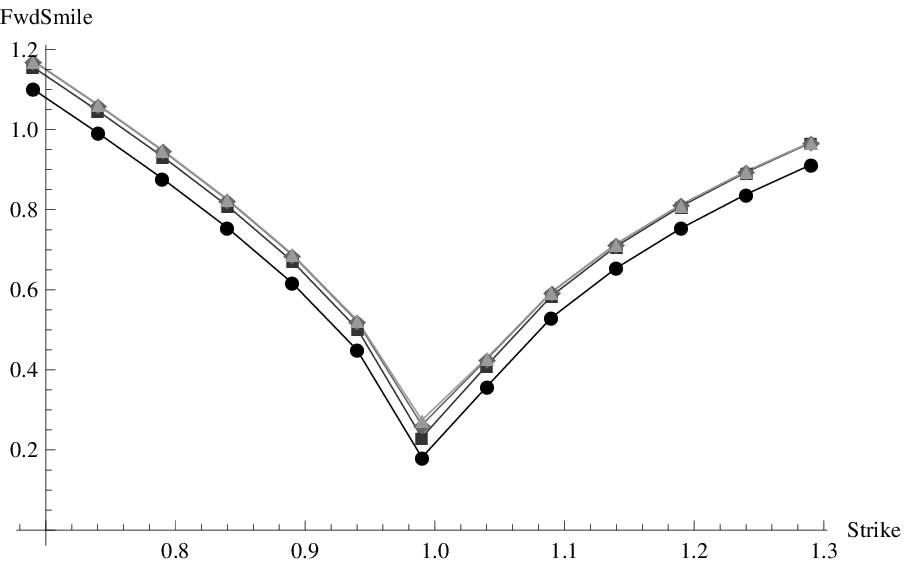}}}
\caption{Circles, squares, diamonds and triangles represent the zeroth, first, second and third-order asymptotics respectively.}
\label{fig:exploding}
\end{figure}

\begin{figure}[h!tb] 
\centering
\includegraphics[scale=0.7]{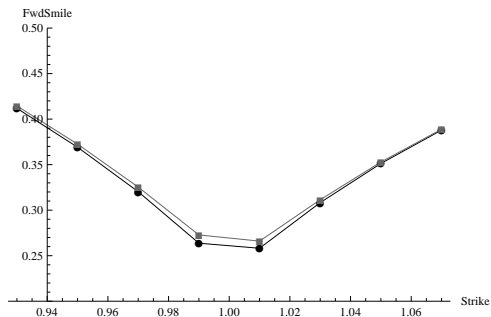}
\caption{Here we compare the small-maturity third-order asymptotic (circles) to the diagonal small-maturity 
second-order asymptotic of~\cite{JR12} (squares) for $t=1/12$ and $\tau=1/1000$.}
\label{fig:smallmatvsdiag}
\end{figure}

\section{Proof of Theorems~\ref{theorem:fwdstartoptionssmalltime} and~\ref{theorem:fwdsmilesmalltime}}\label{sec:proofssmalltau}

We split the proof of Theorems~\ref{theorem:fwdstartoptionssmalltime} and~\ref{theorem:fwdsmilesmalltime} into several parts, from Section~\ref{sec:hestfwdtime} to Section~\ref{sec:optpricefwdsmile} below. 
In Section~\ref{sec:hestfwdtime} we develop the necessary tools to characterise the small-maturity Heston forward lmgf domain and derive the Heston forward time-scale (Lemma~\ref{Lemma:HestFwdTimeScale}). 
In Section~\ref{sec:asympmeaschange} we use the forward time-scale to define a time-dependent asymptotic measure-change and derive expansions for fundamental auxiliary functions needed in the analysis. 
In Section~\ref{sec: Charactfuncasymp} we derive the asymptotics of the characteristic function of a re-scaled version of the forward price process $(X_{\tau}^{(t)})$ under the asymptotic measure-change defined in Section~\ref{sec:asympmeaschange}. 
This section also uses Fourier transform methods to derive asymptotics of important expectations using this characteristic function expansion. 
Section~\ref{sec:optpricefwdsmile} finally puts all the pieces together and proves 
Theorems~\ref{theorem:fwdstartoptionssmalltime} and~\ref{theorem:fwdsmilesmalltime}.

\subsection{Heston forward time-scale}\label{sec:hestfwdtime}

A straightforward application of the tower property for expectations (see also~\cite{Hong}) yields
the Heston forward logarithmic moment generating function:
\begin{equation}\label{eq:LambdaTau}
\Lambda_{\tau}^{(t)}(u,1)
:=\log\mathbb{E}\left(\E^{u X_{\tau}^{(t)}}\right)
=
A\left(u,\tau\right)+\frac{B(u,\tau)}{1-2\beta_t B(u,\tau)}v\E^{-\kappa t}
-\frac{2\kappa\theta}{\xi^2}\log\left(1-2\beta_t B(u,\tau)\right),
\end{equation}
for all $u\in\mathcal{D}_{t,\tau}$, where 
\begin{equation}\label{eq:ABFunctions}
\left.\begin{array}{ll}
A\left(u,\tau\right) & := \displaystyle
\frac{\kappa\theta}{\xi^2}\left(\left(\kappa-\rho\xi u- d(u)\right)\tau-2\log\left(\frac{1-\gamma(u)\exp\left(-d(u)\tau\right)}{1-\gamma(u)}\right)\right),\\
B(u,\tau) & := \displaystyle
\frac{\kappa-\rho\xi u-d(u)}{\xi^2}\frac{1-\exp\left(-d(u)\tau\right)}{1-\gamma(u)\exp\left(-d(u)\tau\right)},
\end{array}
\right.
\end{equation}
\begin{equation}\label{eq:DGammaBeta}
d(u) := \left(\left(\kappa-\rho\xi u\right)^2+u\left(1-u\right)\xi^2\right)^{1/2},
\quad
\gamma(u) := \frac{\kappa-\rho\xi u-d(u)}{\kappa-\rho\xi u+d(u)},
\quad\text{and}\quad
\beta_t  := \frac{\xi^2}{4\kappa}\left(1-\E^{-\kappa t}\right).
\end{equation}
The first step is to characterise the forward time-scale in the Heston model. 
In order to achieve this we first need to understand the limiting behaviour of the re-scaled $B$ function in~\eqref{eq:ABFunctions} 
that plays a fundamental role in the analysis below. 
The following lemma shows that using $h(\tau)\equiv\sqrt{\tau}$ as a time-scale produces the only non-trivial limit for the re-scaled $B$ function. We then immediately prove Lemma~\ref{Lemma:HestFwdTimeScale} which characterises the forward time-scale in the Heston model.

\begin{lemma}\label{lemma:Blimits}
Let $h:\RR_+\to\RR_+$ be a continuous function such that $\lim_{\tau\searrow 0}h(\tau)=0$ and $a\in\mathbb{R}_+^*$. 
The following limit then holds for $B$ in~\eqref{eq:ABFunctions} for all $u\in\mathbb{R}^*$:
$$
   \lim_{\tau\to 0}B(u/h(\tau),\tau)=\left\{ 
  \begin{array}{l l}
    \text{undefined}, & \quad \text{if}\quad \tau/h(\tau)\nearrow \infty,\\
+\infty, & \quad \text{if}\quad h(\tau)\equiv a\tau,\\
+\infty, & \quad \text{if}\quad \sqrt{\tau}/h(\tau)\nearrow \infty
\quad\text{and}\quad  \tau/h(\tau)\searrow  0,\\
     0, & \quad \text{if}\quad \sqrt{\tau}/h(\tau)\searrow  0,\\
u^2/(2a^2), & \quad \text{if}\quad h(\tau)\equiv a\tau^{1/2}.\\
\end{array} \right.
$$
\end{lemma}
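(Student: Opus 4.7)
The strategy is to reduce all five regimes of Lemma~\ref{lemma:Blimits} to a single algebraic form for $B$ and then read off the asymptotics of its leading factors. Multiplying numerator and denominator in~\eqref{eq:ABFunctions} by $\kappa-\rho\xi v+d(v)$ and using the identities $(\kappa-\rho\xi v)^2-d(v)^2=v(v-1)\xi^2$ and $\gamma(v)(\kappa-\rho\xi v+d(v))=\kappa-\rho\xi v-d(v)$, one gets the symmetric representation
$$
B(v,\tau)=\frac{v(v-1)\bigl(1-\E^{-d(v)\tau}\bigr)}{(\kappa-\rho\xi v)\bigl(1-\E^{-d(v)\tau}\bigr)+d(v)\bigl(1+\E^{-d(v)\tau}\bigr)},
$$
which removes the apparent singularities of $\gamma$ and is the only algebraic manipulation required. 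Specialising to $v=u/h(\tau)$, the condition $|\rho|<1$ together with $d(v)^2=-v^2\xi^2(1-\rho^2)+\mathcal{O}(v)$ as $|v|\to\infty$ gives $d(v)=\I\,\xi|v|\sqrt{1-\rho^2}(1+o(1))$, so $|d(v)\tau|\asymp|u|\,\tau/h(\tau)$. Hence the ratio $\tau/h(\tau)$ decides whether $\E^{-d(v)\tau}$ is perturbative.

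The first block of three cases is handled when $\tau/h(\tau)\searrow 0$. Expanding $1-\E^{-d(v)\tau}\sim d(v)\tau$ and $1+\E^{-d(v)\tau}\to 2$, the factor $d(v)$ cancels from the symmetric representation and leaves
$$
B(v,\tau)\sim\frac{v(v-1)\tau}{2+(\kappa-\rho\xi v)\tau}\longrightarrow\frac{u^2\,\tau}{2\,h^2(\tau)}.
$$
This single expression delivers three of the five regimes at once: the limit is $0$ when $\sqrt{\tau}/h(\tau)\searrow 0$, equals $u^2/(2a^2)$ when $h(\tau)\equiv a\sqrt{\tau}$, and diverges to $+\infty$ when $\sqrt{\tau}/h(\tau)\nearrow\infty$ with $\tau/h(\tau)\searrow 0$. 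I would also keep the quadratic term $\tfrac{1}{2}(d(v)\tau)^2$ in the expansion of $\E^{-d(v)\tau}$ to check that the $o(1)$ correction is indeed negligible against the leading $v^2\tau$ contribution in each sub-regime.

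The two remaining regimes force one to keep $\E^{-d(v)\tau}$ in its non-perturbative form. When $h(\tau)\equiv a\tau$, $d(v)\tau\to\I\,\xi|u|\sqrt{1-\rho^2}/a=:\I\delta$, a fixed point on the unit circle with $\delta>0$; plugging into the symmetric representation together with $v\sim u/(a\tau)$, $d(v)\sim\I\,\xi v\sqrt{1-\rho^2}$ and $\kappa-\rho\xi v\sim -\rho\xi v$, the numerator is of order $v^2$ while the denominator is of order $v$, so $|B(v,\tau)|\sim C/\tau\to+\infty$ with an explicit nonzero constant $C(u,a,\rho,\xi)$. When $\tau/h(\tau)\nearrow\infty$, on the other hand, $d(v)\tau$ is purely imaginary with diverging modulus, so $\E^{-d(v)\tau}$ oscillates without a limit; neither numerator nor denominator of $B$ converges and the limit does not exist.

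The main obstacle is not analytic depth but bookkeeping: in the linear regime one has to check that the constant $C$ is nonzero, which reduces to observing that $1-\E^{-\I\delta}$ and $1+\E^{-\I\delta}$ cannot vanish simultaneously (guaranteed by $\delta>0$); and in the boundary sub-regime $\sqrt{\tau}/h(\tau)\nearrow\infty$, $\tau/h(\tau)\searrow 0$, the higher-order terms in the Taylor expansion of $\E^{-d(v)\tau}$ must be compared carefully with $v^2\tau$ to confirm that the $u^2\tau/(2h^2(\tau))$ leading behaviour is genuine.
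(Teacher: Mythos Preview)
Your proof is correct and follows a route that is close in spirit to the paper's but algebraically tidier. The paper expands $d(u/h(\tau))$ and $\gamma(u/h(\tau))$ separately (obtaining constants $d_0,d_1,g_0,g_1$) and then substitutes these into the original formula for~$B$. You instead first rewrite~$B$ in the symmetric form
\[
B(v,\tau)=\frac{v(v-1)\bigl(1-\E^{-d(v)\tau}\bigr)}{(\kappa-\rho\xi v)\bigl(1-\E^{-d(v)\tau}\bigr)+d(v)\bigl(1+\E^{-d(v)\tau}\bigr)},
\]
which eliminates $\gamma$ altogether. This makes the cancellation of $d(v)$ in the perturbative regime $\tau/h(\tau)\to 0$ completely transparent and collapses the three sub-cases (iii)--(v) into the single expression $u^2\tau/(2h(\tau)^2)$, exactly as the paper obtains in its equation for $B(u/h(\tau),\tau)$ via a longer computation. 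What you gain is brevity; what the paper's route gains is that the intermediate expansions of $d$ and $\gamma$ are reused verbatim in later lemmas (e.g.\ Lemma~\ref{lemma:ABHat}).

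One small caveat on the linear regime $h(\tau)\equiv a\tau$: your stated check that ``$1-\E^{-\I\delta}$ and $1+\E^{-\I\delta}$ cannot vanish simultaneously'' is not quite the right non-degeneracy condition. The leading coefficient of the \emph{denominator} is a specific linear combination $-\rho\xi(1-\E^{-\I\delta})+\I\xi\bar\rho\,\sgn(u)(1+\E^{-\I\delta})$, and this can vanish for isolated values of $\delta$ (namely when $\tan(\delta/2)=\bar\rho\,\sgn(u)/\rho$) even though neither factor is zero. Your conclusion survives, however: at such points the numerator's leading coefficient $1-\E^{-\I\delta}$ does \emph{not} vanish (that would force $\delta\in 2\pi\mathbb{Z}$, hence $\tan(\delta/2)=0\ne\bar\rho/\rho$ since $|\rho|<1$), so $B$ still diverges, only at rate $v^2$ rather than $v$. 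If you carry your computation one step further you recover precisely the paper's explicit constant $\zeta(u/a)=\dfrac{u/a}{\bar\rho\xi\cot(\xi\bar\rho u/(2a))-\rho\xi}$, which it states (also without full justification) is strictly positive.
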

\begin{proof}
As $\tau$ tends to zero we have the following asymptotic behaviours
for the functions $d$ and $\gamma$ defined in~\eqref{eq:DGammaBeta}:
\begin{equation}\label{eq:dgasymp}
\begin{array}{rl}
d\left(u/h(\tau)\right)
 & =\displaystyle \frac{1}{h(\tau)}\left(\kappa^2h(\tau)^2+u h(\tau)\left(\xi-2\kappa\rho\right)-\bar{\rho}^2\xi^2u^2\right)^{1/2}
 = \frac{\I u}{h(\tau)}d_0 + d_1 + \mathcal{O}(h(\tau)),  \\
\gamma\left(u/h(\tau)\right)
 & =\displaystyle \frac{\kappa h(\tau)-\rho\xi u-\I u d_0 - d_1h(\tau) + \mathcal{O}\left(h(\tau)^2\right)}{\kappa h(\tau)-\rho\xi u +\I u d_0 + d_1h(\tau) + \mathcal{O}\left(h(\tau)^2\right)}
= g_0-\frac{\I h(\tau)}{u}g_1+\mathcal{O}\left(h(\tau)^2\right), 
\end{array}
\end{equation}
where we have set
\begin{align}
d_0:= \bar{\rho}\xi\, \sgn(u),\qquad 
d_1:=\frac{ \I\left(2 \kappa  \rho -\xi \right)\sgn(u)}{2 \bar{\rho}},\qquad
g_0:= \frac{\I \rho -\bar{\rho}\,\sgn(u)}{\I\rho+\bar{\rho}\,\sgn(u)}\qquad
g_1:= \frac{\left(2 \kappa - \rho\xi\right) \sgn(u) }{\xi  \bar{\rho}\left(\bar{\rho }+\I \rho\,\sgn(u) \right)^2}.
\end{align}
First let $\tau/h(\tau)\to\infty$.
Then
$\exp\left({-d\left(u/h(\tau)\right)\tau}\right)=\exp\left({-\I \tau \bar{\rho}\xi|u|/h(\tau)+\mathcal{O}\left(\tau\right)}\right),$
and so the limit is undefined (complex infinity). 
Next let $\tau/h(\tau)\equiv 1/a.$ Using~\eqref{eq:dgasymp} we see that
$$
B(u/h(\tau),\tau) = 
-\left(\frac{u\rho+\I\bar{\rho}|u|}{\xi h\left(\tau\right)}\right)\frac{1-\E^{-\I\xi\bar{\rho}|u|/a}}{1-g_0  \E^{-\I\xi\bar{\rho}|u|/a}}+\mathcal{O}\left(1\right)
=a \zeta(u/a)/h(\tau)+\mathcal{O}\left(1\right),
$$	
where
$\zeta(u):=u\left(\bar{\rho}\xi\cot\left(u\xi\bar{\rho}/2\right)-\rho\xi\right)^{-1}$,
which is strictly positive for $u\in\mathbb{R}^*$ and $\zeta(0)=0$.
It follows that the limit in this case is infinite. 
Next let $\tau/h(\tau)\to0$. Here we can write
\begin{align}\label{eq:Bhasymp}
 \nonumber B(u/h(\tau),\tau) &= 
\left(-\frac{\rho u+\I\bar{\rho}|u|}{\xi h(\tau)}+\mathcal{O}\left(1\right)\right)\left(\left(\frac{1}{g_0 -1}+\mathcal{O}\left(h(\tau)\right)\right)\left(\frac{-\I\tau \bar{\rho}\xi|u|}{h(\tau)}+\mathcal{O}\left(\tau\right)\right)+\mathcal{O}\left(\left(\frac{\tau}{h(\tau)}\right)^2\right)\right) \\ \nonumber
&=\left(\frac{\rho u+\I\bar{\rho}|u|}{\xi}\right)\left(\frac{1}{g_0 -1}\right)\frac{\I \tau \bar{\rho}\xi|u|}{h(\tau)^2}+\mathcal{O}\left(\tau/h(\tau)\right) \\ 
&=\frac{u^2}{2}\left(\frac{\sqrt{\tau}}{h\left(\tau\right)}\right)^2+\mathcal{O}\left(\tau/h(\tau)\right).
\end{align}
If $\sqrt{\tau}/h(\tau)$ tends to infinity, so does $B(u/h(\tau),\tau)$. 
When $\sqrt{\tau}/h(\tau)$ tends to zero then $B(u/h(\tau),\tau)$ does as well.
If $\sqrt{\tau}/h(\tau)$ converges to a constant $1/a$, then $B(u/h(\tau),\tau)$ converges to~$u^2/(2a^2)$,
and the lemma follows.
\end{proof}

\begin{proof}[Proof of Lemma~\ref{Lemma:HestFwdTimeScale}]

For any $t>0$, the random variable $V_t$ in~\eqref{eq:Heston} is distributed as 
$\beta_t$ times a non-central chi-square random variable 
with $q=4\kappa\theta/\xi^2>0$ degrees of freedom 
and non-centrality parameter $\lambda=v \E^{-\kappa t}/\beta_t>0$. 
It follows that the corresponding moment generating function is given by
\begin{equation}\label{eq:HestonVarianceMGF}
\Lambda_{t}^{V}(u)
:=\mathbb{E}\left(\E^{uV_t}\right)
=\exp\left({\frac{\lambda \beta_tu}{1-2\beta_tu}}\right)\left(1-2\beta_tu\right)^{-q/2},
\qquad\text{for all }u<\frac{1}{2\beta_t}.
\end{equation}
The re-normalised Heston forward logarithmic moment generating function is then computed as
\begin{align*}
\Lambda_{\tau}^{(t)}(u,h(\tau))/h(\tau)
&=\log\mathbb{E}\left[\E^{u\left(X_{ t+\tau}-X_{t}\right)/h(\tau)}\right]
= \log\mathbb{E}\left[\mathbb{E}\left(\E^{u\left(X_{ t+\tau}
-X_{t}\right)/h(\tau)}|\mathcal{F}_{ t}\right)\right] \\ \nonumber
&=\log\mathbb{E}\left(\E^{A\left(u/h(\tau),\tau\right)
+B\left(u/h(\tau),\tau\right)V_{ t}}\right)
=A\left(u/h(\tau),\tau\right) + \log\Lambda_{t}^{V}\left(B\left(u/h(\tau),\tau\right)\right),\nonumber
\end{align*}
which agrees with~\eqref{eq:LambdaTau} when $h(\tau)\equiv 1$.
This is only valid in some effective domain~$\mathcal{D}_{t,\tau}\subset\mathbb{R}$.
The lmgf for $V_{ t}$ is well defined in
$\mathcal{D}_{t,\tau}^{V}
:=\{u\in\mathbb{R}:B\left(u/h(\tau),\tau\right)<1/(2\beta_{ t})\}$, 
and hence 
$\mathcal{D}_{t,\tau}
=\mathcal{D}_{ t,\tau}^{V} \cap \mathcal{D}_{\tau}$, 
where
$\mathcal{D}_{\tau}$ is the effective domain of the (spot) re-normalised Heston lmgf.
Consider first $\mathcal{D}_{\tau}$ for small $\tau$. 
From~\cite[Proposition 3.1]{AP07} if $\xi ^2 (u/h(\tau)-1) u/h(\tau)>(\kappa - \rho\xi u/h(\tau))^2$ 
then the explosion time~$\tau^*(u):=\sup\{t\geq 0:\mathbb{E}(\E^{u X_{t}})<\infty\}$ of the Heston mgf is
\begin{align*}
\tau^*_H\left(u/h(\tau)\right)
&=\frac{2}{\sqrt{\xi ^2 (u/h(\tau)-1) u/h(\tau)-(\kappa-\rho\xi  u/h(\tau))^2}}
\Big\{\pi\ind_{\{\rho\xi u/h(\tau)-\kappa<0\}} \Big. \\ \nonumber
& \Big.+\arctan\left(\frac{\sqrt{\xi ^2 (u/h(\tau)-1) u/h(\tau)-(\kappa-\rho\xi  u/h(\tau))^2}}{\rho\xi u/h(\tau)-\kappa }\right)\Big\}.
\end{align*}
Recall the following Taylor series expansions, for $x$ close to zero:
\begin{equation*}
\left.
\begin{array}{rll}
\displaystyle\arctan\left(\frac{1}{\rho\xi u/x-\kappa }\sqrt{\xi^2 \left(\frac{u}{x}-1\right)
 \frac{u}{x}-\left(\kappa -\xi  \rho  \frac{u}{x}\right)^2}\right)
 & =\displaystyle\sgn(u)\arctan\left(\frac{\bar{\rho}}{  \rho  }\right)+\mathcal{O}\left(x\right),
\quad & \text{if } \rho\ne 0,\\
\displaystyle\arctan\left(-\frac{1}{\kappa}\sqrt{\xi ^2 \left(\frac{u}{x}-1\right) \frac{u}{x}-\kappa^2}\right)
 & =\displaystyle-\frac{\pi}{2}+\mathcal{O}(x),
\quad & \text{if } \rho = 0.
\end{array}
\right.
\end{equation*}
As $\tau$ tends to zero 
$\xi ^2 (u/h(\tau)-1) u/h(\tau)>(\kappa-\rho\xi u/h(\tau))^2$ is satisfied since
$\rho^2<1$ and hence
$$
\tau^*_H\left(u/h(\tau)\right)=\left\{ 
  \begin{array}{l l}
\displaystyle   \frac{h(\tau)}{\xi |u|}\left(\pi\ind_{\{\rho=0\}}
+\frac{2}{\bar{\rho}}\left(\pi\ind_{\{\rho u \leq0\}}+ \sgn(u)\arctan\left(\frac{\bar{\rho}}{\rho}\right)\right)\ind_{\{\rho\neq0\}}+\mathcal{O}\left(h(\tau)\right)\right), 
& \quad \text{if } u \neq 0,\\
  \infty, & \quad \text{if }u=0.
    \\
\end{array} \right.
$$
Therefore, for $\tau$ small enough, we have $\tau^*_H\left(u/h(\tau)\right)>\tau$ for all
$u\in\mathbb{R}$ if $\tau/h\left(\tau\right)$ tends to zero and $\tau^*_H\left(u/h(\tau)\right)>\tau$ for all $u\in\left(u_{-},u_{+}\right)$ if $h\left(\tau\right)\equiv a\tau$, where 
\begin{align*}
u_{-}
 & :=\frac{2a}{\bar{\rho}\xi\tau}\arctan\left(\frac{\bar{\rho}}{\rho}\right)\ind_{\{\rho<0\}}
-\frac{\pi a}{\xi\tau}\ind_{\{\rho=0\}}
+\frac{2 a}{\bar{\rho}\xi\tau}\left(\arctan\left(\frac{\bar{\rho}}{\rho}\right)-\pi\right)\ind_{\{\rho>0\}},\\
u_{+}
 & :=\frac{2 a}{\bar{\rho}\xi\tau}\left(\arctan\left(\frac{\bar{\rho}}{\rho}\right)
+\pi\right)\ind_{\{\rho<0\}}+\frac{\pi a}{\xi\tau}\ind_{\{\rho=0\}}
+\frac{2 a}{\bar{\rho}\xi\tau}\arctan\left(\frac{\bar{\rho}}{\rho}\right)\ind_{\{\rho>0\}}.
\end{align*}
If $\tau/h(\tau)$ tends to infinity, then $\tau^*_H\left(u/h(\tau)\right)\leq\tau$ for all~$u\in\mathbb{R}^*$. 
We are also required to find $\mathcal{D}_{ t,\tau}^{V}$ for small~$\tau$. 
Using Lemma~\ref{lemma:Blimits} we see 
that if $h(\tau)\equiv a\tau^{1/2}$ then $\lim_{\tau\searrow 0} \mathcal{D}_{t,\tau}^{V}=\{u\in\mathbb{R}:|u|<a/{\sqrt{\beta_t}}\}$. 
By the limit of a set we precisely mean the following:
$$
\liminf_{\tau\searrow 0}\mathcal{D}_{ t,\tau}^{V}
:= \bigcup_{\tau>0}\bigcap_{s\leq\tau}\mathcal{D}_{ t,s}^{V}
 = \bigcap_{\tau>0}\bigcup_{s\leq\tau}\mathcal{D}_{ t,s}^{V}
=:\limsup_{\tau\searrow 0}\mathcal{D}_{ t,\tau}^{V}.
$$
If $\tau^{1/2}/h(\tau)$ tends to infinity then $\lim_{\tau\searrow 0}\mathcal{D}_{t,\tau}^{V}=\{0\}$ and if it tends to zero, then $\lim_{\tau\searrow 0}\mathcal{D}_{t,\tau}^{V}=\mathbb{R}$. 
The limiting domains in the lemma follow after taking the appropriate intersections. 
%We say that the sequence of sets $\mathcal{D}_\tau\subset \RR$ converges to $\mathcal{D}\subset \RR$
%as $\tau$ tends to zero if for any $z\in\mathcal{D}$, there exists a sequence $(z_\tau)_{\tau >0}$
%such that
%$z_\tau \in\mathcal{D}_\tau$ for all $\tau>0$ and
%$\lim_{\tau\searrow 0}z_\tau = z$.
%Alternatively, we could use the very definitions of $\limsup$ and $\liminf$ of sequences of sets.
Next we move on to the limits. 
We only consider the cases where $h(\tau)\equiv a\tau^{1/2}$
and where $\tau^{1/2}/h(\tau)$ tends to zero since these are the only cases for which the forward logarithmic moment generating function is defined. 
Using~\eqref{eq:Bhasymp} we see as $\tau$ tends to zero
\begin{equation}\label{eq:fwdmgf1asymp}
  \log\left(1-2\beta_t B\left(u/h(\tau),\tau\right)\right)=\frac{B(u/h(\tau),\tau)v\E^{-\kappa t}}{1-2\beta_t B(u/h(\tau),\tau)}=\left\{ 
  \begin{array}{l l}
    \mathcal{O}(1), & \quad \text{if}\quad h(\tau)\equiv a\tau^{1/2},\\
\mathcal{O}(\tau/h(\tau)), & \quad \text{if}\quad \sqrt{\tau}/h(\tau) \searrow 0.\\
\end{array} \right.
\end{equation}
The lemma follows from this and the fact that the function $A$ in~\eqref{eq:ABFunctions} satisfies
$A(u/h(\tau),\tau)=\mathcal{O}\left((\tau/h(\tau))^2\right)$.
\end{proof}

\subsection{Asymptotic time-dependent measure-change }\label{sec:asympmeaschange}

In this section we define the fundamental asymptotic time-dependent measure-change in~\eqref{eq:MeasureChangeSmallTime} and derive expansions for critical functions related to this measure-change. 
In order to proceed with this program we first need to prove some technical lemmas.
We use our forward time-scale and define the following rescaled quantities:
\begin{equation}\label{eq:RenLamAB}
\Lambda^{(t)}_\tau (u):=\Lambda^{(t)}_\tau (u,\sqrt{\tau}),\quad \widehat{A}(u):=A(u/\sqrt{\tau},\tau),\quad
\widehat{B}(u):=B(u/\sqrt{\tau},\tau),
\end{equation}
with $\Lambda^{(t)}_\tau$, $A$ and $B$ defined in~\eqref{eq:LambdaTau} and~\eqref{eq:ABFunctions} respectively. The following lemma gives the asymptotics of the re-scaled quantities $\widehat{A}$, $\widehat{B}$ as $\tau$ tends to zero:

\begin{lemma}\label{lemma:ABHat}
The following expansions hold for all $u\in\mathcal{D}_{\Lambda}$ as $\tau$ tends to zero
($\widehat{B}_1$ was defined in~\eqref{eq:B1hat}):
\begin{equation}
\widehat{B}(u)=\frac{u^2}{2}+\widehat{B}_1(u)\sqrt{\tau}+\mathcal{O}(\tau),\qquad 
\widehat{A}(u)=\frac{u^2\kappa\theta\tau}{4}+\mathcal{O}(\tau^{3/2}).
\end{equation}
\end{lemma}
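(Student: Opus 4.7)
The plan is to Taylor-expand $A(v,\tau)$ and $B(v,\tau)$ in $\tau$ around $\tau=0$ for fixed $v$, then substitute $v=u/\sqrt{\tau}$ and read off the resulting powers of $\sqrt{\tau}$. The key tool is the Riccati ODE that underlies the explicit formulas in~\eqref{eq:ABFunctions}: $B$ solves $\partial_\tau B = \tfrac{1}{2}v(v-1) + (\rho\xi v - \kappa)B + \tfrac{\xi^2}{2}B^2$ with $B(v,0)=0$, while $\partial_\tau A = \kappa\theta\,B$ with $A(v,0)=0$. Writing $B(v,\tau)=\sum_{n\geq 1}b_n(v)\tau^n$, the coefficients $b_n$ are polynomials in~$v$, and a short induction using the quadratic term yields the degree bound $\deg b_n = n+1$. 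The first two are directly computed: $b_1(v) = v(v-1)/2$ and $b_2(v) = (\rho\xi v - \kappa)v(v-1)/4$.

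Substituting $v = u/\sqrt{\tau}$, the term $b_n(u/\sqrt{\tau})\tau^n$ is a polynomial in $\sqrt{\tau}$ whose lowest power is $\tau^{(n-1)/2}$ (from the leading monomial $v^{n+1}\tau^n = u^{n+1}\tau^{(n-1)/2}$). In particular, only $n=1$ and $n=2$ contribute up to order $\sqrt{\tau}$: one finds $b_1(u/\sqrt{\tau})\tau = u^2/2 - u\sqrt{\tau}/2$ and $b_2(u/\sqrt{\tau})\tau^2 = \rho\xi u^3\sqrt{\tau}/4 + O(\tau)$, summing to $\widehat{B}(u) = u^2/2 + \tfrac{u}{4}(u^2\rho\xi - 2)\sqrt{\tau} + O(\tau)$, which matches~\eqref{eq:B1hat}. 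For $\widehat{A}$, integrating the ODE gives $A(v,\tau) = \kappa\theta \sum_{n\geq 1}b_n(v)\tau^{n+1}/(n+1)$; evaluated at $v=u/\sqrt{\tau}$ this has lowest power $\tau^{(n+1)/2}$, so only $n=1$ contributes up to order~$\tau$, yielding $\widehat{A}(u) = \kappa\theta u^2 \tau/4 + O(\tau^{3/2})$.

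The only subtle point is the degree bound $\deg b_n = n+1$; everything else is elementary power-counting. The bound follows by induction: if $\deg b_j \leq j+1$ for $j<n$, then the linear term $(\rho\xi v - \kappa)b_{n-1}$ in the Riccati recurrence has degree at most $n+1$, while the quadratic contribution $\sum_{i+j=n-1}b_i b_j$ has degree at most $(i+1)+(j+1)=n+1$. Once this is in hand, the formal substitution $v=u/\sqrt{\tau}$ and term-by-term comparison is completely routine; crucially, no branch-of-square-root issues arise because we never have to take $\sqrt{d^2}$ when working with the Riccati Taylor coefficients themselves.
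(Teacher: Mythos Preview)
Your approach via the Riccati ODE is genuinely different from the paper's, which instead substitutes the asymptotics of $d$ and $\gamma$ from~\eqref{eq:dgasymp} directly into the closed-form expressions~\eqref{eq:ABFunctions}. Your route is cleaner in that it never touches the square root in~$d$, so no branch bookkeeping is needed, and the degree bound $\deg b_n\le n+1$ makes higher-order terms mechanical. The paper's approach, by contrast, is more hands-on but immediately rigorous because every step is a Taylor expansion of a smooth function on a fixed compact set.

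There is, however, a gap in your argument. The degree bound tells you that the \emph{formal} substitution $v=u/\sqrt\tau$ into $\sum_{n\ge1}b_n(v)\tau^n$ produces a power series in $\sqrt\tau$ with the stated leading coefficients; it does not by itself show that the actual function $\widehat B(u)=B(u/\sqrt\tau,\tau)$ differs from this truncation by $\mathcal O(\tau)$. The Taylor expansion of $B(v,\cdot)$ at $\tau=0$ has radius of convergence of order $1/|v|$ (the Heston explosion time), so setting $v=u/\sqrt\tau$ takes you outside any fixed disc, and the rearranged series requires justification. Concretely, you need a uniform remainder bound of the type $\bigl|B(v,\tau)-\sum_{n<N}b_n(v)\tau^n\bigr|\le C_N(1+|v|)^{N+1}\tau^N$ for, say, $|v|\sqrt\tau$ bounded; the degree bound alone does not give this, since it says nothing about the \emph{size} of the leading coefficients $b_{n,n+1}$. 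One quick fix is to note that the $b_{n,n+1}$ are exactly the Taylor coefficients of the solution to the limiting Riccati ODE $C'=\tfrac12+\rho\xi C+\tfrac{\xi^2}{2}C^2$, $C(0)=0$, which is analytic at $0$, so $|b_{n,n+1}|\le Mr^{-n}$ and the tail is controlled (and similarly for the sub-leading coefficients). Alternatively, observe once from the explicit formula that $\widehat B(u)$ extends analytically in $\sqrt\tau$ to $0$; then your formal computation automatically yields the true Taylor coefficients. Either repair is short, but one of them is needed.
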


\begin{proof}
From the definition of $A$ in~\eqref{eq:ABFunctions} and the asymptotics in~\eqref{eq:dgasymp} with $h(\tau)\equiv\sqrt{\tau}$ we obtain
\begin{align} 
\widehat{A}(u) := A\left(u/\sqrt{\tau},\tau\right) 
 & = 
\frac{\kappa\theta}{\xi^2}\left(\left(\kappa-\frac{\rho\xi u}{\sqrt{\tau}}- d\left(\frac{u}{\sqrt{\tau}}\right)\right)\tau-2\log\left(\frac{1-\gamma(u/\sqrt{\tau})
\exp\left(-d(u/\sqrt{\tau})\tau\right)}{1-\gamma(u/\sqrt{\tau})}\right)\right) \\ \nonumber
 & = \frac{\kappa\theta}{\xi^2}
\left(\left(\kappa-\frac{\rho\xi u}{\sqrt{\tau}}-\frac{\I u d_0}{\sqrt{\tau}}-d_1 + \mathcal{O}(\sqrt{\tau})\right)\tau \right.\\ \nonumber
& \left. -2\log\left(\frac{1-\left(g_0-\I\sqrt{\tau} g_1/u+\mathcal{O}(\tau)\right)\exp\left(-\I u d_0 \sqrt{\tau} -  d_1 \tau +\mathcal{O}(\tau^{3/2})\right)}{1-\left(g_0-\I\sqrt{\tau} g_1/u+\mathcal{O}(\tau)\right)}\right)\right) \\ \nonumber
&=u^2\theta\kappa\tau/4+\mathcal{O}(\tau^{3/2}).
\end{align}
Substituting the asymptotics for $d$ and $\gamma$ in ~\eqref{eq:dgasymp}
we further obtain
$$
\frac{1-\exp\left(-d(u/\sqrt{\tau})\tau\right)}
{1-\gamma(u/\sqrt{\tau})\exp\left(-d(u/\sqrt{\tau})\tau\right)}
 = \frac{1-\exp\left(-\I u d_0 \sqrt{\tau} -  d_1 \tau +\mathcal{O}(\tau^{3/2})\right)}{1-\left(g_0-\I\sqrt{\tau} g_1/u+\mathcal{O}(\tau)\right)
\exp\left(-\I u d_0 \sqrt{\tau} -  d_1 \tau +\mathcal{O}(\tau^{3/2})\right)},
$$
and therefore using the definition of $B$ in~\eqref{eq:ABFunctions} we obtain
\begin{align}
\widehat{B}(u) := B\left(\frac{u}{\sqrt{\tau}},\tau\right) 
 &= \frac{\kappa-\rho\xi u/\sqrt{\tau}-d(u/\sqrt{\tau})}{\xi^2}\frac{1-\exp\left(-d\left(u/\sqrt{\tau}\right)\tau\right)}{1-\gamma\left(u/\sqrt{\tau}\right)\exp\left(-d\left(u/\sqrt{\tau}\right)\tau\right)} \\ \nonumber
 & = -\frac{\rho\xi u+\I u d_0}{\xi^2}\frac{\I u d_0}{1-g_0} + \widehat{B}_1(u)\sqrt{\tau}+\mathcal{O}(\tau)
=\frac{u^2}{2}+\widehat{B}_1(u)\sqrt{\tau}+\mathcal{O}(\tau).
\end{align}
\end{proof}

It is still not clear what benefit the forward time-scale has given us since the limiting lmgf is still degenerate.
Firstly, even though the limiting lmgf is zero on a bounded interval, the re-scaled forward lmgf for fixed $\tau>0$ is still steep on the domain of definition which implies the existence of a unique solution $u^*_{\tau}(k)$ to the equation 
\begin{equation}\label{eq:u^*tausmalltime}
\partial_{u}\Lambda^{(t)}_\tau  (u^*_{\tau}(k))=k.
\end{equation}
Further as $\tau$ tends to zero, $u^*_{\tau}(k)$ converges to $1/\sqrt{\beta_t}$ when $k>0$ 
and to $-1/\sqrt{\beta_t}$ when $k<0$ (see Lemma~\ref{lemma:u^*tausmalltime} below).
The key observation is that the forward time-scale ensures finite boundary points for the effective domain,
which in turn implies finite limits for $u^*_{\tau}(k)$. 
This is critical to the asymptotic analysis that follows and it will become clear that if any other time-scale were to be used the analysis would break down. 
The following lemma shows that our definition~\eqref{eq:u^*tausmalltime} of $u^*_{\tau}(k)$ is exactly what we need to conduct an asymptotic analysis in this degenerate case.
\begin{lemma} \label{lemma:u^*tausmalltime}
For any $k\in\RR$, $\tau>0$, the equation~\eqref{eq:u^*tausmalltime} 
admits a unique solution~$u^*_{\tau}(k)$;
as $\tau$ tends to zero, it converges to $1/\sqrt{\beta_t}$ ($-1/\sqrt{\beta_t}$) when $k>0$ ($k<0$),
to zero when~$k=0$,
and $u^*_{\tau}(k)\in\mathcal{D}_{\Lambda}^{o}$ for~$\tau$ small enough.
\end{lemma}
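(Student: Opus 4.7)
My plan rests on three pillars: strict convexity combined with steepness of $\Lambda^{(t)}_\tau$ (which yields existence and uniqueness of $u^*_\tau(k)$), a direct moment computation for the at-the-money case, and asymptotic matching near the boundary of the effective domain for the out-of-the-money case.

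First I would observe that, for each fixed $\tau>0$, $u\mapsto\Lambda^{(t)}_\tau(u)$ is real-analytic and strictly convex on the interior $(\underline{u}_\tau,\bar{u}_\tau)$ of its effective domain, which follows from the non-degeneracy of $X^{(t)}_\tau$ under Heston. The endpoints $\underline{u}_\tau,\bar{u}_\tau$ are precisely the roots of $2\beta_t\widehat{B}(u)=1$, and~\eqref{eq:LambdaTau} together with Lemma~\ref{lemma:ABHat} shows that $\Lambda^{(t)}_\tau(u)\to+\infty$ at either endpoint since both the rational term $\widehat{B}/(1-2\beta_t\widehat{B})$ and the logarithmic term blow up. Essential smoothness then forces $\partial_u\Lambda^{(t)}_\tau$ to be a strictly increasing bijection from $(\underline{u}_\tau,\bar{u}_\tau)$ onto $\mathbb{R}$, which gives a unique solution $u^*_\tau(k)$ to~\eqref{eq:u^*tausmalltime} for every $k\in\mathbb{R}$.

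For the limit at $k=0$, I would use that
\begin{equation*}
\partial_u\Lambda^{(t)}_\tau(0)=\mathbb{E}\bigl[X^{(t)}_\tau\bigr]=-\frac{1}{2}\int_t^{t+\tau}\mathbb{E}[V_s]\,\D s=\mathcal{O}(\tau),
\end{equation*}
while the second derivative $\partial_{uu}\Lambda^{(t)}_\tau(0)=\sqrt{\tau}\,\mathrm{Var}(X^{(t)}_\tau)/\tau$ is, by a similar direct calculation, bounded below by a positive multiple of $\sqrt{\tau}$. A monotonicity argument (or equivalently one step of the Newton scheme) then forces $u^*_\tau(0)=\mathcal{O}(\sqrt{\tau})\to 0$ as $\tau\searrow 0$.

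For $k\neq 0$, I would differentiate $\Lambda^{(t)}_\tau$ explicitly to obtain
\begin{equation*}
\partial_u\Lambda^{(t)}_\tau(u)=\sqrt{\tau}\left[\widehat{A}'(u)+\frac{v\E^{-\kappa t}\widehat{B}'(u)}{(1-2\beta_t\widehat{B}(u))^2}+\frac{4\kappa\theta\beta_t}{\xi^2}\cdot\frac{\widehat{B}'(u)}{1-2\beta_t\widehat{B}(u)}\right],
\end{equation*}
and insert Lemma~\ref{lemma:ABHat}. Setting this equal to $k$ and using $\widehat{A}'(u)=\mathcal{O}(\tau)$ and $\widehat{B}'(u)=u+\widehat{B}_1'(u)\sqrt{\tau}+\mathcal{O}(\tau)$, the only way to produce a finite non-zero limit is to have $1-2\beta_t\widehat{B}(u^*_\tau(k))=\mathcal{O}(\tau^{1/4})$, since the middle term---which dominates---scales as $\sqrt{\tau}/(1-2\beta_t\widehat{B}(u^*_\tau))^2$. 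This forces $(u^*_\tau(k))^2\to 1/\beta_t$; combined with the monotonicity of $\partial_u\Lambda^{(t)}_\tau$, we obtain $u^*_\tau(k)\to\sgn(k)/\sqrt{\beta_t}=a_0(k)$. Plugging the ansatz $u^*_\tau(k)=a_0(k)+a_1(k)\tau^{1/4}+o(\tau^{1/4})$ into the identity above and matching the leading constant recovers exactly the expression for $a_1(k)$ in~\eqref{eq:aB1hatsmalltime}; since $a_0(k)$ and $a_1(k)$ have opposite signs, we conclude $|u^*_\tau(k)|<1/\sqrt{\beta_t}$ for $\tau$ small enough, i.e.\ $u^*_\tau(k)\in\mathcal{D}_\Lambda^{o}$.

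The main obstacle is precisely the matching near the boundary in the out-of-the-money case: the defining equation is near-singular, with $1-2\beta_t\widehat{B}(u^*_\tau(k))$ of order $\tau^{1/4}$, so one must expand $\widehat{B}$ to sufficient accuracy around $u=a_0(k)$ to resolve not only the limit but---crucially---the sign of the $\tau^{1/4}$-correction, which is what guarantees $u^*_\tau(k)\in\mathcal{D}_\Lambda^o$ rather than a spurious approach from outside the effective domain.
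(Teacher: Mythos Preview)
Your proposal is correct but takes a genuinely different route from the paper for the case $k\neq 0$. The paper argues softly: it records three qualitative facts (strict monotonicity and surjectivity of $\partial_u\Lambda_\tau^{(t)}$, $u_\tau^*(0)\to 0$, and $\partial_u\Lambda_\tau^{(t)}(u)\to 0$ pointwise on $\mathcal{D}_\Lambda^o$), then for $k>0$ observes that $(u_\tau^*(k))_{\tau>0}$ is eventually increasing and bounded above by $1/\sqrt{\beta_t}$, hence convergent, and rules out any limit $L<1/\sqrt{\beta_t}$ by the contradiction $k=\partial_u\Lambda_\tau^{(t)}(u_\tau^*(k))\leq\partial_u\Lambda_\tau^{(t)}(L)\to 0$. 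The membership $u_\tau^*(k)\in\mathcal{D}_\Lambda^o$ is obtained from the domain convergence $\mathcal{D}_{t,\tau}\to\mathcal{D}_\Lambda$ rather than from any expansion.

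Your approach instead differentiates the explicit formula and balances orders, forcing $1-2\beta_t\widehat{B}(u_\tau^*(k))\sim c\,\tau^{1/4}$ and then reading off both the limit $a_0(k)$ and the correction $a_1(k)$; the sign of $a_1(k)$ is what gives you $u_\tau^*(k)\in\mathcal{D}_\Lambda^o$. This is more computational but yields strictly more: you are effectively proving the first two terms of Lemma~\ref{Lemma:ustartausmalltime} simultaneously with Lemma~\ref{lemma:u^*tausmalltime}, whereas the paper separates the qualitative limit (this lemma) from the quantitative expansion (the next one). The paper's soft argument is cleaner if one only needs the limit, but your route avoids the slightly delicate monotonicity-in-$\tau$ claim and the domain-convergence step, trading them for an explicit expansion you will need anyway. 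One small point to tighten: your passage from ``the dominant term must be $\mathcal{O}(1)$'' to ``$(u_\tau^*(k))^2\to 1/\beta_t$'' implicitly uses boundedness of $u_\tau^*(k)$, which follows since $u_\tau^*(k)\in\mathcal{D}_{t,\tau}^o$ and these domains are uniformly bounded for small $\tau$; it is worth stating this.
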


%\begin{remark}
%In the limit as $\tau$ tends to zero, the function $\partial_u \Lambda_\tau^{(t)}$ 
%is null on the open interval~$\mathcal{D}_{\Lambda}^{o}$ and infinite outside.
%Therefore the equation $\partial_u \Lambda_\tau^{(t)}(u)=k$ has no solution when $k=0$ and
%an infinite number of solutions (basically the whole interval $\mathcal{D}_\Lambda^o$)
%when $k=0$.
%\end{remark}

\begin{proof}
We first start by the following claims, which can be proved using the convexity of the forward moment generating function and tedious computations;
we shall not however detail these lengthy computations here for brevity, but Figure~\ref{fig:MGFConv} below provides a visual help.
\begin{enumerate}[(i)]
\item For any $\tau>0$, the map $\partial_u \Lambda_{\tau}^{(t)}:\mathcal{D}_{t,\tau}\to\mathbb{R}$ is strictly increasing
and the image of 
$\mathcal{D}_{t,\tau}$ by $\partial_u \Lambda_{\tau}^{(t)}$ is $\mathbb{R}$;
\item For any $\tau>0$, $u_{\tau}^*(0)>0$  and $\lim_{\tau\searrow0}u_{\tau}^*(0)=0$,
i.e. the unique minimum of $\Lambda_{\tau}^{(t)}$ converges to zero;
\item For each $u\in\mathcal{D}_{\Lambda}^{o}$, $\partial_u \Lambda_{\tau}^{(t)}(u)$ converges to zero 
as $\tau$ tends to zero.
\end{enumerate}
Now, choose $k>0$ (analogous arguments hold for $k<0$).
It is clear from~(i) that ~\eqref{eq:u^*tausmalltime} admits a unique solution.
Since $\lim_{\tau\searrow0}\mathcal{D}_{t,\tau}=\mathcal{D}_{\Lambda}$, then there exists $\tau_1>0$ such that for any $\tau<\tau_1$, 
$u_{\tau}^*(k)\in\mathcal{D}_{\Lambda}^{o}$.
Note further that (i) and (ii) imply $u_{\tau}^*(k)>0$.
From~(iii) there exists $\tau_2>0$ such that the sequence $(u_{\tau}^*(k))_{\tau>0}$ 
is strictly increasing as $\tau$ goes to zero for $\tau<\tau_2$.
Now let $\tau^*=\min(\tau_1,\tau_2)$ and consider $\tau<\tau^*$.
Then $u_{\tau}^*(k)$ is bounded above by $1/\sqrt{\beta_t}$ (because $u_{\tau}^*(k)\in\mathcal{D}_{\Lambda}^{o}$) 
and therefore converges to a limit $L\in[0,1/\sqrt{\beta_t}]$. 
Suppose that  $L\ne1/\sqrt{\beta_t}$. 
Since $s\mapsto u_s^*(k)$ is strictly increasing as $s$ tends to zero (and $s<\tau^*$), 
and $\partial_u \Lambda_{\tau}^{(t)}$ is strictly increasing we have $\partial_u \Lambda_{\tau}^{(t)}(u_\tau^*(k))\leq \partial_u \Lambda_{\tau}^{(t)}(L)$;
Combining this and~(iii) yields
$$
\lim_{\tau\searrow 0}\partial_u \Lambda_{\tau}^{(t)}(u_\tau^*(k)) \leq  
\lim_{\tau\searrow 0}\partial_u \Lambda_{\tau}^{(t)}(L) = 0
\neq k,
$$
which contradicts the assumption $k>0$. 
Therefore $L=1/\sqrt{\beta_t}$ and the lemma follows.
\begin{figure}[h!tb] 
\centering
\mbox{{\includegraphics[scale=0.7]{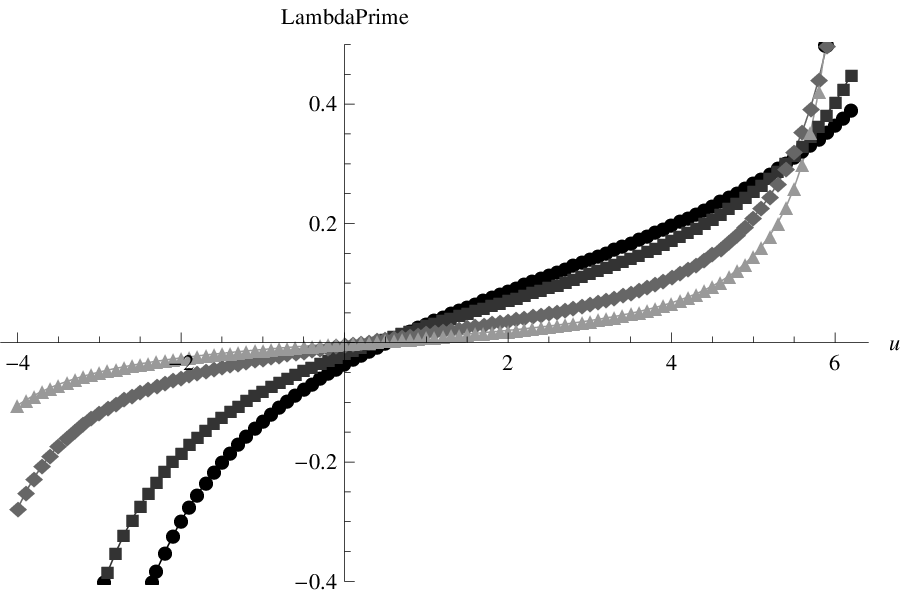}}\quad
{\includegraphics[scale=0.7]{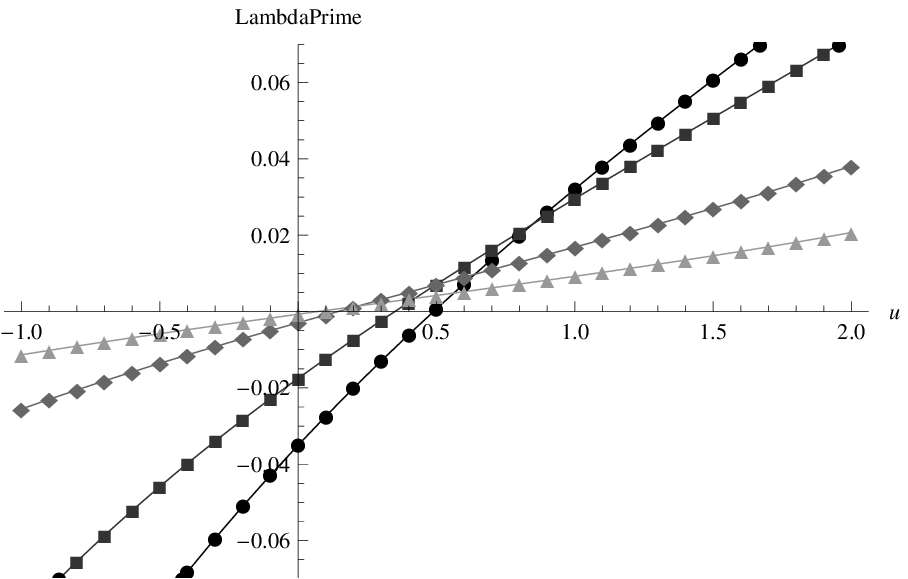}}}
\caption{Plot of $u\mapsto \partial_u \Lambda_{\tau}^{(t)}(u)$ for different values of $\tau$. 
Circles, squares, diamonds and triangles represent $\tau=1,1/2,1/12,1/50$. 
The forward-start date is $t=1$ and the Heston model parameters are $v=\theta=0.07$, $\xi=0.4$, $\rho=-0.6$, $\kappa=1$. 
The limiting domain is $(-1/{\sqrt{\beta_t}},1/{\sqrt{\beta_t}})\approx(-6.29,6.29)$. 
The right plot is a zoomed version of the left graph.}
\label{fig:MGFConv}
\end{figure}
\end{proof}

For sufficiently small $\tau$ we introduce a time-dependent change of measure by
\begin{equation}\label{eq:MeasureChangeSmallTime}
\frac{\D\mathbb{Q}_{k,\tau}}{\D\mathbb{P}}:=\exp\left( u^*_{\tau}(k)X_{\tau}^{(t)}/\sqrt{\tau}-\Lambda^{(t)}_{\tau}(u^*_{\tau}(k))/\sqrt{\tau}\right).
\end{equation}
By Lemma~\ref{lemma:u^*tausmalltime}, $u^*_\tau(k)\in\mathcal{D}_{\Lambda}^0$ for $\tau$ small enough 
and so $|\Lambda^{(t)}_{\tau}(u^*_\tau)|$ is finite since $\mathcal{D}_{\Lambda}=\lim_{\tau\searrow 0} \{u\in\mathbb{R}:|\Lambda_{\tau}^{(t)}(u)|<\infty\}$. 
Also $\D\mathbb{Q}_{k,\tau}/\D\mathbb{P}$ is almost surely strictly positive and by definition $\mathbb{E}[\D\mathbb{Q}_{k,\tau}/\D\mathbb{P}]=1$. 
Therefore~\eqref{eq:MeasureChangeSmallTime} is a valid measure change for all $k\in\mathbb{R}^*$ and sufficiently small $\tau$. 
Equation~\eqref{eq:u^*tausmalltime} can be written explicitly as 
\begin{equation}\label{eq:fundeqnsmalltime}
\frac{\sqrt{\tau} \E^{-\kappa t}}{k \xi^2} \Big[\xi ^2 \E^{\kappa  t} \widehat{A}'(u^*_{\tau}) \left(1-2 \widehat{B}(u^*_{\tau}) \beta _t\right)^2+\widehat{B}'(u^*_{\tau}) 
\left(4\kappa\theta 
	   \beta _t \E^{\kappa  t}(1-2\widehat{B}(u^*_{\tau}) \beta _t)+\xi ^2 v\right)\Big]
=\left(1 -2 \widehat{B}(u^*_{\tau}) \beta _t\right)^2,
\end{equation}
with $\widehat{A}$ and $\widehat{B}$ defined in~\eqref{eq:RenLamAB}. 
We now use this to derive an asymptotic expansion for $u^*_{\tau}$ as $\tau$ tends to zero.
\begin{lemma}\label{Lemma:ustartausmalltime}
The expansion 
$u^*_\tau(k)=a_0(k) + a_1(k)\tau^{1/4}+a_2(k)\tau^{1/2}+a_3(k)\tau^{3/4}+\mathcal{O}(\tau)$
holds for all $k\in\mathbb{R}^*$ as~$\tau$ tends to zero,
with  $a_0$, $a_1$, $a_2$ and $a_3$ defined in~\eqref{eq:aB1hatsmalltime}.
\end{lemma}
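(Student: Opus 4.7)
The plan is to substitute the proposed ansatz $u^*_\tau(k) = a_0(k) + a_1(k)\tau^{1/4} + a_2(k)\tau^{1/2} + a_3(k)\tau^{3/4} + \mathcal{O}(\tau)$ into the characterising equation~\eqref{eq:fundeqnsmalltime} and then match coefficients order by order in $\tau^{1/4}$. The reason the expansion proceeds in quarter-powers of $\tau$—rather than the more usual half- or full powers—is the \emph{degenerate} structure of the problem: the leading term $a_0(k)=\sgn(k)/\sqrt{\beta_t}$ lies on the boundary of $\mathcal{D}_\Lambda$, so that $1-\beta_t a_0^2=0$. Consequently, the factor $(1-2\widehat{B}(u^*_\tau)\beta_t)^2$ on the right-hand side of~\eqref{eq:fundeqnsmalltime} vanishes to order $\tau^{1/2}$, which matches the $\sqrt{\tau}$ prefactor on the left-hand side.

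First I would use Lemma~\ref{lemma:ABHat} to expand $\widehat{B}(u)=u^2/2+\widehat{B}_1(u)\sqrt{\tau}+\mathcal{O}(\tau)$, $\widehat{B}'(u)=u+\widehat{B}_1'(u)\sqrt{\tau}+\mathcal{O}(\tau)$, together with $\widehat{A}'(u)=u\kappa\theta\tau/2+\mathcal{O}(\tau^{3/2})$. Plugging $u=u^*_\tau$ into these, Taylor-expanding $\widehat{B}_1$ and $\widehat{B}_1'$ around $a_0$ to the required order, and grouping by powers of $\tau^{1/4}$ yields
\begin{equation*}
1-2\widehat{B}(u^*_\tau)\beta_t = A_1(k)\,\tau^{1/4} + A_2(k)\,\tau^{1/2} + A_3(k)\,\tau^{3/4} + \mathcal{O}(\tau),
\end{equation*}
where $A_1 = -2\beta_t a_0 a_1$, and $A_2, A_3$ are linear expressions in $a_2, a_3$ with coefficients involving $a_0, a_1$, $\widehat{B}_1(a_0)$ and $\widehat{B}_1'(a_0)$. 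The contribution $\xi^2 \E^{\kappa t}\widehat{A}'(u^*_\tau)(1-2\widehat{B}\beta_t)^2$ on the left-hand side is $\mathcal{O}(\tau^{3/2})$ and so does not enter the matching up to the order $a_3$.

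Matching the coefficients of $\tau^{1/2}$ in~\eqref{eq:fundeqnsmalltime} yields a quadratic equation $4\beta_t^2 a_0^2 a_1^2 = a_0 v\E^{-\kappa t}/k$ for $a_1^2$; its positive square root, combined with the sign constraint $u^*_\tau\in\mathcal{D}_\Lambda^o$ from Lemma~\ref{lemma:u^*tausmalltime}, which forces $\sgn(a_1)=-\sgn(a_0)$, pins down $a_1(k)$ as in~\eqref{eq:aB1hatsmalltime}. Matching at order $\tau^{3/4}$ gives a linear equation in $a_2$: the term $-\kappa\theta/(k\xi^2)$ arises from the $4\kappa\theta\beta_t\E^{\kappa t}\widehat{B}'(u^*_\tau)(1-2\widehat{B}\beta_t)$ contribution on the left, and $-\widehat{B}_1(a_0)/a_0$ comes from the $A_2$ term on the right. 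Matching at order $\tau$ then determines $a_3(k)$ linearly, producing the explicit formula in~\eqref{eq:aB1hatsmalltime}.

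The main obstacle is purely algebraic bookkeeping: because the right-hand side is a square and the left-hand side carries a $\sqrt{\tau}$ prefactor, each matching level requires products of lower-order coefficients, and one must carefully maintain the correct order of Taylor expansion both in $u$ (for $\widehat{B}_1, \widehat{B}_1'$ around $a_0$) and in $\tau$ (for $\widehat{A},\widehat{B}$). To justify the remainder estimate $\mathcal{O}(\tau)$—rather than a merely formal series—one substitutes $u=a_0+\tau^{1/4}v$ into~\eqref{eq:fundeqnsmalltime} and divides both sides by $\tau^{1/2}$. The resulting equation $G(v,\tau^{1/4})=0$ is smooth in a neighbourhood of $(v,\tau^{1/4})=(a_1,0)$, and the non-vanishing of $\partial_v G$ at this base point (equivalent to $a_1\ne 0$, which holds for any $k\in\mathbb{R}^*$) together with the implicit function theorem delivers smooth dependence of $v$ on $\tau^{1/4}$ and hence the claimed remainder order.
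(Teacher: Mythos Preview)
Your proposal is correct and follows essentially the same route as the paper: assume the ansatz, plug in the expansions of $\widehat{A}$, $\widehat{B}$, $\widehat{B}'$ from Lemma~\ref{lemma:ABHat}, match orders in $\tau^{1/4}$ (quadratic for $a_1$ with the sign fixed by $u^*_\tau\in\mathcal{D}_\Lambda^o$, then linear for $a_2,a_3$), and invoke the implicit function theorem for rigour. The only minor difference is in the IFT step: the paper applies it to $(u,\tau)\mapsto\partial_u\Lambda_\tau^{(t)}(u)-k$ at interior points $\bar\tau>0$ and then argues by uniqueness of the expansion, whereas your blow-up $u=a_0+\tau^{1/4}v$ followed by IFT at the base point $(a_1,0)$ is a cleaner and more direct way to obtain the $\mathcal{O}(\tau)$ remainder.
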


\begin{proof}
Existence and uniqueness was proved in Lemma~\ref{lemma:u^*tausmalltime} and so we assume the result as an ansatz.
Consider $k>0$.
From Lemma~\ref{lemma:u^*tausmalltime} it is clear that $a_0(k)=1/\sqrt{\beta_t}$.
The ansatz and Lemma~\ref{lemma:ABHat} then imply the following asymptotics as $\tau$ tends to zero 
(we drop here the $k$-dependence):
\begin{equation}\label{eq:AprimeBBprimeAsymp}
\begin{array}{ll}
\widehat{B}(u^*_{\tau}) & = \displaystyle \frac{1}{2 \beta _t}
+a_0 a_1\tau^{1/4}
+ r\tau^{1/2} +\left(a_1a_2+a_0 a_3+a_1\widehat{B}_1'(a_0)\right)\tau^{3/4}+\mathcal{O}(\tau), \\
\widehat{B}'(u^*_{\tau}) & = \displaystyle a_0+a_1\tau^{1/4}+(a_2+\widehat{B}_1'(a_0))\tau^{1/2}+\mathcal{O}(\tau^{3/4}), \\
\widehat{A}'(u^*_{\tau}) & = \displaystyle \frac{1}{2}\kappa\theta a_0 \tau+\mathcal{O}(\tau^{5/4}),
\end{array}
\end{equation}
where $r\equiv r(k):=a_0 a_2+\widehat{B}_1(a_0)+a_1^2/2
 =a_1^2/2-\kappa\theta /(|k|\xi^2\sqrt{\beta_t})$ is defined in~\eqref{eq:sigmarsmalltime}.
We substitute these asymptotics into~\eqref{eq:fundeqnsmalltime} and solve at each order. 
At the $\tau^{1/4}$ order we have two solutions, 
$
 a_1(k)=\pm \sqrt{v} \E^{-\kappa  t/2}/(2 \sqrt{k} \beta _t^{3/4})
$
and we choose the negative root so that $u^*_{\tau}\in\mathcal{D}^o_{\Lambda}$ for $\tau$ small enough. 
In a straightforward, yet tedious, manner we continue the procedure and iteratively solve at each order 
(the next two equations are linear in $a_2$ and $a_3$) to derive the asymptotic expansions in the lemma.
An analogous treatment holds in the case $k<0$.

To complete the proof (and make the ansatz approach above rigorous) we need to show the existence of this expansion for $u_{\tau}^*(k)$.
Fix $k\in\mathbb{R}^*$ and set $f_k(u,\tau):=\partial_u\Lambda_{\tau}^{(t)}(u)-k$.
Now let $\bar{\tau}>0$.
From Lemma~\ref{lemma:u^*tausmalltime} we know that there exists a solution $u_{\bar{\tau}}^*(k)$ to the equation $f_k(u_{\bar{\tau}}^*(k),\bar{\tau})=0$ and 
the strict convexity of the forward lmgf implies $\partial_u f_k(u_{\bar{\tau}}^*(k),\bar{\tau})>0$.
Further, the two-dimensional map $f_k:\mathcal{D}_{t,\tau}^{o}\times\mathbb{R}^*_{+}\to\mathbb{R}$ is analytic. 
It follows by the Implicit Function Theorem~\cite[Theorem 8.6, Chapter 0]{Kaup83} 
that $\tau\mapsto u^*_{\tau}(k)$ is analytic in some neighbourhood around $\bar{\tau}$.
Since this argument holds for all $\bar{\tau}>0$,  this function is also analytic on $\mathbb{R}^*_{+}$.
Also from Lemma~\ref{lemma:u^*tausmalltime} we know that 
$\lim_{\tau\searrow 0}u^*_{\tau}(k) = \sgn(k)/\sqrt{\beta_t}$.
Since we computed the Taylor series expansion consistent with this limit and the expansion is unique, it follows that $u^*_{\tau}(k)$ admits this representation.
\end{proof}
In the forthcoming analysis we will be interested in the asymptotics of 
\begin{equation} \label{eq:etausmalltime}
e_{\tau}(k):=\left(1-2 \widehat{B}(u^*_{\tau}(k)) \beta _t\right)\tau^{-1/4},
\end{equation}
as $\tau$ tends to zero. 
Since $(1-2 \widehat{B}(u^*_{\tau}(k)) \beta _t)$ converges to zero,
it is not immediately clear that $e_{\tau}$ has a well defined limit. 
But we can use the asymptotics in~\eqref{eq:AprimeBBprimeAsymp} to deduce the following lemma:
\begin{lemma}\label{Lemma:etausmalltime}
The expansion 
$e_{\tau}(k)=e_0(k)+e_1(k)\tau^{1/4}+e_2(k)\tau^{1/2}+\mathcal{O}(\tau^{3/4})$
holds for all $k\in\mathbb{R}^*$ as $\tau$ tends to zero,
where $e_0$, $e_1$ and $e_2$ are defined in~\eqref{eq:esmalltime}.
\end{lemma}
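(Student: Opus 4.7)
The plan is to derive the expansion for $e_\tau(k)$ by substituting directly the expansion of $u^*_\tau(k)$ from Lemma~\ref{Lemma:ustartausmalltime} into the expansion of $\widehat{B}$ from Lemma~\ref{lemma:ABHat}, and then collect powers of $\tau^{1/4}$. Almost all the required algebra has in fact already been carried out inside the proof of Lemma~\ref{Lemma:ustartausmalltime}: the second identity of~\eqref{eq:AprimeBBprimeAsymp} gives
\[
\widehat{B}(u^*_\tau(k)) = \frac{1}{2\beta_t} + a_0(k) a_1(k)\,\tau^{1/4} + r(k)\,\tau^{1/2} + \bigl(a_1(k)a_2(k)+a_0(k)a_3(k)+a_1(k)\widehat{B}_1'(a_0(k))\bigr)\tau^{3/4} + \mathcal{O}(\tau),
\]
where $r(k)$ is as in~\eqref{eq:sigmarsmalltime}.

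The first step is to form $1 - 2\beta_t\widehat{B}(u^*_\tau(k))$: the leading constant $1/(2\beta_t)$ cancels exactly (this is where the choice $a_0(k) = \mathrm{sgn}(k)/\sqrt{\beta_t}$, i.e.\ $a_0(k)^2\beta_t=1$, is crucial, since it is precisely this cancellation that makes the definition~\eqref{eq:etausmalltime} well posed and non-degenerate). After cancellation, every surviving term carries a factor of at least $\tau^{1/4}$, so dividing by $\tau^{1/4}$ produces
\[
e_\tau(k) = -2\beta_t a_0(k) a_1(k) - 2\beta_t r(k)\,\tau^{1/4} - 2\beta_t\bigl(a_1(k)a_2(k)+a_0(k)a_3(k)+a_1(k)\widehat{B}_1'(a_0(k))\bigr)\tau^{1/2} + \mathcal{O}(\tau^{3/4}).
\]

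The second step is to match these coefficients against the definitions of $e_0,e_1,e_2$ in~\eqref{eq:esmalltime}. The identifications for $e_1$ and $e_2$ are immediate. For $e_0$, one uses $\beta_t a_0(k)^2 = 1$ to rewrite $-2\beta_t a_0(k) a_1(k) = -2a_1(k)/a_0(k)$, which is exactly $e_0(k)$. The $\mathcal{O}(\tau^{3/4})$ error is uniform because each input expansion (that of $\widehat{B}$ and that of $u^*_\tau$) has a uniform remainder on compact subsets of $\mathcal{D}_\Lambda^o$, and we are evaluating at $u^*_\tau(k)$, which by Lemma~\ref{lemma:u^*tausmalltime} stays in such a compact subset for $\tau$ small enough.

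There is no real obstacle here: the result is essentially a repackaging of an intermediate computation already performed for Lemma~\ref{Lemma:ustartausmalltime}. The only delicate point is the cancellation of the leading constant, which requires remembering that $a_0(k)$ was deliberately chosen to sit at the boundary $\pm 1/\sqrt{\beta_t}$ of $\mathcal{D}_\Lambda$; once this is noted, the expansion follows by inspection.
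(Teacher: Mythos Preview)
Your proposal is correct and follows exactly the same approach as the paper, which simply states that one substitutes the asymptotics for $\widehat{B}(u^*_\tau)$ from~\eqref{eq:AprimeBBprimeAsymp} into the definition~\eqref{eq:etausmalltime} and simplifies. Your write-up is more detailed (in particular, you spell out the cancellation via $\beta_t a_0(k)^2=1$ that reconciles $-2\beta_t a_0 a_1$ with $e_0(k)=-2a_1/a_0$), but the argument is identical; note only that the expansion of $\widehat{B}(u^*_\tau)$ you quote is the \emph{first} line of~\eqref{eq:AprimeBBprimeAsymp}, not the second.
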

\begin{proof}
We substitute the asymptotics for $\widehat{B}(u^*_{\tau})$ in~\eqref{eq:AprimeBBprimeAsymp} into the definition of $e_\tau$ in~\eqref{eq:etausmalltime} and the lemma follows after simplification.
\end{proof}

\subsection{Characteristic function asymptotics} \label{sec: Charactfuncasymp}

We now define the random variable $\label{eq:DefZSmallTime} Z_{\tau,k}:=\left(X_{\tau}^{(t)}- k\right)/\tau^{1/8}$ 
and the characteristic function $\Phi_{\tau,k}:\mathbb{R}\to\mathbb{C}$ of $Z_{\tau,k}$ in the $\mathbb{Q}_{k,\tau}$-measure  in~\eqref{eq:MeasureChangeSmallTime} as 
\begin{equation}\label{eq:CharacNonSteepSmallTime}
\Phi_{\tau,k}(u):=\mathbb{E}^{\mathbb{Q}_{k,\tau}}\left(\E^{\I u Z_{\tau,k}}\right).
\end{equation}
Define now the functions $\phi_1,\phi_2:\mathbb{R}^*\times\mathbb{R}\to\mathbb{C}$ by
$$
\phi_1(k,u) := \I u \left(\psi _0(k)+\frac{4 a_0(k)\theta  \kappa \beta _t}{e_0(k) \xi ^2}\right)+\I u^3 \psi_1(k),
\qquad\text{and}\qquad
\phi_2(k,u) := u^2\phi_2^{a}(k)+u^4\phi_2^{b}(k)+u^6\phi_2^{c}(k),
$$
with $\psi_0$, $\psi_1$ defined in~\eqref{eq:psismalltime}, $a_0$, $e_0$ in~\eqref{eq:aB1hatsmalltime},~\eqref{eq:esmalltime},
and $\phi_2^{a}$, $\phi_2^{b}$ and $\phi_2^{c}$ in~\eqref{eq:phi2smalltime}.
The following lemma provides the asymptotics of~$\Phi_{\tau,k}$:
\begin{lemma}\label{Lemma:charactfuncasympsmalltime}
The following expansion holds for all $k\in\mathbb{R}^*$ as $\tau$ tends to zero
(with $\zeta$ given in~\eqref{eq:sigmarsmalltime}):
$$\Phi_{\tau,k}(u)=\E^{-\frac{1}{2}\zeta^2(k) u^2}\left(1+\phi_1(k,u)\tau^{1/8}+\phi_2(k,u)\tau^{1/4}+\mathcal{O}\left(\tau^{3/8}\right)\right).$$
\end{lemma}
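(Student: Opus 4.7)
The plan is to combine the change-of-measure representation of $\Phi_{\tau,k}$ with a Taylor expansion of $\Lambda_{\tau}^{(t)}$ around $u_{\tau}^*(k)$, exploiting the defining identity $\partial_u\Lambda_{\tau}^{(t)}(u_{\tau}^*(k))=k$ from~\eqref{eq:u^*tausmalltime} to cancel the divergent term $-iuk/\tau^{1/8}$. Starting from~\eqref{eq:MeasureChangeSmallTime} and observing that $u_{\tau}^*(k)/\sqrt{\tau}+iu/\tau^{1/8}=(u_{\tau}^*(k)+iu\tau^{3/8})/\sqrt{\tau}$, together with the analytic extension of the forward mgf~\eqref{eq:MGFFwd} to a complex neighbourhood of $\mathcal{D}_{t,\tau}^{o}$ (valid for $\tau$ small enough, $u$ in any fixed bounded set), I would first derive
\begin{equation*}
\Phi_{\tau,k}(u)=\exp\left\{\frac{\Lambda_{\tau}^{(t)}\bigl(u_{\tau}^*(k)+iu\tau^{3/8}\bigr)-\Lambda_{\tau}^{(t)}(u_{\tau}^*(k))}{\sqrt{\tau}}-\frac{iuk}{\tau^{1/8}}\right\}.
\end{equation*}
Taylor-expanding in the shift $iu\tau^{3/8}$ and using~\eqref{eq:u^*tausmalltime} to cancel the linear-in-$u$ term reduces the exponent to $\sum_{n\geq 2}(iu)^n\tau^{3n/8}/(n!\sqrt{\tau})\,\partial_u^n\Lambda_{\tau}^{(t)}(u_{\tau}^*(k))$.

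The next step is to track the $\tau$-scaling of each derivative. Writing $\Lambda_{\tau}^{(t)}(u)/\sqrt{\tau}=\widehat{A}(u)+v\E^{-\kappa t}\widehat{B}(u)/\psi(u)-(2\kappa\theta/\xi^2)\log\psi(u)$ with $\psi:=1-2\beta_t\widehat{B}$, the singularity $\psi(u_{\tau}^*(k))=e_{\tau}(k)\tau^{1/4}$ from~\eqref{eq:etausmalltime} produces a dominant $1/\psi^{n+1}$ contribution of order $\tau^{-(n+1)/4}$ to $\partial_u^n(\widehat{B}/\psi)(u_{\tau}^*(k))$, so that
\begin{equation*}
\partial_u^n\Lambda_{\tau}^{(t)}(u_{\tau}^*(k))=\tau^{(1-n)/4}\bigl(\ell_n^{(0)}(k)+\ell_n^{(1)}(k)\tau^{1/4}+\ell_n^{(2)}(k)\tau^{1/2}+\mathcal{O}(\tau^{3/4})\bigr).
\end{equation*}
The sub-leading corrections sit at integer powers of $\tau^{1/4}$ because Lemma~\ref{lemma:ABHat} gives $\widehat{A},\widehat{B}$ as series in $\sqrt{\tau}$ and Lemma~\ref{Lemma:ustartausmalltime} gives $u_{\tau}^*(k)$ as a series in $\tau^{1/4}$, so their composition develops no half-integer powers of $\tau^{1/4}$. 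Substituting back with $\beta:=\tau^{1/8}$ yields
\begin{equation*}
\log\Phi_{\tau,k}(u)=\sum_{n\geq 2}\frac{(iu)^n\beta^{n-2}}{n!}\bigl(\ell_n^{(0)}(k)+\ell_n^{(1)}(k)\beta^2+\mathcal{O}(\beta^4)\bigr),
\end{equation*}
which identifies $\beta=\tau^{1/8}$ as the natural expansion parameter.

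The final step is to collect powers of $\beta$ and exponentiate. At $\beta^0$ the coefficient is $-u^2\ell_2^{(0)}(k)/2$; a direct computation of $(\widehat{B}/\psi)''(u_{\tau}^*(k))$, whose dominant piece $8\beta_t^2\widehat{B}(\widehat{B}')^2/\psi^3$ gives, after substitution from~\eqref{eq:aB1hatsmalltime}--\eqref{eq:esmalltime}, the coefficient $4v\E^{-\kappa t}\beta_t a_0^2(k)/e_0^3(k)$, identifies $\ell_2^{(0)}(k)=\zeta^2(k)$ as in~\eqref{eq:sigmarsmalltime}. At order $\beta^1$ only the $n=3$ term contributes, yielding $-iu^3\ell_3^{(0)}(k)/6$; at order $\beta^2$ one combines the direct contributions ($n=2,m=1$ giving $-u^2\ell_2^{(1)}(k)/2$, and $n=4,m=0$ giving $u^4\ell_4^{(0)}(k)/24$) with the square of the $\beta^1$ piece, which under exponentiation produces an additional $-u^6(\ell_3^{(0)}(k))^2/72$. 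This matches the polynomial form $\phi_2=u^2\phi_2^a+u^4\phi_2^b+u^6\phi_2^c$ of~\eqref{eq:phi2smalltime}.

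The main obstacle is the algebraic identification in the last step. Each $\ell_n^{(m)}(k)$ appears as a Fa\`a-di-Bruno sum over products $\widehat{B}^{(j)}(u_{\tau}^*(k))/\psi(u_{\tau}^*(k))^i$, and matching the resulting rational expressions in $a_0,\ldots,a_3,e_0,e_1,e_2$ against the compact forms $\psi_0,\ldots,\psi_4$ and $\phi_2^{a,b,c}$ of~\eqref{eq:psismalltime}--\eqref{eq:phi2smalltime} requires the full expansions of $u_{\tau}^*(k)$ up to $\mathcal{O}(\tau)$ from Lemma~\ref{Lemma:ustartausmalltime} and of $e_{\tau}(k)$ up to $\mathcal{O}(\tau^{3/4})$ from Lemma~\ref{Lemma:etausmalltime}; this is lengthy but mechanical. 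Uniform control of the $\mathcal{O}(\tau^{3/8})$ remainder on compact $u$-sets follows from the Taylor series converging on a complex disk of radius $|u|\tau^{3/8}$ around $u_{\tau}^*(k)$, which for $\tau$ small enough is contained in the analyticity domain of $\Lambda_{\tau}^{(t)}$.
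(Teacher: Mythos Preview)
Your proof is correct and takes a genuinely different route from the paper's. The paper begins with the same representation~\eqref{eq:phiasympsmalltime} but then expands each of the three building blocks $\widehat{A}$, $\widehat{B}/(1-2\beta_t\widehat{B})$ and $\log(1-2\beta_t\widehat{B})$ \emph{separately} at the two points $u^*_\tau$ and $u^*_\tau+\I u\tau^{3/8}$, subtracts, and shows the cancellation of the divergent $-\I uk/\tau^{1/8}$ term via the explicit algebraic identity $\I v\E^{-\kappa t}a_0^3u\beta_t/e_0^2=\I u k$ (equation~\eqref{eq:charactproperty}). Your approach instead Taylor-expands the full $\Lambda_\tau^{(t)}$ around $u^*_\tau$ and invokes the saddle-point equation~\eqref{eq:u^*tausmalltime} to kill the $n=1$ term structurally, then tracks the scaling $\partial_u^n\Lambda_\tau^{(t)}(u^*_\tau)\sim\tau^{(1-n)/4}$ driven by the $\psi^{-(n+1)}$ singularity.

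What your approach buys: it makes transparent that $\log\Phi_{\tau,k}(u)$ contains no term linear in $u$ at any order beyond the cancelled one, so in particular the $\I u$-coefficient in the paper's $\phi_1$, namely $\psi_0(k)+4a_0(k)\kappa\theta\beta_t/(e_0(k)\xi^2)$, must vanish identically---a fact one can verify directly from~\eqref{eq:aB1hatsmalltime}--\eqref{eq:psismalltime} using $e_0^2|k|\sqrt{\beta_t}=v\E^{-\kappa t}$. Your derivative-scaling argument also makes the emergence of $\tau^{1/8}$ as the natural step size immediate, whereas in the paper it appears only after the dust settles. What the paper's approach buys: by keeping the three pieces separate it produces the coefficients $\psi_0,\ldots,\psi_4$ and the log-term contributions in a form that is reused verbatim in Lemma~\ref{Lemma:U*Asymptotics} (see~\eqref{eq:BratioAsymp} and~\eqref{eq:logasympsmalltime}), so some of the intermediate computations are not wasted. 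Either way the final algebraic matching against~\eqref{eq:phi2smalltime} is, as you note, mechanical.
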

\begin{remark}\label{rem:weakconsmalltau}
For any $k\in\RR^*$, L\'evy's Convergence Theorem~\cite[Page 185, Theorem 18.1]{W03} implies that
$Z_{\tau,k}$ converges weakly to a normal random variable with zero mean and variance $\zeta^2(k)$ as $\tau$ tends to zero.
\end{remark}
\begin{proof}
From the change of measure~\eqref{eq:MeasureChangeSmallTime} and the forward logarithmic moment generating function 
given in~\eqref{eq:LambdaTau} we compute (we drop the $k$-dependence throughout) for small~$\tau$:
\begin{align}
\log\Phi_{k,\tau}(u)
 & = \log\mathbb{E}^{\mathbb{P}}\left(\frac{\D\mathbb{Q}_{k,\tau}}{\D\mathbb{P}}\E^{\I u Z_{k,\tau}}\right)
=
\log\mathbb{E}^{\mathbb{P}}\left[\exp\left(\frac{u^*(k) X_{\tau}^{(t)}}{\sqrt{\tau}}
-\frac{\Lambda_{\tau}^{(t)}(u^*_{\tau})}{\sqrt{\tau}}\right)
\exp\left({ \frac{\I uX_{\tau}^{(t)}}{\tau^{1/8}}-\frac{\I uk}{\tau^{1/8}}}\right)\right] \nonumber \\ \nonumber
 & = {-\frac{1}{\sqrt{\tau}}\Lambda_{\tau}^{(t)}\left(u^*_{\tau}\right)-\frac{\I uk}{\tau^{1/8}}}+\log\mathbb{E}^{\mathbb{P}}\left[\exp\left({\left(\frac{X_{\tau}^{(t)}}{\sqrt{\tau}}\right)\left(\I u\tau^{3/8}+u^*_{\tau}\right)}\right)\right]  \\ \label{eq:phiasympsmalltime}
 & = -\frac{\I uk}{\tau^{1/8}}+\frac{1}{\sqrt{\tau}}\left(\Lambda_{\tau}^{(t)}\left(\I u\tau^{3/8} +u^*_{\tau}\right)-\Lambda_{\tau}^{(t)}\left(u^*_{\tau}\right)\right).
\end{align}
Using the asymptotics in~\eqref{eq:AprimeBBprimeAsymp} we have that as $\tau$ tends to zero (we drop the $k$-dependence)
\begin{align}\label{eq:BIasymp}
\widehat{B}\left(u^*_{\tau}+\I u\tau^{3/8}\right)&=\frac{a_0^2}{2}+a_0 a_1 \tau^{1/4}+\I a_0 \tau ^{3/8} u+ r\tau^{1/2}+\mathcal{O}(\tau^{5/8}), \\ \nonumber
\widehat{B}\left(u^*_{\tau}\right)&=\frac{a_0^2}{2}+a_0 a_1 \tau^{1/4}+ r\tau^{1/2}+\mathcal{O}(\tau^{3/4}), \\ \nonumber
\widehat{B}\left(u^*_{\tau}+\I u\tau^{3/8}\right)-\widehat{B}\left(u^*_{\tau}\right)
 & = \I a_0 u \tau ^{3/8} +\I a_1 u \tau^{5/8} -\frac{1}{2}  u^2 \tau ^{3/4} +\mathcal{O}(\tau^{7/8}),
\end{align}
where $r\equiv r(k):=a_0 a_2+\widehat{B}_1(a_0)+a_1^2/2
 =a_1^2/2-\kappa\theta /(|k|\xi^2\sqrt{\beta_t})$ is defined in~\eqref{eq:sigmarsmalltime}.
Similarly for small $\tau$,
\begin{align}\label{eq:AIAsymp}
\widehat{A}\left(u^*_{\tau}+\I u\tau^{3/8}\right)
 &= \frac{\kappa\theta}{4} a_0^2\tau
+\frac{\kappa\theta}{2}a_0 a_1\tau^{5/4}+\frac{\I\kappa\theta}{2}a_0 u\tau ^{11/8} +\mathcal{O}(\tau^{3/2}), \\ \nonumber
\widehat{A}\left(u^*_{\tau}\right)
& = \frac{\kappa\theta}{4} a_0^2\tau +\frac{\kappa\theta}{2}a_0a_1\tau^{5/4}+\mathcal{O}(\tau^{3/2}), \\ \nonumber
\widehat{A}\left(u^*_{\tau}+\I u\tau^{3/8}\right)-\widehat{A}\left(u^*_{\tau}\right)
&= \frac{\I\kappa\theta}{2}a_0 u\tau^{11/8} +\mathcal{O}(\tau^{3/2}).
\end{align}
We now use $e_{\tau}$ defined in~\eqref{eq:etausmalltime} to re-write the term
\begin{equation*}
\frac{\widehat{B}\left(u^*_{\tau}+\I u\tau^{3/8}\right)v\E^{-\kappa t}}{1-2\beta_t\widehat{B}\left(u^*_{\tau}+\I u\tau^{3/8}\right)}
=\frac{v\E^{-\kappa t}\tau^{-1/4}\widehat{B}\left(u^*_{\tau}+\I u\tau^{3/8}\right)}{e_\tau-2\beta_t\tau^{-1/4}\left(\widehat{B}\left(u^*_{\tau}+\I u\tau^{3/8}\right)-\widehat{B}\left(u^*_{\tau}\right)\right)},
\end{equation*}
and then use the asymptotics in~\eqref{eq:BIasymp} and Lemma~\ref{Lemma:etausmalltime} to find that for small $\tau$
\begin{align}\label{eq:BratioIAsymp}
&\frac{v\E^{-\kappa t}\tau^{-1/4}\widehat{B}\left(u^*_{\tau}+\I u\tau^{3/8}\right)}{e_\tau-2\beta_t\tau^{-1/4}\left(\widehat{B}\left(u^*_{\tau}+\I u\tau^{3/8}\right)-\widehat{B}\left(u^*_{\tau}\right)\right)} \\ \nonumber
&=\frac{v\E^{-\kappa t}\tau^{-1/4}\left(a_0^2/2+a_0 a_1 \tau^{1/4}+\I a_0 \tau ^{3/8} u+ r\tau^{1/2}+\mathcal{O}(\tau^{5/8})\right)}{e_0+e_1\tau^{1/4}+e_2\tau^{1/2}+\mathcal{O}(\tau^{3/4})-2\beta_t\tau^{-1/4}\left(\I a_0 u \tau ^{3/8} +\I a_1 u \tau ^{5/8} -\frac{1}{2}  u^2 \tau ^{3/4} +\mathcal{O}(\tau^{7/8})\right)} \\ \nonumber
&=\frac{v\E^{-\kappa t}a_0^2}{2 e_0}\tau^{-1/4}+\frac{v\E^{-\kappa t} \I a_0^3 u \beta _t}{e_0^2}\tau^{-1/8}
+v\E^{-\kappa t}\left(\frac{a_0 a_1}{e_0}-\frac{a_0^2 e_1}{2 e_0^2}\right)  \\ \nonumber
&-\frac{\zeta^2 u^2}{2}+(\I u \psi_0 +\I u^3 \psi_1)\tau^{1/8}+(\psi_4+\psi_2 u^2 +\psi_3 u^4)\tau^{1/4}+\mathcal{O}(\tau^{3/8}),
\end{align}
with $\zeta$ and $\psi_0,\ldots,\psi_4$  defined in~\eqref{eq:sigmarsmalltime} and~\eqref{eq:psismalltime}. 
From the definition of $a_0$, $e_0$ and $\beta_t$ we note the simplification
\begin{equation}\label{eq:charactproperty}
\frac{\I v\E^{-\kappa t} a_0^3(k) u \beta _t}{e_0^2(k)\tau^{1/8}}=\frac{\I uk}{\tau^{1/8}}.
\end{equation}
Similarly we find that as $\tau$ tends to zero
\begin{align} \label{eq:BratioAsymp}
\frac{\widehat{B}\left(u^*_{\tau}\right)v\E^{-\kappa t}}{1-2\beta_t\widehat{B}\left(u^*_{\tau}\right)}
&=\frac{v\E^{-\kappa t}\tau^{-1/4}\widehat{B}\left(u^*_{\tau}\right)}{e_\tau} 
= \frac{v\E^{-\kappa t}\tau^{-1/4}\left(a_0^2/2+a_0 a_1 \tau^{1/4}+ r\tau^{1/2}+\mathcal{O}(\tau^{3/4})\right)}{e_0+e_1\tau^{1/4}+e_2\tau^{1/2}+\mathcal{O}(\tau^{3/4})} \\ \nonumber
&=\frac{a_0^2 v\E^{-\kappa t}}{2 e_0}\tau^{-1/4}+ v\E^{-\kappa t}\left(\frac{a_0 a_1}{e_0}-\frac{a_0^2 e_1}{2 e_0^2}\right)+\psi_4 \tau^{1/4}
+\mathcal{O}(\tau^{1/2}).
\end{align}
Again we use $e_{\tau}$ defined in~\eqref{eq:etausmalltime} to re-write the term
%\begin{align*}
%&\exp{\left(-\frac{2\kappa\theta}{\xi^2}\log\left(1-2\beta_t\widehat{B}\left(u^*_{\tau}+\I %u\tau^{3/8}\right)\right)+\frac{2\kappa\theta}
%{\xi^2}\log\left(1-2\beta_t\widehat{B}\left(u^*_{\tau}\right)\right)\right)} \\ \nonumber
%&=\left(\frac{e_\tau}{e_{\tau}-2\beta_t\tau^{-1/4}\left(\widehat{B}\left(u^*_{\tau}+\I %u\tau^{3/8}\right)-\widehat{B}\left(u^*_{\tau}\right)\right)}\right)^{2\kappa\theta/\xi^2},
%\end{align*}
$$
\exp\left[\frac{2\kappa\theta}{\xi^2}
\log\left(\frac{1-2\beta_t\widehat{B}\left(u^*_{\tau}\right)}{1-2\beta_t\widehat{B}\left(u^*_{\tau}+\I u\tau^{3/8}\right)}\right)
\right]
=\left(1-\frac{2\beta_t\left(\widehat{B}\left(u^*_{\tau}+\I u\tau^{3/8}\right)-\widehat{B}\left(u^*_{\tau}\right)\right)}{e_\tau\tau^{1/4}}\right)^{-2\kappa\theta/\xi^2},
$$
and then use the asymptotics in~\eqref{eq:BIasymp} and Lemma~\ref{Lemma:etausmalltime} 
to find that for small~$\tau$
\begin{align} \label{eq:BLogAsymp}
\left(1-\frac{2\beta_t\left(\widehat{B}\left(u^*_{\tau}+\I u\tau^{3/8}\right)
-\widehat{B}\left(u^*_{\tau}\right)\right)}{e_\tau\tau^{1/4}}\right)^{-\frac{2\kappa\theta}{\xi^2}}
 & = \left(1+\frac{2\I a_0\beta_t u}{e_0}\tau^{1/8}-\frac{4 a_0^2  u^2 \beta _t^2}{e_0^2}\tau^{1/4}+\mathcal{O}(\tau^{3/8})\right)^{2\kappa\theta/\xi^2} \nonumber\\ 
 & = 1+\frac{4 \I\kappa\theta a_0\beta_t u}{e_0 \xi ^2}\tau^{1/8}-\frac{4\kappa\theta a_0^2\beta _t^2u^2
 \left(2\kappa\theta +\xi^2\right)}{\xi ^4 e_0^2}\tau^{1/4} +\mathcal{O}(\tau^{3/8}).
\end{align}
Using~\eqref{eq:phiasympsmalltime} with definition~\eqref{eq:LambdaTau} and~\eqref{eq:RenLamAB}, property~\eqref{eq:charactproperty} and the asymptotics in~\eqref{eq:AIAsymp},~\eqref{eq:BratioIAsymp},~\eqref{eq:BratioAsymp} and~\eqref{eq:BLogAsymp} we finally calculate the characteristic function for small $\tau$ as
\begin{align*}
\Phi_{k,\tau}(u)
=\exp\left(-\frac{\zeta^2u^2}{2}+(\I u \psi_0 +\I u^3 \psi_1)\tau^{1/8}
+(\psi_2 u^2 +\psi_3 u^4)\tau^{1/4}+\mathcal{O}(\tau^{3/8})\right) \\ \nonumber
\left(1+\frac{4 \I\kappa\theta a_0\beta_t u}{e_0 \xi ^2}\tau^{1/8}
 - \frac{4\kappa\theta a_0^2\beta _t^2u^2
 \left(2\kappa\theta +\xi^2\right)}{\xi ^4 e_0^2}\tau^{1/4} +\mathcal{O}(\tau^{3/8})\right),
\end{align*}
with $\psi_0,\ldots,\psi_3$ defined in~\eqref{eq:sigmarsmalltime},~\eqref{eq:psismalltime}, 
and so the lemma follows from the following decomposition 
\begin{align*}
&\Phi_{k,\tau}(u)=\exp\left(-\frac{\zeta^2u^2}{2}\right)\Big\{1+\I  \left(u 
\left(\psi _0+\frac{4\kappa\theta a_0 \beta _t}{e_0 \xi ^2}\right)+u^3 \psi _1\right)\tau^{1/8} \Big. \\ \nonumber
& \Big. +\left[
\left(\psi _2-\frac{\psi _0^2}{2} - \frac{4\kappa\theta a_0^2\beta _t^2
 \left(2\kappa\theta +\xi^2\right)}{\xi ^4 e_0^2} - \frac{4\kappa\theta a_0 \beta_t}{e_0\xi^2}\psi_0\right)u^2 
+ \left(\psi_3- \psi_0 \psi_1 - \psi_1\frac{4\kappa\theta a_0\beta_t}{e_0\xi^2}\right)u^4
-\frac{u^6 \psi _1^2}{2}\right]\tau^{\frac{1}{4}} %\Big. \\ \nonumber
%&\Big. 
+\mathcal{O}(\tau^{\frac{3}{8}})\Big\}.
\end{align*}
\end{proof}

The following technical lemma will be needed in Section~\ref{sec:optpricefwdsmile} where it will be used to give the leading order exponential decay of out-of-the-money forward-start options as $\tau$ tends to zero.

\begin{lemma} \label{Lemma:U*Asymptotics}
The following expansion holds  for all $k\in\mathbb{R}^*$ as $\tau$ tends to zero:
$$
\E^{-ku^*_{\tau}/\sqrt{\tau}+\Lambda_{\tau}^{(t)}(u^*_\tau)/\sqrt{\tau}}
=
\E^{-\Lambda^*(k)/\sqrt{\tau}+c_0(k)/\tau^{1/4}+c_1(k)}\tau^{-\kappa\theta/(2\xi^2)}
c_2(k)\left(1+z_1(k)\tau^{1/4}+\mathcal{O}(\tau^{1/2})\right),
$$
where $c_0$, $c_1$ and $c_2$ are defined in~\eqref{eq:csmalltime}, $\Lambda^*$ is characterised explicitly in Lemma~\ref{lemma:U*Characterisation} and $z_1$ is given in~\eqref{eq:zpsmalltime}.
\end{lemma}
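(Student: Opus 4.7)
The plan is to expand the exponent $-ku^*_\tau(k)/\sqrt{\tau}+\Lambda_\tau^{(t)}(u^*_\tau(k))/\sqrt{\tau}$ in powers of $\tau^{1/4}$ up to and including the $\tau^{1/4}$ term, and then exponentiate. Using the rescaling $\Lambda_\tau^{(t)}(u)/\sqrt{\tau} = \widehat{A}(u) + \widehat{B}(u)v\E^{-\kappa t}/(1-2\beta_t\widehat{B}(u)) - (2\kappa\theta/\xi^2)\log(1-2\beta_t\widehat{B}(u))$, which follows from~\eqref{eq:LambdaTau} and~\eqref{eq:RenLamAB}, I split the exponent into four pieces and expand each one using the results already at our disposal. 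For the linear part, Lemma~\ref{Lemma:ustartausmalltime} gives $-ku^*_\tau/\sqrt{\tau} = -ka_0\tau^{-1/2}-ka_1\tau^{-1/4}-ka_2-ka_3\tau^{1/4}+\mathcal{O}(\tau^{1/2})$. The $\widehat{A}$-piece is $\mathcal{O}(\tau)$ by Lemma~\ref{lemma:ABHat} and does not contribute up to $\tau^{1/4}$. The rational $\widehat{B}$-piece is exactly the object computed in~\eqref{eq:BratioAsymp}, namely $\tfrac{a_0^2 v\E^{-\kappa t}}{2e_0}\tau^{-1/4} + v\E^{-\kappa t}\bigl(\tfrac{a_0a_1}{e_0}-\tfrac{a_0^2 e_1}{2e_0^2}\bigr)+\psi_4\tau^{1/4}+\mathcal{O}(\tau^{1/2})$. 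Finally, using $1-2\beta_t\widehat{B}(u^*_\tau) = e_\tau\tau^{1/4}$ from~\eqref{eq:etausmalltime} together with Lemma~\ref{Lemma:etausmalltime}, the log-piece is $-\tfrac{2\kappa\theta}{\xi^2}\log(e_\tau\tau^{1/4}) = -\tfrac{\kappa\theta}{2\xi^2}\log\tau-\tfrac{2\kappa\theta}{\xi^2}\log e_0-\tfrac{2\kappa\theta}{\xi^2}(e_1/e_0)\tau^{1/4}+\mathcal{O}(\tau^{1/2})$ after Taylor-expanding $\log(1+e_1\tau^{1/4}/e_0+\cdots)$.

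Summing the four expansions and collecting by orders of $\tau^{1/4}$ produces the claim. The coefficient of $\tau^{-1/2}$ is $-ka_0 = -|k|/\sqrt{\beta_t} = -\Lambda^*(k)$ by Lemma~\ref{lemma:U*Characterisation}. The coefficient of $\tau^{-1/4}$ is $-ka_1 + a_0^2 v\E^{-\kappa t}/(2e_0)$, which equals $c_0(k)$ provided one recognises the identity $a_0^2 v\E^{-\kappa t}/(2e_0) = |a_1 k|$; this follows by direct substitution of $a_0$, $a_1$, $e_0$ from~\eqref{eq:aB1hatsmalltime} and~\eqref{eq:esmalltime}, combined with $-ka_1 = |a_1 k|$ (the sign of $a_1(k)$ is opposite to that of $k$). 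The constant coefficient equals $c_1(k)-(2\kappa\theta/\xi^2)\log e_0$ after using $a_0^2 = 1/\beta_t$ to reconcile the two representations of $c_1$ in~\eqref{eq:csmalltime}, and the $\tau^{1/4}$ coefficient is exactly $z_1(k)$ by~\eqref{eq:zpsmalltime}. Exponentiation turns $-(\kappa\theta/(2\xi^2))\log\tau$ into $\tau^{-\kappa\theta/(2\xi^2)}$ and $\exp(-(2\kappa\theta/\xi^2)\log e_0)$ into $e_0^{-2\kappa\theta/\xi^2}=c_2(k)$, while a final Taylor expansion of $\exp(z_1(k)\tau^{1/4}+\mathcal{O}(\tau^{1/2}))$ produces the factor $1+z_1(k)\tau^{1/4}+\mathcal{O}(\tau^{1/2})$.

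The main obstacle is purely bookkeeping: the cancellations at orders $\tau^{-1/4}$, $\tau^0$ and $\tau^{1/4}$ each receive simultaneous contributions from the linear piece $-ku^*_\tau/\sqrt{\tau}$ and from the rational and log pieces of $\Lambda_\tau^{(t)}/\sqrt{\tau}$, and these have to be added consistently. The most delicate step is the $\tau^{-1/4}$ level, where the two contributions combine into $c_0(k)=2|a_1k|$ via the nontrivial identity $a_0^2 v\E^{-\kappa t}/(2e_0) = -ka_1$; this identity relies explicitly on the formulae for $a_0$, $a_1$ and $e_0$ in~\eqref{eq:aB1hatsmalltime} and~\eqref{eq:esmalltime}, without which the resulting expansion would not match the stated form. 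Everything else is a routine substitution of the previously derived expansions of $u^*_\tau$, $\widehat{A}$, $\widehat{B}$ and $e_\tau$.
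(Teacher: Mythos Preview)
Your proposal is correct and follows essentially the same route as the paper's proof: the paper likewise splits $\Lambda_\tau^{(t)}(u^*_\tau)/\sqrt{\tau}$ into the $\widehat{A}$, rational $\widehat{B}$ and log pieces via~\eqref{eq:LambdaTau}, invokes the expansions of $u^*_\tau$ (Lemma~\ref{Lemma:ustartausmalltime}), of the rational piece~\eqref{eq:BratioAsymp}, of $\widehat{A}$~\eqref{eq:AIAsymp}, and of the log piece through $e_\tau$ (Lemma~\ref{Lemma:etausmalltime}), and then combines them using the same simplification $c_0(k)=v\E^{-\kappa t}/(2e_0(k)\beta_t)-a_1(k)k=2|a_1(k)k|$. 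The only cosmetic difference is that the paper exponentiates each piece separately and multiplies, whereas you expand the full exponent first and exponentiate once at the end; this is not a substantive distinction.
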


\begin{proof}
We use the asymptotics in Lemma~\ref{Lemma:ustartausmalltime} and the characterisation of $\Lambda^*$ in Lemma~\ref{lemma:U*Characterisation} to write for small $\tau$ (we drop the $k$-dependence)
\begin{align}\label{eq:u*tausmalltime}
\exp\left(-ku^*_{\tau}/\sqrt{\tau}\right)&=\exp\left(-a_0 k/\sqrt{\tau}-a_1 k/\tau^{1/4}-a_2 k\right)\left(1-a_3 k\tau^{1/4}+\mathcal{O}(\tau^{1/2})\right) \\ \nonumber
&=\exp\left(-\Lambda^*(k)/\sqrt{\tau}-a_1 k/\tau^{1/4}-a_2 k\right)\left(1-a_3 k\tau^{1/4}+\mathcal{O}(\tau^{1/2})\right).
\end{align}
Using the Heston forward lmgf definition in~\eqref{eq:LambdaTau} and~\eqref{eq:RenLamAB} we can write
\begin{align}\label{eq:elambdatau}
\exp\left(\Lambda_{\tau}^{(t)}(u^*_\tau)/\sqrt{\tau}\right)=\exp\left(\widehat{A}(u^*_{\tau})+\frac{\widehat{B}(u^*_{\tau})v\E^{-\kappa t}}{1-2\beta_t\widehat{B}(u^*_{\tau})}-\frac{2\kappa\theta}{\xi^2}\log(1-2\beta_t\widehat{B}(u^*_{\tau}))\right).
\end{align}
Using the definition of $e_{\tau}$ in~\eqref{eq:etausmalltime} and the asymptotics in Lemma~\ref{Lemma:etausmalltime} we find that for small $\tau$
\begin{align}\label{eq:logasympsmalltime}
\exp\left(-\frac{2\kappa\theta}{\xi^2}\log(1-2\beta_t\widehat{B}(u^*_{\tau}))\right)
&=\tau^{-\kappa\theta/(2\xi^2)}e_{\tau}^{-2\kappa\theta/\xi^2}
=\tau^{-\kappa\theta/(2\xi^2)}\left(e_0+e_1\tau^{1/4}+\mathcal{O}(\tau^{1/2})\right)^{-2\kappa\theta/\xi^2} \\ \nonumber
&=\tau^{-\kappa\theta/(2\xi^2)}e_0^{-2\kappa\theta/\xi^2}
\left(1-\frac{2\kappa\theta e_1}{\xi^2 e_0}\tau^{1/4}+\mathcal{O}(\tau^{1/2})\right).
\end{align}
The the lemma follows after using~\eqref{eq:u*tausmalltime} and ~\eqref{eq:elambdatau}, 
the asymptotics in~\eqref{eq:logasympsmalltime},~\eqref{eq:BratioAsymp} and~\eqref{eq:AIAsymp} 
and the simplification
$c_0(k)=v\E^{-\kappa t}/(2 e_0(k)\beta_t)-a_1(k) k=2|a_1(k) k|$.
\end{proof}

We now use the characteristic function expansion in Lemma~\ref{Lemma:charactfuncasympsmalltime} and 
Fourier transform methods to derive the asymptotics for the expectation (under the measure~\eqref{eq:MeasureChangeSmallTime}) 
of the modified payoff on the re-scaled forward price process. 
This lemma will be critical for the analysis in 
Section~\ref{sec:optpricefwdsmile}.

\begin{lemma} \label{Lemma:fourierasympsmalltime}
The following expansion holds for all $k\in\mathbb{R}^*$ as $\tau$ tends to zero:
\begin{align*}
&\mathbb{E}^{\mathbb{Q}_{\tau,k}}\left[\E^{-u^*_{\tau}Z_{\tau,k}/\tau^{3/8}}\left(\E^{Z_{\tau,k}\tau^{1/8}}-1\right)^+\right]
\ind_{\{k>0\}}
+
\mathbb{E}^{\mathbb{Q}_{\tau,k}}\left[\E^{-u^*_{\tau}Z_{\tau,k}/\tau^{3/8}}\left(1-\E^{Z_{\tau,k}\tau^{1/8}}\right)^+\right]
\ind_{\{k<0\}} \\
&=
\frac{\tau^{7/8}\beta_t}{\zeta(k) \sqrt{2\pi}}\left(1+p_1(k)\tau^{1/4}+o\left(\tau^{1/4}\right)\right),
\end{align*}
where $\zeta$ is defined in~\eqref{eq:sigmarsmalltime}, $p_1$ in~\eqref{eq:zpsmalltime} and $\beta_t$ in~\eqref{eq:DGammaBeta}.
\end{lemma}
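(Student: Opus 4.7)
The plan is to reduce the expectation to a one-dimensional integral against the density $f_{\tau,k}$ of $Z_{\tau,k}$ under $\mathbb{Q}_{k,\tau}$ (obtained from $\Phi_{\tau,k}$ by Fourier inversion), then to rescale and Taylor expand. I focus on $k>0$; the case $k<0$ is symmetric under the substitution $z\mapsto -z$, since $u^*_\tau(k)<0$ preserves the sign of the exponential damping. Writing
\begin{equation*}
I_\tau(k) := \int_0^\infty \E^{-u^*_\tau(k) z/\tau^{3/8}}\bigl(\E^{z\tau^{1/8}}-1\bigr)f_{\tau,k}(z)\,\D z,
\end{equation*}
the substitution $z=\tau^{3/8}y$ turns this into
\begin{equation*}
I_\tau(k) = \tau^{3/8}\int_0^\infty \E^{-u^*_\tau(k) y}\bigl(\E^{y\tau^{1/2}}-1\bigr) f_{\tau,k}\bigl(\tau^{3/8}y\bigr)\,\D y,
\end{equation*}
and the factor $\E^{-u^*_\tau(k)y}$ is a damping uniform in $\tau$ because $u^*_\tau(k)\to a_0(k)=1/\sqrt{\beta_t}>0$ by Lemma~\ref{lemma:u^*tausmalltime}, localising the effective integration range to bounded $y$.

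The asymptotic expansion then follows from three expansions. First, $\E^{y\tau^{1/2}}-1=y\tau^{1/2}+y^2\tau/2+\mathcal{O}(\tau^{3/2})$. Second, inserting the expansion of $\Phi_{\tau,k}$ from Lemma~\ref{Lemma:charactfuncasympsmalltime} into $f_{\tau,k}(0)=\frac{1}{2\pi}\int_\RR\Phi_{\tau,k}(u)\,\D u$ and using the even Gaussian moments $\int_\RR u^{2n}\E^{-\zeta^2(k) u^2/2}\,\D u = (2n-1)!!\sqrt{2\pi}/\zeta(k)^{2n+1}$ gives
\begin{equation*}
f_{\tau,k}(0) = \frac{1}{\sqrt{2\pi}\,\zeta(k)}\left(1 + \tau^{1/4}\!\left(\frac{\phi_2^a(k)}{\zeta^2(k)} + \frac{3\phi_2^b(k)}{\zeta^4(k)} + \frac{15\phi_2^c(k)}{\zeta^6(k)}\right) + o(\tau^{1/4})\right),
\end{equation*}
the $\phi_1$ contribution vanishing because $\phi_1(k,\cdot)$ is odd in $u$. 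Third, Lemma~\ref{Lemma:ustartausmalltime} combined with $-2a_1(k)/a_0(k)=e_0(k)$ from~\eqref{eq:esmalltime} yields
\begin{equation*}
\int_0^\infty y\,\E^{-u^*_\tau(k)y}\,\D y = \frac{1}{u^*_\tau(k)^2} = \beta_t\bigl(1+e_0(k)\tau^{1/4}+o(\tau^{1/4})\bigr).
\end{equation*}
The leading contribution to $I_\tau(k)$ is thus $\tau^{3/8}\cdot\tau^{1/2}\cdot f_{\tau,k}(0)/u^*_\tau(k)^2=\tau^{7/8}\beta_t/(\zeta(k)\sqrt{2\pi})$, and the sum of the two $\tau^{1/4}$-corrections above is precisely $p_1(k)\tau^{1/4}$ by~\eqref{eq:zpsmalltime}. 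The remaining candidate corrections---$y^2\tau/2$ from the exponential expansion, $\tau^{3/8}y f_{\tau,k}'(0)$ from the Taylor expansion of $f_{\tau,k}$ (where $f_{\tau,k}'(0)=\mathcal{O}(\tau^{1/8})$ again by oddness of $\phi_1$), and cross-products of subleading corrections---each enter at relative order $\tau^{1/2}$ and are absorbed in $o(\tau^{1/4})$.

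The main obstacle is twofold. On the algebraic side, one must verify that the two $\tau^{1/4}$-contributions above assemble into exactly $p_1(k)$; this is transparent from~\eqref{eq:zpsmalltime} once the Gaussian moments are evaluated, but it reveals that the definitions of $\phi_2^a,\phi_2^b,\phi_2^c$ in~\eqref{eq:phi2smalltime} have been designed to make this identification work. On the analytic side, the termwise manipulations require justification: I would split $[0,\infty)$ at the truncation level $y=\tau^{-\varepsilon}$ for a small $\varepsilon>0$, apply Taylor with remainder on $[0,\tau^{-\varepsilon}]$ where $\tau^{3/8}y\to 0$ uniformly, and control the tail $y>\tau^{-\varepsilon}$ via the damping $\E^{-u^*_\tau(k) y}$ together with a uniform $L^\infty$ bound on $f_{\tau,k}$ inherited from the integrability of $\Phi_{\tau,k}$ and its Gaussian envelope in Lemma~\ref{Lemma:charactfuncasympsmalltime}; the resulting tail contribution is $\mathcal{O}(\E^{-c\tau^{-\varepsilon}})$ and negligible at any polynomial scale.
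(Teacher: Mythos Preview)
Your approach is correct and arrives at the same answer, but it is genuinely dual to the paper's. The paper computes the Fourier transform of the damped payoff explicitly,
\[
\int_0^\infty \E^{-u^*_\tau z/\tau^{3/8}}\bigl(\E^{z\tau^{1/8}}-1\bigr)\E^{\I u z}\,\D z
=\frac{\tau^{7/8}}{(u^*_\tau-\I\tau^{3/8}u)(u^*_\tau-\sqrt{\tau}-\I\tau^{3/8}u)},
\]
and then invokes Parseval's theorem to write the expectation as a single $u$-integral of this against $\Phi_{\tau,k}(u)$; both factors are expanded in $\tau$ and the Gaussian moments evaluated directly. You instead invert $\Phi_{\tau,k}$ to the density $f_{\tau,k}$, rescale so that the density is probed only near the origin, and Taylor-expand there. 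The two computations match term by term: your $f_{\tau,k}(0)=\frac{1}{2\pi}\int\Phi_{\tau,k}(u)\,\D u$ is exactly the paper's characteristic-function integral, and your Laplace integral $\int_0^\infty y\,\E^{-u^*_\tau y}\,\D y=1/(u^*_\tau)^2=\beta_t(1+e_0\tau^{1/4}+\cdots)$ is the leading expansion of the paper's payoff transform $\tau^{7/8}/[(u^*_\tau)^2+2a_0a_1\tau^{1/4}+\cdots]=\tau^{7/8}\beta_t(1-2a_1/a_0\,\tau^{1/4}+\cdots)$, since $-2a_1/a_0=e_0$.

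What each buys: the paper's Parseval route keeps everything inside one $u$-integral against the Gaussian envelope, so the termwise expansion is justified in one stroke by dominated convergence and no separate truncation in physical space is needed. Your route makes the decomposition $p_1(k)=e_0(k)+\bigl(\phi_2^a/\zeta^2+3\phi_2^b/\zeta^4+15\phi_2^c/\zeta^6\bigr)$ more transparent---the $e_0$ piece visibly comes from the drift of $u^*_\tau$ and the bracketed piece from the curvature of the $\mathbb{Q}_{k,\tau}$-density---at the cost of the extra truncation/tail argument you outline at the end.
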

\begin{proof}
We first consider $k>0$ and drop the $k$-dependence for the functions below. 
We denote the Fourier transform $\mathcal{F}$ by
$
(\mathcal{F}f)(u):=\int_{-\infty}^{\infty}\E^{\I u x}f(x) \D x, 
$
for all $f\in L^2$, $u\in\mathbb{R}$.
The Fourier transform of the payoff
$\E^{-u^*_{\tau}Z_{\tau,k}/\tau^{3/8}}\left(\E^{Z_{\tau,k}\tau^{1/8}}-1\right)^+$
is given by
\begin{align*}
\int_0^{\infty}\E^{-u^*_{\tau}z/\tau^{3/8}}\left(\E^{z\tau^{1/8}}-1\right)\E^{\I u z}\D z
&=\left[\frac{\E^{z\left(\I u-u^*_{\tau}/\tau^{3/8}+\tau^{1/8}\right)}}
{\left(\I u-u^*_{\tau}/\tau^{3/8}+\tau^{1/8}\right)}\right]_{0}^{\infty}
-\left[\frac{\E^{z\left(\I u-u^*_{\tau}/\tau^{3/8}\right)}}
{\left(\I u-u^*_{\tau}/\tau^{3/8}\right)}\right]_{0}^{\infty} \\
&=\frac{1}{\left(\I u-u^*_{\tau}/\tau^{3/8}\right)}-\frac{1}{\left(\I u-u^*_{\tau}/\tau^{3/8}+\tau^{1/8}\right)} \\ 
&=\frac{\tau ^{7/8}}{\left(u_{\tau }^*-\I \tau ^{3/8} u\right) \left(u_{\tau }^*-\sqrt{\tau }-\I \tau ^{3/8} u\right)},
\end{align*}
if $u^*_{\tau}>\max(\tau^{1/2},0)=\tau^{1/2}$, which holds for $\tau$ small enough since 
$u^*_{\tau}$ converges to $a_0>0$ by Lemma~\ref{Lemma:ustartausmalltime}.
Due to Remark~\ref{rem:weakconsmalltau} $Z_{\tau}$ converges weakly to a Gaussian random variable and since
the Gaussian density and the modified payoff are in $L^2$ we can use Parseval's Theorem~\cite[Page 48,Theorem 13E]{G70} for small enough $\tau$ to write
\begin{equation}\label{eq:parsevalnonsteepsmalltime}
\mathbb{E}^{\mathbb{Q}_{\tau,k}}\left(\E^{-u^*_{\tau}Z_{\tau,k}/\tau^{3/8}}\left(\E^{Z_{\tau,k}\tau^{1/8}}-1\right)^+\right)
=\frac{1}{2\pi}\int_{-\infty}^{\infty}\frac{\tau ^{7/8}\Phi_{\tau,k}(u)}{\left(u_{\tau }^*+\I \tau ^{3/8} u\right) \left(u_{\tau }^*-\sqrt{\tau }+\I \tau ^{3/8} u\right)} \D u,
\end{equation}
where we have used that 
\begin{align*}
\overline{\frac{\tau ^{7/8}}{\left(u_{\tau }^*-\I \tau ^{3/8} u\right) \left(u_{\tau }^*-\sqrt{\tau }-\I \tau ^{3/8} u\right)}}=\frac{\tau ^{7/8}}{\left(u_{\tau }^*+\I \tau ^{3/8} u\right) \left(u_{\tau }^*-\sqrt{\tau }+\I \tau ^{3/8} u\right)},
\end{align*}
with $\overline{a}$ denoting the complex conjugate for $a\in\mathbb{C}$. 
Using the asymptotics of $u^*_{\tau}$ given in Lemma~\ref{Lemma:ustartausmalltime} we can Taylor expand for small $\tau$ to find that
\begin{equation}\label{eq:conjasympsmalltime}
\frac{\tau ^{7/8}}{\left(u_{\tau }^*+\I \tau ^{3/8} u\right) \left(u_{\tau }^*-\sqrt{\tau }+\I \tau ^{3/8} u\right)}
=\frac{\tau^{7/8}}{a_0^2+2a_0a_1\tau^{1/4} +\mathcal{O}(\tau^{3/8})}
=\frac{\tau^{7/8}}{a_0^2}\left(1-\frac{2 a_1}{a_0}\tau^{1/4}+\mathcal{O}\left(\tau^{3/8}\right)\right).
\end{equation}
Finally combining~\eqref{eq:conjasympsmalltime} and the asymptotics of the characteristic function derived in  Lemma~\ref{Lemma:charactfuncasympsmalltime} with~\eqref{eq:parsevalnonsteepsmalltime} we find that for small $\tau$
\begin{align}
\nonumber &\frac{1}{2\pi}\int_{-\infty}^{\infty} \frac{\tau ^{7/8}\Phi_{\tau,k}(u)}{\left(u_{\tau }^*+\I \tau ^{3/8} u\right) \left(u_{\tau }^*-\sqrt{\tau }+\I \tau ^{3/8} u\right)} \D u \\ \nonumber
&=\frac{\tau^{7/8}}{a_0^2 2\pi}\int_{-\infty}^{\infty}\E^{-\frac{\zeta^2u^2}{2}}\left(1+\phi_1(u,k)\tau^{1/8}+\left(\phi_2(u,k)-\frac{2a_1}{a_0}\right)\tau^{1/4}+\mathcal{O}\left(\tau^{3/8}\right)\right)\D u \\ \nonumber
&=\frac{\tau^{7/8}}{a_0^2 2\pi} \int_{-\infty}^{\infty}\E^{-\frac{\zeta^2u^2}{2}}\left(1+\left(u^2\phi_2^a+u^4\phi_2^b+u^6\phi_2^c-\frac{2a_1}{a_0}\right)\tau^{1/4}+\mathcal{O}\left(\tau^{3/8}\right)\right) \D u,
\end{align}
where in the last line we have used that
$
 \int_{-\infty}^{\infty}\E^{-\frac{\zeta^2u^2}{2}}\phi_1(u,k) \D u =0,
$
since $\phi_1$ is an odd power of $u$. 
The result then follows by using simple moment formulae of the normal distribution. 
Fix now $k<0$. The Fourier transform of the payoff
$\E^{-u^*_{\tau}Z_{\tau,k}/\tau^{3/8}}\left(1-\E^{Z_{\tau,k}\tau^{1/8}}\right)^+$
is given by
\begin{align*}
\int_{-\infty}^0\E^{-u^*_{\tau}z/\tau^{3/8}}\left(1-\E^{z\tau^{1/8}}\right)\E^{\I u z}\D z
&=\left[\frac{\E^{z\left(\I u-u^*_{\tau}/\tau^{3/8}\right)}}
{\left(\I u-u^*_{\tau}/\tau^{3/8}\right)}\right]_{-\infty}^{0}-\left[\frac{\E^{z\left(\I u-u^*_{\tau}/\tau^{3/8}+\tau^{1/8}\right)}}
{\left(\I u-u^*_{\tau}/\tau^{3/8}+\tau^{1/8}\right)}\right]_{-\infty}^{0} \\
&=\frac{1}{\left(\I u-u^*_{\tau}/\tau^{3/8}\right)}-\frac{1}{\left(\I u-u^*_{\tau}/\tau^{3/8}+\tau^{1/8}\right)} \\ 
&=\frac{\tau ^{7/8}}{\left(u_{\tau }^*-\I \tau ^{3/8} u\right) \left(u_{\tau }^*-\sqrt{\tau }-\I \tau ^{3/8} u\right)},
\end{align*}
if $u^*_{\tau}<\min(\tau^{1/2},0)=0$, which holds for $\tau$ small enough since 
$u^*_{\tau}$ converges to $a_0<0$ by Lemma~\ref{Lemma:ustartausmalltime}.
The rest of the proof is analogous to $k<0$ above and we omit it for brevity.
\end{proof}

\begin{remark}
We have chosen to specify the remainder in the form $o(1/{\tau^{1/4}})$ instead of $\mathcal{O}(1/\tau^{3/8})$ since it can actually be shown that the term $\mathcal{O}(1/\tau^{3/8})$ is zero by extending the results in Lemma~\ref{Lemma:charactfuncasympsmalltime} and the next non-trivial term is $\mathcal{O}(1/\tau^{1/2})$. 
For brevity we omit this analysis.
\end{remark}

\subsection{Option price and forward smile asymptotics}\label{sec:optpricefwdsmile}

In this section we finally put all the pieces together from Sections~\ref{sec:hestfwdtime} - \ref{sec: Charactfuncasymp} and prove Theorems~\ref{theorem:fwdstartoptionssmalltime} and~\ref{theorem:fwdsmilesmalltime}.

\begin{proof}[Proof of Theorem~\ref{theorem:fwdstartoptionssmalltime}]
We use the time-dependent change of measure defined in~\eqref{eq:MeasureChangeSmallTime} 
to write forward-start call option prices for all $k>0$ as
\begin{eqnarray}
\nonumber\mathbb{E}\left[\left(\E^{X^{( t)}_{\tau}}-\E^k\right)^+\right] &=&\E^{\Lambda^{(t)}_{\tau}\left(u^*_{\tau}\right)/\sqrt{\tau}}\mathbb{E}^{\mathbb{Q}_{k,\tau}}\left[\E^{- u^*_{\tau}X_{\tau}^{(t)}/\sqrt{\tau}}\left(\E^{X^{( t)}_{\tau}}-\E^k\right)^+\right] \\ \nonumber
&=&\E^{-\frac{ku^*_{\tau}-\Lambda^{(t)}_{\tau}\left(u^*_{\tau}\right)}{\sqrt{\tau}}}\mathbb{E}^{\mathbb{Q}_{k,\tau}}\left[\E^{-\frac{u^*_{\tau}}{\sqrt{\tau}}\left(X_{\tau}^{(t)}- k\right)}\left(\E^{X^{( t)}_{\tau}}-\E^k\right)^+\right] \\ \nonumber
&=&\E^{-\frac{ku^*_{\tau}-\Lambda^{(t)}_{\tau}\left(u^*_{\tau}\right)}{\sqrt{\tau}}}\E^{k}\mathbb{E}^{\mathbb{Q}_{k,\tau}}\left[\E^{-\frac{u^*_{\tau}Z_{\tau,k}}{\tau^{3/8}}}\left(\E^{Z_{\tau,k}\tau^{1/8}}-1\right)^+\right],
\end{eqnarray}
with $Z_{\tau,k}$ defined on page~\pageref{eq:DefZSmallTime}. 
A similar result holds for forward-start put option prices for all $k<0$.
The theorem then follows  by applying Lemma~\ref{Lemma:U*Asymptotics} and Lemma~\ref{Lemma:fourierasympsmalltime}
and using put-call parity since in the Heston model $(\E^{X_t})_{t\geq0}$ is a true martingale~\cite[Proposition 2.5]{AP07}. 
\end{proof}

\begin{proof}[Proof of Theorem~\ref{theorem:fwdsmilesmalltime}]
The general machinery to translate option price asymptotics into implied volatility asymptotics
has been fully developed by Gao and Lee~\cite{GL11}. 
We simply outline the main steps here.
Assume the following ansatz for the forward implied volatility as $\tau$ tends to zero:
$$
\sigma_{t,\tau}^2(k)=\frac{v_0(k,t)}{\sqrt{\tau }}+\frac{v_1(k,t)}{\tau^{1/4}}+v_2(k,t)+v_3(k,t)\tau^{1/4}+o(\tau^{1/4}).
$$
Substituting this ansatz into the BSM asymptotics in Lemma~\ref{lemma:BSAsympSmallTau} we then obtain
$$
\exp \left(-\frac{k^2}{2 \sqrt{\tau } v_0}+\frac{k^2 v_1}{2 \tau^{1/4} v_0^2}-\frac{k^2 \left(v_1^2-v_0 v_2\right)}{2 v_0^3}+\frac{k}{2}\right)
\frac{\tau ^{3/4} v_0^{3/2}}{\sqrt{2 \pi } k^2}
\left[1+\left(\frac{k^2 \left(v_1^3-2 v_0 v_1 v_2+v_0^2 v_3\right)}{2 v_0^4}+\frac{3 v_1}{2 v_0}\right)\tau^{1/4}+o(\tau^{1/4})\right].
$$
Equating orders with Theorem~\ref{theorem:fwdstartoptionssmalltime} we solve for $v_0$ and $v_1$, but we can only solve for higher order terms if $\tau^{3/4}=\tau^{(7/8-\theta\kappa/(2\xi^2))}$ or $4\kappa\theta=\xi^2$.
\end{proof}

%%%%%%%%%%%%%%%%%%%%%%%%%%%%%%%%%%%%%%%%%%
%%%%%%%%%%%%%%%%%%%%%%%%%%%%%%%%%%%%%%%%%%

\end{document}